\documentclass{article}
\usepackage{float}
\usepackage[margin=1in]{geometry}
\usepackage{titlesec}
\titleformat{\chapter}[display]
{\normalfont\huge\bfseries}{}{0pt}{\Huge}
\usepackage{graphicx}
\usepackage[utf8]{inputenc}
\usepackage{algorithm}
\usepackage{algpseudocode}
\usepackage{amscd} 
\usepackage{amsmath} 
\usepackage{amssymb} 
\usepackage{amsthm} 
\usepackage[style=numeric-comp,sorting=nyt,sortcites=true]{biblatex}

\usepackage{comment}
\usepackage{appendix}
\usepackage{amsthm} 
\usepackage{dsfont}
\usepackage{graphicx}
\usepackage{subcaption}
\usepackage{xcolor}
\usepackage{soul}
\usepackage{wrapfig}
\usepackage{caption}
\usepackage{lipsum}
\usepackage{float}

\usepackage{algorithm}
\usepackage{algpseudocode}
\usepackage{enumitem}

\newtheorem{lemma}{Lemma}
\newtheorem{proposition}{Proposition}
\newtheorem{theorem}{Theorem}
\newtheorem{remark}{Remark}

\newtheorem{prop}{Property}

\theoremstyle{definition}
\newtheorem{definition}{Definition}
\usepackage[skip=5pt plus1pt, indent=0pt]{parskip}

\newcommand{\OO}{\mathcal{O}}
\DeclareMathOperator{\E}{\mathbb E}

\DeclareMathOperator{\T}{\mathcal T}
\DeclareMathOperator{\R}{\mathds R}

\DeclareMathOperator{\WT}{\mathcal W_{\mathcal T}}

\newcommand{\RNum}[1]{\uppercase\expandafter{\romannumeral #1\relax}}

\DeclareMathOperator*{\argmin}{arg\,min}
\newcommand{\bl}[1]{\textcolor{blue}{#1}}

\usepackage{authblk}

\title{SDSR: A Spectral Divide-and-Conquer Approach for Species Tree Reconstruction}
\date{}
\author[1]{Ortal Reshef }
\author[3]{Ofer Glassman}
\author[1]{Or Zuk}
\author[2]{Yariv Aizenbud}
\author[3]{Boaz Nadler}
\author[1]{Ariel Jaffe}

\affil[1]{\small Department of Statistics and Data Science, Hebrew University of Jerusalem}
\affil[2]{\small Department of Applied Mathematics, Tel Aviv University}
\affil[3]{Department of Computer Science and Applied Mathematics, Weizmann Institute of Science}
\begin{document}
\maketitle

\begin{abstract}
Recovering a tree 
that represents the evolutionary history of a group of species is a key task in phylogenetics.
Performing this task using sequence data from multiple genetic markers poses two key challenges. The first is the 
discordance between
the evolutionary history of individual genes
and that of the species. 
The second challenge is computational, as 
contemporary studies involve thousands of species. 
Here we present \texttt{SDSR}, a scalable divide-and-conquer approach for species tree reconstruction based on spectral graph theory.
The algorithm recursively partitions the species into subsets until their sizes are below a given threshold. 
The trees of these subsets are reconstructed by a user-chosen
species tree algorithm. Finally, these subtrees are merged
to form the full tree. 
On the theoretical front, 
we derive recovery guarantees for \texttt{SDSR},
under the multispecies coalescent (MSC) model. 
We also perform a runtime complexity analysis. We show that \texttt{SDSR}, when combined with a species tree reconstruction algorithm as a subroutine, yields substantial runtime savings as compared to applying the same algorithm on the full data. 
Empirically, we evaluate \texttt{SDSR} on synthetic benchmark datasets with incomplete lineage sorting and horizontal gene transfer.
In accordance with our theoretical analysis, the simulations show that combining \texttt{SDSR} with common species tree
methods, such as \texttt{CA-ML} or
\texttt{ASTRAL}, yields up to 10-fold faster runtimes.
In addition, \texttt{SDSR} achieves a comparable tree reconstruction accuracy to that obtained by applying these methods on the full data. 

\end{abstract}

\section{Introduction}

Species tree reconstruction from genetic data is crucial for addressing many evolutionary questions, such as diversification events, branching structures, and shared ancestry among organisms \cite{YangRannala2012, wu2025treehub}. 
Historically, species trees were inferred using genetic variations at a single marker. 
Common approaches for tree recovery include distance-based methods such as \texttt{NJ} and \texttt{UPGMA} \cite{saitou1987neighbor, rr1958statiscal}, maximum likelihood (ML) \cite{felsenstein1981evolutionary, stamatakis2014raxml, nguyen2015iq, price2009fasttree}, and Bayesian methods \cite{rannala1996probability, yang1997bayesian, drummond2007beast}.

In recent years, advances in data acquisition
have led to species tree reconstruction using sequences from multiple genetic markers
\cite{liu2011estimating, allman2016species, vachaspati2015astrid, zhang2018astral}.
Species tree reconstruction from such data
introduces two key challenges. 
The first is the discordance between the evolution of the species and of their genes, resulting in gene trees that may be different from each other, and from the species tree. This discordance arises from several biological phenomena \cite{cotton2005rates,
soucy2015horizontal,
barton2001role, paquola2018}, with
horizontal gene transfer (HGT) and incomplete lineage sorting (ILS)
considered as two of the most significant contributors
\cite{tandy1, tan2023phylogenomics, maddison2006inferring},
see Fig.
\ref{fig:ils_hgt} for an illustration. 
A second challenge is scalability, as contemporary phylogenetic studies often analyze tens of thousands of species and hundreds of genes \cite{baker2022comprehensive, one2019one, zhu2019phylogenomics}.

\begin{figure}[t]
    \centering   \fbox{\includegraphics[width=0.85\textwidth]{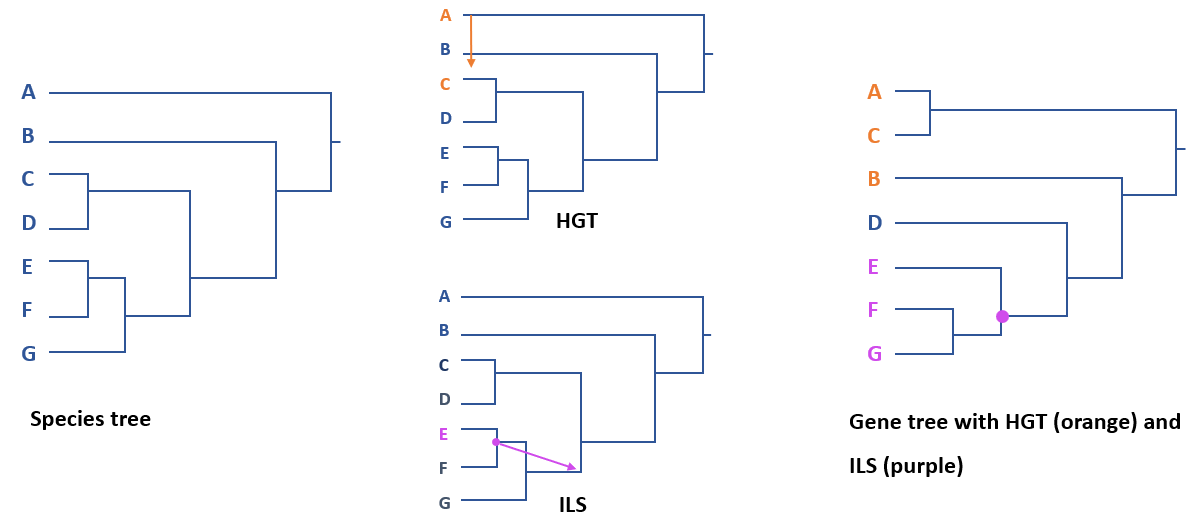}}
    \caption{ 
    Discordance
    between a species tree and a gene tree due to one HGT event (orange) and one ILS event
    (purple). In HGT, a  segment of genetic material is transferred horizontally from a donor species $A$, to a receptor species $C$ \cite{hall2017sampling, brito2021examining}.
    In ILS, the divergence event associated with species $E$ occur at an earlier time in the gene tree than in the species tree \cite{degnan2009gene}.
    }
    \label{fig:ils_hgt}
\end{figure}

Several approaches were proposed for species tree reconstruction from sequences at multiple markers. Perhaps one of the simplest ones is 
Concatenation Analysis using Maximum Likelihood (\texttt{CA-ML}). In this approach, the gene sequences of each species are concatenated,  
and the species tree is recovered by maximum likelihood. 
Hence, \texttt{CA-ML} ignores the discordance between gene and species trees. 
Indeed, \texttt{CA-ML} has been shown to be statistically inconsistent under the multispecies coalescent (MSC) model 
\cite{kubatko2007inconsistency,roch2015likelihood}.

A different approach to species tree reconstruction is to first estimate all the individual gene trees by some tree reconstruction method. Then, 
the species tree is inferred by summarizing information from all gene trees.
Methods applying this approach are known as 
{\em summary methods}. Examples include 
\texttt{NJst} \cite{liu2011estimating}, 
\texttt{STAR} \cite{liu2009estimating}, 
\texttt{ASTRID} \cite{vachaspati2015astrid}, 
and 
\texttt{ASTRAL} \cite{mirarab2014astral,mirarab2015astral,zhang2018astral}. 
These summary methods were proven to be consistent under the 
MSC model, assuming the input gene trees are estimated correctly \cite{allman2016species, liu2009estimating, vachaspati2015astrid, mirarab2014astral}.
However, errors in the inferred gene trees may have a substantial impact on the accuracy of the reconstructed species tree 
see 
\cite{roch2015robustness} and \cite[Ch. 10]{warnow2018computational}.
On the computational front, summary methods may be slow on large datasets, as they first reconstruct all the individual gene trees.

To handle datasets with thousands of species, several divide-and-conquer algorithms
were developed. 
These methods break down the species tree reconstruction into smaller, more manageable tasks. 
One strategy
is to randomly divide the species into overlapping subsets and construct a subtree for each subset using an existing species tree inference method. The resulting subtrees are then merged into a species tree using a supertree algorithm \cite{huson1999disk, huson1999solving, nelesen2012dactal}.
This merging operation involves 
NP-hard optimization problems \cite{warnow2019divide}, which pose
limitations on the accuracy and scalability
of supertree algorithms. 
\texttt{TreeMerge} \cite{molloy2018njmerge, molloy2019treemerge} partitions the species into disjoint subsets.
Then, it reconstructs a subtree for each and merges them using a species dissimilarity matrix.
\texttt{uDance} \cite{balaban2024generation} offers an alternative divide-and-conquer strategy.
It recovers a backbone tree from a subset of species and assigns each of the remaining species to an edge in the backbone tree. Then, the algorithm reconstructs a subtree for every group of species assigned to the same edge.
Empirically, on simulated datasets with a large number of genes and species, \texttt{uDance} outputs quite accurate species trees. 
However, \texttt{uDance} has not been proven to be statistically consistent under the MSC or other models. 

In this paper we present
\texttt{SDSR}, a recursive divide-and-conquer method
to recover a species tree from multiple gene sequences. 
At each step of the recursion, 
if the input set of species is sufficiently small, its corresponding
species subtree is reconstructed by a user-chosen species tree method.
Otherwise, the set of species is partitioned into two disjoint subsets. This is performed by a spectral approach, based on 
the Fiedler eigenvector of the average of graph Laplacian matrices for the different genes.  
Finally, the small trees are merged to construct the full species tree.
The partitioning step builds upon and extends \cite{aizenbud2021spectral}, which was designed to reconstruct a tree from a single gene alignment. 

With \texttt{SDSR}, we make the following contributions:
\begin{itemize}[leftmargin=*]
    \item 
    We prove that under the MSC model, 
    in the limit of an infinite number of genes, 
    \texttt{SDSR} is asymptotically consistent. 
    Specifically, the species in each partition belong
    to disjoint clans in the species tree. In addition, we derive finite-sample guarantees on the number of genes required for accurate partitions. Our analysis assumes that the Laplacian matrices of individual genes are perfectly known. This is analogous to the assumption made in the analysis of summary methods, that their gene trees are inferred without errors. 
    \item The \texttt{SDSR} partitions are deterministic, which yields a simple and straightforward way to merge the subtrees into the full species tree. Moreover, if \texttt{SDSR} partitioned correctly the species and the subtrees were recovered correctly, the merging step is provably correct, so the output will be the exact species tree. Importantly,  \texttt{SDSR} merging step does not require solving an  NP-hard problem, unlike previous divide-and-conquer methods.
    \item \texttt{SDSR} combined with common species tree reconstruction algorithms, such as \texttt{CA-ML} or
    \texttt{ASTRAL} yields significant speedups compared to applying the base algorithm to reconstruct the full tree. For example, reconstructing a 200-species tree from 100 genes
    with \texttt{SDSR + CA-ML} has an $8$-times faster runtime
    on a single CPU, while achieving accuracy similar to \texttt{CA-ML} on the full data. 
    In addition, \texttt{SDSR} can be trivially parallelized. 
\end{itemize}

The rest of the paper is organized as follows.
In Section \ref{sec:background}, we introduce the problem setting and several definitions. 
The \texttt{SDSR} algorithm is described in Section \ref{sec:method}.
Its computational complexity is analyzed in Section \ref{sec:complexity}. 
Theoretical guarantees under the MSC model are provided in Section \ref{sec:analysis}.
In Section \ref{sec:experiments}, we 
present an empirical evaluation 
of \texttt{SDSR}, in terms of accuracy and runtime, and compare it to several competing species tree methods.
We conclude with a summary and discussion in Section \ref{sec:discussion}. 

\section{Problem Setting}\label{sec:background}

Let $\T = (V, E)$ denote an unrooted binary species tree with $m$ terminal and $m-2$ non-terminal nodes. Terminal nodes correspond to present-day species, whereas non-terminal nodes to ancestor species. 
We denote the indices of the terminal nodes by $[m] \equiv \{1,2,\ldots,m\}$ and the non-terminal nodes by $\{m+1,\ldots,2m-2\}$. 

The observed data for each current species at a specific gene is a sequence of symbols from a set $\{\alpha_1,\ldots,\alpha_\ell\}$. In the common DNA setting, $\ell = 4$ and the set is $\{A,C,G,T\}$ which are the four 
nucleotide bases.
In the problem at hand, we are given a set of $K$ gene alignments, denoted $\{X^g\}_{g=1}^K$. A single gene alignment $X^g \in \{\alpha_1,\ldots,\alpha_\ell\}^{m \times {n_g}}$ contains $m$ sequences of length $n_g$, where each sequence corresponds to a single species. 
We denote by $X_i^g = (X_{i1}^g,\ldots,X_{in_g}^g)$  the sequence that corresponds to species $i$ in the $g$-th gene. 
The goal is to reconstruct the species tree from these $K$ gene alignments. 

In this work, we present a spectral divide-and-conquer approach for species tree reconstruction. Our algorithm is based on a weighted graph, where the nodes correspond to the species, and the weights 
measure a specific similarity between the species. As detailed below this similarity is computed using all $K$  gene alignments. 
To this end, we introduce several notations. 

\paragraph{Similarity measure.}
For each gene $g$ with gene alignment matrix 
$X^g$,
we compute a measure of evolutionary distance $\hat D^g(i,j)$ between all pairs of observed species $(i,j)$. For example, a common measure is the log-determinant distance, which estimates the determinants of the transition matrices between $X_i^{g}$ and $X_j^{g}$  \cite{barry1987asynchronous}. Other measures take into account varying mutation rates along the sequence \cite{yang1994maximum}. 
Next, we compute the gene $m\times m$ similarity matrix between all $m$ species, via 
\begin{equation}\label{eq:similarity}
\hat S^g(i,j) = \exp\big(-\hat D^g(i,j)\big), \qquad \forall i,j \in [m]. 
\end{equation}

\paragraph{Graph Laplacian and Fiedler vector.}
The above similarity matrix $\hat S^g$ can be viewed as a weight matrix of an undirected graph $G^{g}$, whose nodes are the $m$ observed species. 
Let $A^{g}$ denote the diagonal matrix with entries $A^{g}(i,i) = \sum_{j=1}^m \hat S^g(i,j)$.
The normalized Laplacian $L^{g}$ of the similarity graph is defined by,
\begin{equation}\label{eq:laplacians}
 L^{g} = I - (A^{g})^{-0.5} \hat S^g (A^{g})^{-0.5}. %\qquad L_s = I - D^{-0.5}WD^{-0.5}
\end{equation}
For any connected graph, the Laplacian matrix is positive semi-definite with a zero eigenvalue with multiplicity one. The Fiedler vector is the eigenvector of $L$ that corresponds to its smallest non-zero eigenvalue.

Finally, to 
motivate our approach,  we recall some standard definitions for trees.

\paragraph{A clan and its root.}  
A subset of nodes $C$ in a tree $\T$ is a {\em clan} if it can be separated from the rest of the tree by removing a single edge $e$.  
The node $h_C \in C$ that is at one end of the edge $e$ is defined as the root of the smallest subtree that includes $C$.
In our work, we often refer to a clan by its terminal nodes. For example, for the species tree in Figure \ref{fig:ils_hgt}, the sets of nodes $(E, F)$ and $(E, F, G)$ form clans in $\T$. 

\section{The \texttt{SDSR} algorithm}\label{sec:method}

In this section we present our divide-and-conquer approach for recovering  species trees. 
The input to \texttt{SDSR} consists of a set of $K$ gene alignments $\{X^{g}\}_{g=1}^K$, a threshold $\tau$, and a user-chosen 
species tree reconstruction
method to recover the subtrees, which we refer to as a subroutine. The algorithm is recursive, and consists of four basic steps: 
(i) Partition the terminal nodes into two disjoint subsets $C_1,C_2$. (ii) Add 
to $C_2$ an outgroup node $O_1\in C_1$ and vice versa. Let $\tilde C_1 = C_1 \cup \{O_2\}$ and $\tilde C_2 = C_2 \cup \{O_1\}$.
(iii) For each $i=1,2$, reconstruct the tree of $\tilde C_i$ as follows: if $|\tilde C_i|>\tau$, recursively apply  \texttt{SDSR}. 
Otherwise, recover its species tree by the subroutine. (iv) Merge the two subtrees computed in (iii). A pseudocode for \texttt{SDSR} is outlined in Alg. \ref{alg:SDSR}.
We next describe each of the above steps, see illustration in Figure \ref{fig:merging}.

\begin{algorithm}[t]
\caption{\texttt{SDSR}}\label{alg:cap}
\label{alg:SDSR}
\begin{algorithmic}%[1]
\Require Gene alignments $\{X^{g}\}_{g=1}^K$ for a set of species $C$, threshold $\tau$, tree reconstruction subroutine, minimum partition ratio $\beta$.
\Ensure Reconstructed species tree.
\State For every gene, compute an estimated distance matrix $\hat D^{g}$, its similarity matrix  $S^g$ via Eq. \eqref{eq:similarity}, and the Laplacian $L^{g}$ by Eq. \eqref{eq:laplacians}. Then, compute the average Laplacian $\bar L$ via Eq. \eqref{eq:average_laplacian}.
\State Compute a partition $C_1,C_2$ of the set of species $C$ by constrained $k$-means with minimum size $\beta |C|$ on the Fiedler vector of $\bar L$.
\State Select outgroups $O_1 \in C_1$ and $O_2 \in C_2$ and set $\tilde C_1 = C_1 \cup \{O_2\}$ and $\tilde C_2 = C_2 \cup \{O_1\}$.
\For{i = 1,2} 
\If{$|\tilde C_i|>\tau$}
    \State Apply \texttt{SDSR} to the subset of species $\tilde C_i$.    
\Else
    \State Recover subtree $\tilde \T_i$ by applying the tree reconstruction subroutine to species $\tilde C_i$.  
\EndIf
\EndFor
\State Let $h_1$ and $h_2$ be nodes in
$\tilde \T_1$ and $\tilde T_2$ connected to
$O_2$ and to $O_1$, respectively. 
\State Remove the outgroups $O_2,O_1$ from 
$\tilde \T_1$, $\tilde T_2$. 
\State Compute a merged tree $\hat\T$ by adding an edge between $h_1,h_2$ 
in the two subtrees $\tilde \T_1,\tilde \T_2$. 
\State \Return merged tree $\hat\T$.
\end{algorithmic}
\end{algorithm}

\paragraph{Step 1: Partitioning.}
For each gene $g$, let $\hat S^g(i,j)$ be the similarity between the two species $i$ and $j$, computed from their gene alignments via 
Eq. \eqref{eq:similarity}. 
We compute a gene-Laplacian matrix $L^{g}$ via Eq. \eqref{eq:laplacians}, and the averaged Laplacian, denoted $\bar L$, by
\begin{equation}\label{eq:average_laplacian}
\bar L = \frac1K \sum_{g=1}^{K} L^{g}. 
\end{equation}
We partition the species into subsets $C_1,C_2$ via their corresponding elements in the Fiedler vector of $\bar L$. 
In order to avoid extremely imbalanced partitions that do not reduce the computational complexity of the recovery problem, we apply constrained k-means \cite{bradley2000constrained} to the elements of the Fiedler vector. Specifically, for an input set of species $C$ we add a constraint that the smaller subset is at least $\beta |C|$, for some value $0<\beta<0.5$. 

\paragraph{Step 2: Adding outgroup nodes.} 
As described below in step 4, the merging step relies on \textit{outgroup rooting} \cite{grant2019outgroup, boykin2010comparison}, 
a common approach to find the root of unrooted trees. The main idea is to reconstruct the tree with an additional species, related to but outside of the species group of interest. 
This added species serves as a reference point for the root's location. 

In our method,  outgroups are used to merge the subtrees. To this end, we add an \textit{outgroup species} $O_2 \in C_2$ to the \textit{ingroup species} $C_1$, and vice versa. 
Let $\tilde{C}_1 = C_1 \cup \{O_2\}$ and $\tilde{C}_2 = C_2 \cup \{O_1\}$ be the  resulting subsets.  
Several works showed that the selection of an outgroup may significantly impact the accuracy of the estimated root. Specifically, the distance of the chosen outgroup to the ingroup species should not be too large or too small \cite{desalle2023multiple, bergsten2005review}. Here, we select the outgroup according to the following simple procedure. 
Let $v\in \R^m$ denote the Fiedler vector of $\bar L$ computed in Eq. \eqref{eq:average_laplacian}. We view distances in the Fiedler vector $|v_i-v_j|$ as indicative of distances between species $i$ and $j$.
We choose $O_2 \in C_2$ by looking for the species whose average distance to the species in $C_1$ is closest to the average of all distances from $C_2$ to $C_1$. Namely, 
for each node $j \in C_2$ define $d_j = \frac{1}{|C_1|}\sum_{i \in C_1}  |v_j - v_i|$ as the average distance from $j$ to the species in $C_1$ and by $D(C_1,C_2) = \frac{1}{|C_2|} \sum_{j \in C_2} d_j$.  
Let $\bar v^{(2)} = \frac{1}{|C_2|}\sum_{i \in C_2}^m v_i$ denote the average of the elements that correspond to $C_2$. By simple linear algebra
\[
j = \argmin_{j' \in C_2} |d_{j'} - D(C_1,C_2)| = \argmin_{j' \in C_2} |v_{j'}-\bar v^{(2)}|.
\]
The outgroup $O_1 \in C_1$ is selected in a similar way. 

\paragraph{Step 3: Reconstruction.}
For each of the two subgroups $\tilde C_1$
and $\tilde C_2$, 
if $|\tilde C_i|>\tau$, 
recursively call \texttt{SDSR} with input set
$\tilde C_i$ for further partitioning. 
Otherwise, apply a pre-selected reconstruction algorithm such as 
\texttt{ASTRAL} or \texttt{NJst}, to recover
the (unrooted) species tree of $\tilde C_i$. 

\paragraph{Step 4: Merging the two subtrees.} 
The last step of \texttt{SDSR} is to merge the unrooted subtrees $\tilde \T_1,\tilde \T_2$ into a single tree. If the  partitioning and recovery steps were accurate, then both subtrees, excluding their outgroups, form clans in the full tree $\T$. Thus, correct merging amounts to detecting for $\tilde \T_1$ and $\tilde \T_2$ the location of their respective root nodes. To this end, we use an outgroup approach, where the root is set to the node adjacent to the outgroup. 

Let $h_1,h_2$ be the nodes in $\tilde \T_1,\tilde \T_2$ connected to the outgroups ${O_2},{O_1}$, respectively. 
In the merging step we remove the outgroups ${O_2},{O_1}$ and their edges to $h_1$ and $h_2$ from $\tilde \T_1$ and $\tilde \T_2$,
and add an edge connecting $h_1$ to $h_2$. 

\begin{figure}[t]
 \centering \includegraphics[width=1.0\linewidth]{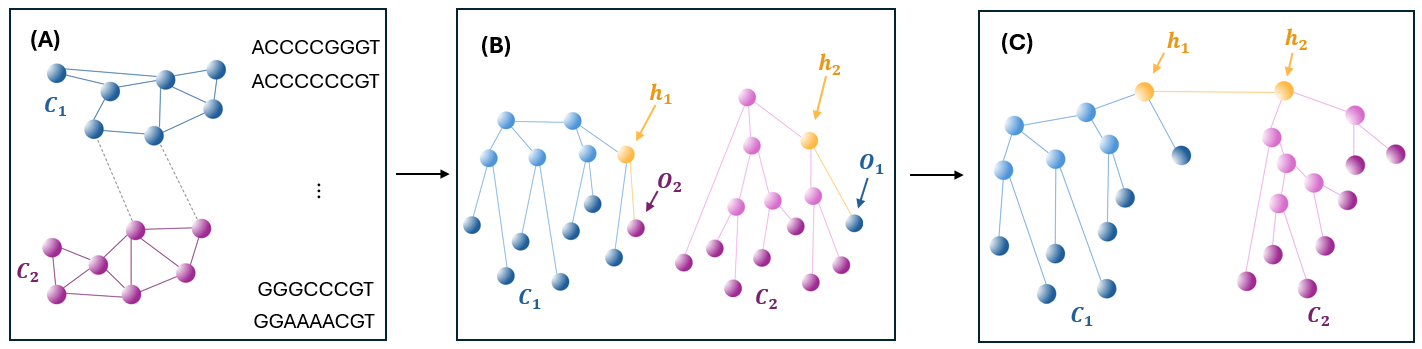}
 \caption{\texttt{SDSR} workflow: (A) Compute a graph based on the genes' similarity matrices, where nodes represent species. The nodes are colored according to the partitions $C_1, C_2$ from step 1. (B) Form subsets $\tilde C_1$ and $\tilde C_2$ by adding outgroups $O_2,O_1$ to $C_1$ and $C_2$, respectively, and reconstruct subtrees $\tilde \T_1$ and $\tilde \T_2$.
 (C) The roots of the two trees are set to $h_1$ and $h_2$, the nodes adjacent to the outgroups $O_2$ and $O_1$. The outgroup nodes are removed, and the subtrees are merged by connecting $h_1$ to $h_2$.}
 \label{fig:merging}
\end{figure}

\begin{remark}
While the recursive structure of \texttt{SDSR} is similar to \texttt{STDR} \cite{aizenbud2021spectral}, we incorporate significant changes to all of the steps in the algorithm.
(i) Unlike \texttt{STDR}, which reconstructs a tree from a single gene alignment, \texttt{SDSR} addresses the challenge of tree reconstruction by merging information from multiple genes. 
(ii) In \texttt{SDSR}, we partition the species by applying constrained $k$-means to the Fiedler vector. This procedure often yields more accurate partitions that those obtained with the sign pattern of the Fiedler vector. (iii) In \texttt{STDR}, the root was determined via a score function computed for all edges of the subtrees. The merging step in \texttt{SDSR} is based on an outgroup approach, which we found to be more accurate and requires fewer computations. 
\end{remark}

\section{Computational complexity of \texttt{SDSR}}\label{sec:complexity}

The total computational complexity of \texttt{SDSR} is composed of three parts: (i) the computation of $K$ similarity matrices; (ii) the recursive partitioning and merging steps; and (iii) the recovery of the small trees via a tree reconstruction subroutine. 
Regarding the first part, computing $K$ similarity matrices from gene alignments of size $m \times n$ requires $\OO(Km^2n)$ operations.

\paragraph{Complexity of recursive partitioning and merging steps.}
Partitioning the $m$ species requires computing the Fiedler vector of a positive semi-definite Laplacian matrix, which has a complexity of $\OO(m^2)$ \cite[Chapter 2]{stewart2001}.
Applying the outgroup approach in the merging step involves identifying the node adjacent to the outgroup in each subtree, removing the outgroup nodes, and adding a single connecting edge between the subtrees. The complexity of these operations is linear in the number of terminal nodes. Thus, the complexity of the partition and merging step \textit{for a single iteration}, given the reconstructed subtrees, is $\OO(m^2)$.
Let $T(m)$ denote the complexity of all partitioning and merging operations.
Given an input of $m$ species, the partition satisfies $\min(|C_1|,|C_2|) \geq \beta m$, for some constant $0<\beta \leq 0.5$. 
Thus, the complexity of recursively reconstructing a tree from $m$ terminal nodes satisfies the following formula,
\begin{equation}\label{eq:recurrence}
T(m) \leq  \max_{\beta \leq \beta' \leq 0.5}T(\beta' m) + T( (1-\beta')m) +  C m^2,   
\end{equation}
for some constant $C$. 
By Lemma \ref{lem:complexity_k2logk} 
in Appendix \ref{sec:complexity_appendix}, 
the complexity $T(m)$ may be bounded by 
\[
T(m) = \OO\Big(m^2 \Big), 
\]
where the multiplicative constant hidden in the $\mathcal{O}\bigl(\cdot\bigr)$ notation depends on $\beta$. 

\paragraph{Complexity of reconstruction of the small trees.}
Given a threshold $\tau$ for the partitioning step, the smallest possible subset size is $\tau \beta$. Thus, the maximal number of subsets is bounded by $m/(\tau \beta)$. For each of these subsets we apply the user-defined tree reconstruction subroutine. Let $f(\tau)$ denote the complexity of reconstructing a tree with $\tau$ terminal nodes via the user-chosen subroutine. 
The complexity of reconstructing all the small trees is thus $\OO(m f(\tau)/\tau)$.
\paragraph{Overall complexity of \texttt{SDSR}.} Combining the complexity of all \texttt{SDSR} steps yields
\begin{equation}
 \OO\Big(Km^2n\Big) +
 \OO\Big(m^2 \Big) + \OO\Big (\frac{m}{\tau} f(\tau)\Big). 
        \label{eq:total_complexity}
\end{equation}
In practice, the actual runtime of the first term in the expression above is negligible, and moreover it can trivially parallelized. 
In addition, computing the Fiedler eigenvector is also in practice much faster than applying the tree reconstruction subroutine. 
Taking these points into account, let us illustrate the implications of the above formula, for a species tree reconstruction whose complexity for $m$ species scales as $f(m)=\OO(m^2)$. In this case, choosing $\tau=\sqrt{m}$ the complexity of the last term is $\OO(m\sqrt{m})$, which represents a savings of $\OO(\sqrt{m})$ as compared to applying the algorithm to the full data. 
In the simulation section we indeed observe runtime savings in accordance to this analysis. 
Next, we present a more detail comparison for \texttt{CA-ML} as a base routine. 

\paragraph{Comparison to \texttt{CA-ML}.} 
We compare the complexity of \texttt{CA-ML} to the complexity of \texttt{SDSR} with \texttt{CA-ML} as subroutine.
\texttt{CA-ML} methods compute a local maximum of the likelihood function of the concatenated matrix by \textit{hill climbing} iterations. The dominant factor in terms of complexity is the Subtree Pruning and Regrafting (SPR) moves, where a subtree is removed and re-attached at a different location. Testing all possible changes based on the concatenated dataset has a $\OO(Km^2n)$ complexity for a single SPR move \cite{allman2005lecture, togkousidis2023adaptive}. This step is applied multiple times in many popular ML tools for tree reconstruction. 
In contrast, using \texttt{SDSR}, 
the cost of reconstructing the small subtrees via ML is only $\OO(K\tau m n)$, a reduction by a factor of $m/\tau$. 
This reduction is achieved at the cost of computing the similarity, which is the dominating factor in \texttt{SDSR} with complexity of $\OO(Km^2n)$. However, this operation is only done once, while SPR moves are performed multiple times in the hill-climbing process.
Thus, the dominant term's complexity is reduced from $\OO(Km^2n)$ to $\OO(K\tau m  n)$. 
As illustrated in the experimental section, for datasets with $200$ taxa, \texttt{SDSR} with \texttt{CA-ML} as a subroutine, is an order of magnitude faster than applying \texttt{CA-ML} to recover the full tree.

\section{Theoretical guarantees under the multispecies coalescent model}\label{sec:analysis}

In this section, we derive theoretical guarantees for the accuracy of the partition step of 
\texttt{SDSR}, 
when the number of genes is large. The guarantees are derived for a partitioning scheme that slightly differs from  
\texttt{SDSR} in two aspects: (i) it uses the unnormalized Laplacian matrices $L^{g} = A^{g}-S^{g}$, instead of the normalized Laplacians defined in Eq. \eqref{eq:laplacians}.
As the un-normalized $L^{g}$ is linear in $S^{g}$,  this allows us to first average the $K$ similarity matrices, and then compute its corresponding Laplacian. 
(ii) The partitioning is determined by the sign pattern of the Fiedler vector, instead of the constrained $k$-means algorithm. 
Our analysis assumes that the un-normalized Laplacian matrices of individual genes are perfectly known. As mentioned in the introduction, this is analogous to the assumption made in the analysis of summary methods, that their gene trees are inferred without errors.

Our analysis assumes the MSC model for generating random gene trees given a species tree and 
the General Time Reversal (GTR) model for generating the sequence alignments given a gene tree. For our paper to be reasonably self-contained we briefly review these two probabilistic
models in Section \ref{sec:data_model}.

Our analysis consists of the following three parts:
\begin{itemize}[leftmargin=*]
\item[\RNum{1}] In Section \ref{subsec:rank_1}, we define a family of matrices $\WT$ that satisfy a \textit{rank-1 condition} with respect to the species tree $\T$. 
We prove that for any graph with a weight matrix $W \in \WT$, partitioning the species according to the sign pattern of its Fiedler vector yields two clans in the species tree.

 \item[\RNum{2}] %In the MSC model, the gene trees are generated by some distribution that depends on the species tree.
 In Section \ref{sec:key_properties} we present two key properties 
 regarding the distribution of the gene trees, 
 used in our analysis. 
 We prove that if these properties hold, then 
 combined with the GTR model, the expected gene similarity matrix satisfies the rank-1 condition as defined in \RNum{1}. 
 Next, we show that the MSC model indeed satisfies these two condition. Hence, 
in the limit $K \to \infty$, partitioning the terminal nodes via \texttt{SDSR} yields two clans of the species tree. 

\item[\RNum{3}] Going beyond asymptotic consistency, in Section \ref{subsec:finite_gene_garantee}, we present a guarantee for correctness of the partitioning step, for a finite number of genes. 
 \end{itemize}

\subsection{Generative models for gene trees and their sequences}
\label{sec:data_model}

In our analysis, we assume that the observed
data was generated according to a combination of the following two probabilistic models. 
Given a species tree, 
we assume that the random 
gene trees are generated by the Multispecies Coalescent (MSC) model.
Next, for each gene tree, 
we assume its corresponding sequences are generated by the General Time Reversal (GTR) Markov model, similar to \cite{allman2019species}.
We briefly describe these two models and then introduce the similarity measure between species, a key quantity in our analysis.

\paragraph{Multispecies Coalescent Model (MSC).} 
The MSC is a common probabilistic model for gene trees given an underlying species tree. 
For a precise description, we refer the reader to \cite[Chapter.~9]{yang2014molecular} and \cite{allman2019species}.
Here, we present only the necessary details required for our analysis. 
%
%In the species tree $\mathcal T$, each edge corresponds to a species, also referred to as \textit{population}. Each internal node of
%$\mathcal T$ represents a divergence event of a population into two distinct populations. 
We start with a definition of the species tree.

\begin{definition}[Species Tree] \label{def:species_tree}
    A species tree is a binary rooted tree, $\mathcal{T}=(V,E)$, where each node $h$ has an associated time $\tau_h\geq 0$, and each edge $e$ has a coalescence rate  $\lambda(e) > 0$. 
    The times of the different nodes satisfy the following two properties: (1) $\tau_v=0$ for all leaves $v$ of the tree; (2) 
    for each parent node $h'$ connected by an edge
    to a child node $h$, 
    $ \tau_h < \tau_{h'}$. 
\end{definition}

By the above definition, the present time at which we observe the current species is set to $\tau=0$. The history of species divergences is represented by the tree.
Specifically, for any two species $i,j$, 
their divergence time is defined as follows. 
%we denote by $\tau_{ij}$ the divergence time of their two lineages in the species tree.  
Let $h_{ij}$ be the internal node of $\mathcal T$
which is their most recent common ancestor (MRCA). 
This node corresponds to the divergence of the lineages $i$ and $j$. 
The divergence time $\tau_{ij}$ of species $i,j$, is given by $\tau_{ij} = \tau_{h_{ij}}$. 
Condition (2) in the definition of the species tree above implies that divergence events in the more distant past have larger time values.
% \begin{remark}\label{rem:time_ultrametric}
%     By Definition \ref{def:species_tree}, the species tree is ultrametric in time units. 
%     That is, the time difference between every leaf and the root is equal.
% \end{remark}

% Generation of random gene trees

Under the MSC, a gene tree $\mathcal T^g$ is randomly generated from the species tree according to the following bottom-up process. 
First, at time $\tau=0$ the gene tree starts with the same set of current species as in the species tree.
Then, as time progresses, the species randomly coalesce to a common ancestor at a rate determined by the species tree topology and  coalescence rates $\lambda (e)$ on its edges.
For any two species $i,j$, we denote by $\tau^g_{ij}$ their coalescence time in the gene tree $\mathcal T^g$. 
The coalescence process satisfies the following two conditions: 

\begin{enumerate}
    \item  $\tau^g_{ij}$ is larger than $\tau_{ij}$. 
    
    \item Their difference, denoted $\Delta^g_{ij}= \tau^g_{ij}- \tau_{ij}$, has an exponential distribution with piecewise constant rates as follows. 
    Let $h_{ij}$ be the MRCA of species $i,j$ in the species tree $\mathcal T$. Let $e_1,e_2,\ldots,e_\ell$ denote the path from $h_{ij}$ to the root of $\mathcal T$. Then, for each branch $e$ connecting a node $h$ to its parent $h'$, conditional on not having coalesced prior to time $\tau_h$, the coalescence rate is $\lambda(e)$. If no coalescence occurred before the root node, then beyond the root, there is a fixed rate $\lambda_0$. 
\end{enumerate}

These two properties imply that the topology of the gene trees may differ from that of the species tree, see illustration in Figure \ref{fig:mscmodel}.

%
%%% I DONT KNOW IF I WANT TO GET INTO THIS RABIT HOLE %%%
% Next, we emphasize  a distinguish between 3 quanteties involved in the MSC model, specificly, three measures for edge length. Let $h$ be a node in the species tree and $h'$ be its parent node. The first and simplest is the time difference $$\tau_{h'}-\tau_h.$$ 
% The second is the coalescence time equal to the time difference multiplied by the coalescence rate, 
% $$\lambda(h,h')(\tau_{h'}-\tau_h),$$
% namely this is a normalization to coalescence units, and from these measures we get the probability for coalescence event of two labeled lineages at that edge by $1-e^{\lambda(h,h')(\tau_{h'}-\tau_h)}$. The last measure is the phylogeny distance, which is the average mutation per site, given by
% $$\mu(h,h')(\tau_{h'}-\tau_h).$$ 
% The last two measures can be generalized by letting the coalescence rate and the mutation rate of each edge be positive functions in time. 
%
Next, we describe the generative process for the observed sequences of the current species in a gene tree.

\paragraph{Substitution model of the sequences.}  
%We now present a probabilistic model of how  sequences are randomly generated, given a gene tree and substitution matrices of the edges of the gene tree. 
%We first describe the Markov model, and then the general time reversal (GTR) model, which we assume in our analysis. 
% GENERAL MARKOV MODEL
We assume that the observed sequences are generated according to the
the common model of sequence evolution, described by a Markov process along the edges of each gene tree (see, e.g.,~\cite[Chapter 1]{yang2014molecular}).
Specifically, we assume that for a given gene tree there are substitution matrices on its edges. 
In addition, at each node $h$ in the tree, there is an associated random variable $X_h\in\{\alpha_1, \ldots,\alpha_d\}$. 
The sequence generative process begins at the root, where its state is drawn according to some initial distribution over $\{\alpha_1,\ldots,\alpha_d\}$.
Then, for any parent node $h'$ connected by an edge to a child node $h$, the random variable $X_h$ is drawn conditionally on $X_{h'}$
via a substitution probability matrix $P_{h|h'}\in \mathbb{R}^{d\times d}$, 
\begin{equation}
P_{h|h'}(l,l') = \Pr(X_{h} = \alpha_{l'} \mid X_{h'} = \alpha_{l}), \qquad \alpha_l, \alpha_{l'} \in \{\alpha_1, \ldots, \alpha_d\}.
    \label{eq:P_h_htag}
\end{equation}
This process propagates from the root to the leaves. 
The observed sequences for each of the current species are generated by this process,  independently for each location in the sequence.

% GTR MODEL
As in \cite{allman2019species}, we assume the GTR model for the substitution matrices. 
Under this model, the sequences mutate along the edges of the tree according to a continuous-time Markov process, with a rate matrix $Q\in \mathbb{R}^{d\times d}$ that satisfies the following four properties: (i) 
$Q_{ij}\geq0$ for all $i\neq j$.
(ii) each row of $Q$ sums to zero.
(iii) there is a distribution vector $\pi\in\mathbb{R}^d$ with positive probabilities that satisfies $\pi Q=0$.
(iv) $\text{diag}(\pi)Q$ is symmetric. 
Under the GTR model, the random variable $X_r$ 
at the root node $r$ of the tree is assumed  to be distributed according to $\pi$.
By the first two properties above, $Q$ is negative semi-definite, namely, all its eigenvalues are non-positive.
If $Q\neq \textbf{0}_{d\times d}$, then there is at least one eigenvalue which is strictly smaller than zero.

Next, we discuss the structure of the substitution matrices under this model.
Let $h,h'$ be two adjacent nodes in the gene tree with times $\tau^g_h,\tau^g_{h'}$. 
Under the GTR model with rate matrix $Q$, $P_{h|h'}, P_{h'|h}$ satisfy:
\begin{equation}\label{eq:GTR_phh'}
    P_{h|h'}=P_{h'|h}=e^{Q|\tau_h-\tau_{h'}|}.
\end{equation}
Using the standard identity for the matrix exponential, 
\begin{equation}\label{eq:det_GTR_phh'}
    \det(P_{h'|h})=e^{\text{tr}(Q)|\tau_h-\tau_{h'}|}.
\end{equation}
Note that $\text{tr}(Q)<0$, hence Eq. \eqref{eq:det_GTR_phh'} implies that $0<\det(P_{h'|h})<1$. 
This condition on the determinants of the substitution matrices is known to be necessary for the identifiability of the tree topology \cite{chang1996full}. 

We note that \cite{allman2019species} considered a slightly more general model, allowing different rate matrices for different genes.
For simplicity, in this work, we consider a fixed matrix $Q$ for all gene trees. 
However, our analysis remains valid under the more general model in \cite{allman2019species}.
%We further discuss this point in Section \ref{sec:discussion}.

\begin{figure}[t]
    \centering
\includegraphics[width=0.6\textwidth,height=0.25\textheight,keepaspectratio]{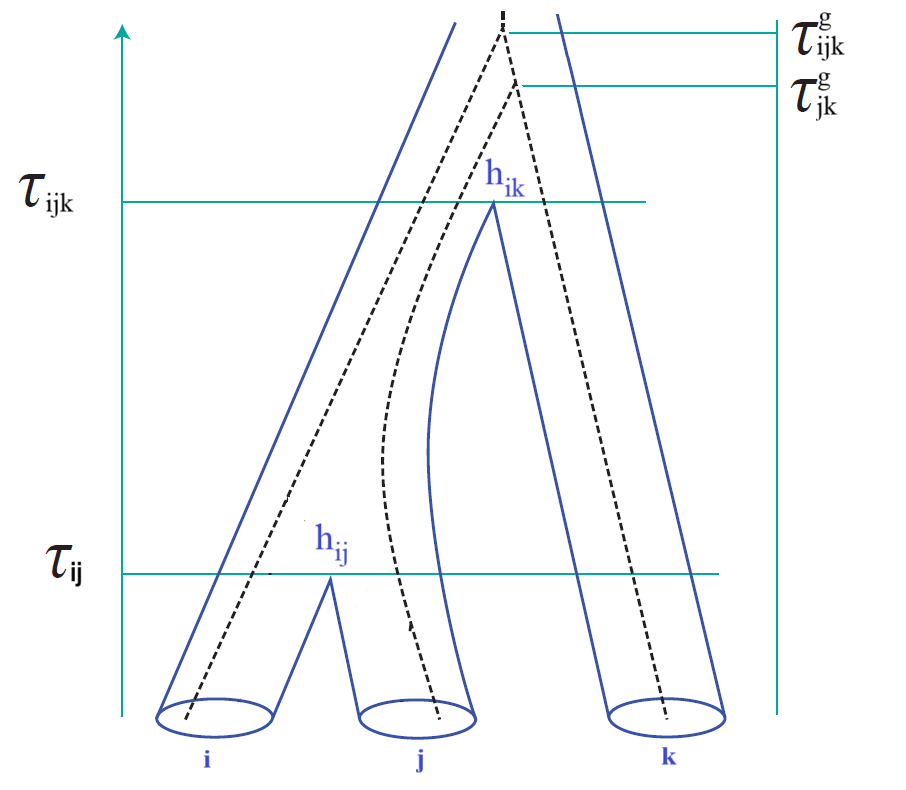}
    \caption{
    {Example of a species tree (blue) and a realization of a gene tree (black). 
    %The species tree  above three species $i,j,k$ alongside its parameters. 
    The vertical line is the time axis starting at $\tau=0$ with the current species. 
    The coalescence times in the species tree are $\tau_{ij}$ and $\tau_{ijk}$, whereas those in the gene tree are $\tau_{jk}^g$ and $\tau_{ijk}^g$.
    In this example, the gene tree and the species tree have different topologies.} }
    \label{fig:mscmodel}
\end{figure}

%%%%%%%%%%%%%%%%%%%%%%%%%%%%%%%%%%%%%%%%

\paragraph{The similarity measure.} 
%Note that under the substitution model described above, the matrix in Eq.~\eqref{eq:sample_substitution_matrix} 
%is an estimate of the underlying substitution matrix between two extant species given by Eq.~\eqref{eq:P_h_htag}.
By the Markov property, the substitution matrix $P^g_{i|j}$ between any two extant species $i$ and $j$ in a gene tree $\mathcal T^g$, is the product of the substitution matrices along the path between $i$ and $j$.
Next, note that the matrix from Eq.~\eqref{eq:similarity} is an estimate of the $m \times m$ \textit{gene similarity matrix},
\begin{equation}\label{eq:population_similarity}
    S^g_{ij} = \det(P^{g}_{i|j}) \cdot \det(P^{g}_{j|i}),
\end{equation}
where $i$ and $j$ are extant species.
Under the GTR model, a closed form expression
for $S^g_{ij}$ can be derived. 
Recall that $\tau^g_i=\tau^g_j=0$, and that $\tau^g_{ij}$ is the time of the MRCA of $i$ and $j$ in the gene tree $\T^g$. Then, 
by Eq.\eqref{eq:GTR_phh'} and the multiplicative property of transition matrices, 
\begin{equation}\label{eq:GTR_similarity}
    S^g_{ij} = e^{4
    \text{tr}(Q)\tau^g_{ij}}.
\end{equation}
Eq. \eqref{eq:GTR_similarity} provides an explicit relation between the coalescence time of species in a gene tree and the similarity measure of their infinite length sequences.
Analogously, we define the \textit{species similarity matrix} $S$ as follows,
\begin{equation}\label{eq:GTR_similarity_species}
    S_{ij} = e^{4\text{tr}(Q)\tau_{ij}},
\end{equation}
where $\tau_{ij}$ is the coalescence time of species $i$ and $j$ in the species tree.
Notice that $-\log(\cdot)$ of the similarity between two species gives the well-known log-determinant distance \cite{steel1994recovering, allman2019species}.

We conclude this section by mentioning the multiplicativity along paths of the similarity measure.
Formally, let $v_1,v_2,v_3$ be three nodes in a tree, such that $v_2$ lies on the unique path from $v_1$ to $v_3$. 
We extend the definition of the similarity in Eq.~\eqref{eq:population_similarity} to a similarity function $\cal S$, which receives as input any pair of nodes $h, h'$ in a tree $\cal T$. It is defined as follows,

\begin{equation}\label{eq:similarity_measure}
    \mathcal S(h,h') = \det(P_{h|h'}) \cdot \det(P_{h'|h}).
\end{equation}
Then, the similarities of the three pairs of nodes satisfy
\begin{equation}\label{eq:multiplicativity}
    \mathcal S(v_1,v_3) = \mathcal S(v_1,v_2) \cdot \mathcal S(v_2,v_3).
\end{equation}
Eq. \eqref{eq:multiplicativity} holds under the general Markov model by the Markov property together with the multiplicativity of determinants. 
Finally, for future use, for any edge $e=(h,h')$ in a tree we define the edge similarity as follows
\begin{equation}\label{eq:edge_similarity}
    \mathcal S(e) = \mathcal S(h,h')
\end{equation}

Since $\text{Tr}(Q)<0$, for each edge $e$ the similarity $0<\mathcal S(e)<1$. 
For future use, we denote the lower and upper bounds of $S(e)$ by $\delta, \xi$ respectively.

\subsection{The species tree rank-1 condition and spectral partitioning}\label{subsec:rank_1}
%Recall that $D$ denotes the exact distance matrix defined in Eq.~\eqref{eq:additivity}, and 
Let $S$ be the similarity matrix of a species tree $\T$ as defined in 
Eq. \eqref{eq:GTR_similarity_species}. The following theorem from \cite{aizenbud2021spectral} establishes the correctness of spectral partitioning 
with the graph Laplacian corresponding to $S$. 
\begin{theorem}[\cite{aizenbud2021spectral},Theorem 4.2]\label{thm:stdr_partition}
Let $\T$ be a binary tree, 
whose similarity matrix $S$ has all entries
between $0$ and $1$. %\footnote{In \cite{aizenbud2021spectral}, the condition is that the similarity is between zero and one. The theory holds, however, for any sim 
Let $v$ be the Fiedler vector of the Laplacian matrix of $S$. Let {$C_1$, $C_2$} be the partition of the terminal nodes of $\T$ according to the sign pattern of $v$. Then $C_1$, $C_2$ are clans in $\T$.
\end{theorem}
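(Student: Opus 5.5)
The plan is to follow the spectral-tree argument of \cite{aizenbud2021spectral}. The key structural input is a rank-one property of the similarity matrix. Fix any edge $e$ of $\T$ and let $A,B$ be the two sets of terminal nodes obtained by deleting $e$. Let $a,b$ be the endpoints of $e$ on the $A$ and $B$ sides. By multiplicativity along paths, Eq.~\eqref{eq:multiplicativity}, for $i\in A$ and $j\in B$ we have $S_{ij}=\mathcal S(i,a)\,\mathcal S(e)\,\mathcal S(b,j)$. Hence the off-diagonal block $S_{A,B}$ is rank one, of the form $\mathcal S(e)\,x y^\top$ with positive vectors $x\in\R^{|A|}$ and $y\in\R^{|B|}$.

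\textbf{Step 1 (sign pattern).} Consider a Fiedler vector $v$ of the unnormalized Laplacian $L=A_S-S$. I would show that the sign pattern of $v$ is compatible with a single edge. The Fiedler vector is orthogonal to $\mathbf 1$ and minimizes the Rayleigh quotient. Fiedler's nodal-domain theory for weighted graphs gives that the subgraph induced on $\{v\ge 0\}$ is connected, and likewise for $\{v\le0\}$. Here, however, the graph is complete with positive weights, so connectivity alone is uninformative. The tree structure has to enter through the rank-one blocks instead.

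\textbf{Step 2 (contradiction from a non-clan split).} Suppose the positive set $C_1$ is not a clan. Then there is an internal node $h$ whose removal splits the leaves into three subtrees $T_1,T_2,T_3$, with $C_1$ meeting at least two of them while $C_2$ also meets two of them in a crossing way. Equivalently, some edge separates leaves such that both sides contain both signs. Take a minimal such configuration and restrict the eigen-equation $(Lv)_i=\lambda_2 v_i$ to one side $A$ of an edge $e$. By the rank-one structure, the influence of $B$ on $A$ enters only through the scalar $y^\top v_B$, times the vector $x$. I would then show that $v_A$ is a combination of a ``local'' eigenvector problem and a positive vector, and argue via a Perron--Frobenius-type monotonicity that $v_A$ has constant sign whenever $y^\top v_B$ and the local structure force it. Performing this inductively from the leaves upward, every subtree whose Fiedler entries change sign must contain the sign change at exactly one edge.

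\textbf{Step 3 (alternative variational route).} An alternative is to show that for any crossing sign pattern one can construct a competitor vector with smaller Rayleigh quotient, still orthogonal to $\mathbf 1$. The competitor would be obtained by reflecting $v$ on one subtree, or by replacing $v$ on a subtree $A$ by $c\,x$-weighted averages. The rank-one coupling makes the cross terms $\sum_{i\in A,j\in B}S_{ij}(v_i-v_j)^2$ computable in closed form, which allows the comparison.

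\textbf{Main obstacle.} The hard part is the Perron--Frobenius/monotonicity step. This step must show that within a subtree, $v$ restricted to the leaves behaves like a positive multiple of a fixed positive vector plus a term that cannot create an extra sign change. It also has to handle the degenerate cases of zero entries and multiplicity of $\lambda_2$. I would try first to prove the three-subtree lemma: if $h$ splits the leaves into $T_1,T_2,T_3$, then $v$ cannot take both signs on each of two different $T_k$. This would be proved by writing $L$ in $3\times3$ block form with rank-one off-diagonal blocks and comparing eigenvalues of the Schur-complement reductions.
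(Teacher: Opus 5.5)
\textbf{Gaps.} The paper does not prove this statement; it imports it from \cite{aizenbud2021spectral}, so your plan has to stand on its own. It does not yet: the step you call the main obstacle carries the whole theorem, and two of your load-bearing steps fail as stated.

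\emph{The Perron--Frobenius step.} From $(L_{AA}-\lambda_2 I)v_A=\mathcal S(e)\,(y^\top v_B)\,x$, your monotonicity argument gives a constant sign on $A$ only when $L_{AA}-\lambda_2 I$ is positive definite. In that case it is an irreducible nonsingular $M$-matrix, whose inverse is entrywise positive. This condition fails precisely on the large clans you need. If $C_1=\{v>0\}$, then $L_{C_1C_1}v_{C_1}=\lambda_2 v_{C_1}+S_{C_1C_2}v_{C_2}<\lambda_2 v_{C_1}$ entrywise. Pairing with the positive Perron vector of $L_{C_1C_1}$ gives $\lambda_{\min}(L_{C_1C_1})<\lambda_2$. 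So a leaves-upward induction stalls strictly below the sign classes and says nothing exactly where a crossing would occur.

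\emph{The three-subtree lemma.} Even if proved, it would not suffice. Take an internal edge $(h,h')$ with remaining branches $A_1,A_2$ at $h$ and $B_1,B_2$ at $h'$. Put sign $+$ on $A_1\cup B_1$ and $-$ on $A_2\cup B_2$. Every internal node then has at most one mixed branch, yet $A_1\cup B_1$ is not a clan. Granting your lemma, this alternating pattern across an edge is exactly the configuration that remains. Excluding it requires the coupling through two internal nodes, which a single-vertex $3\times3$ Schur reduction cannot see.

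\textbf{A way to close the gap.} Lift the problem to the whole tree, where nodal-domain theory does apply.
\begin{itemize}
\item Let $\Sigma$ be the $(2m-2)\times(2m-2)$ matrix of multiplicative similarities among all nodes, terminal and internal. It is the covariance of a Gaussian tree model. Hence $J=\Sigma^{-1}$ is supported on the diagonal and the edges of $\T$, with $J_{uw}=-\mathcal S(e)/(1-\mathcal S(e)^2)<0$, and $S$ is the terminal block of $\Sigma$.
\item Let $P$ select the terminal coordinates and set $G=\mathrm{diag}(S\mathbf 1)-\lambda_2 I$. Then $G\succ0$: the test vector $e_i-\mathbf 1/m$ gives $\lambda_2\le\frac{m}{m-1}d_i<1+d_i$, where $d_i=\sum_{j\ne i}S_{ij}<m-1$.
\item The block matrix $\bigl(\begin{smallmatrix}J&P^\top\\ P&G\end{smallmatrix}\bigr)$ has Schur complements $G-S=L-\lambda_2I$ and $M=J-P^\top G^{-1}P$. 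By Haynsworth inertia additivity, the tree-supported generalized Laplacian $M$ has exactly one negative eigenvalue. When $\lambda_2$ is simple, $M$ also has a simple zero eigenvalue, with eigenvector $\Phi(u)=\sum_j\mathcal S(u,j)v_j$ for all nodes $u$. Moreover $Gv=Sv=\Phi_L$, so $\Phi$ and $v$ have the same signs on the terminal nodes.
\item The discrete nodal domain theorem allows at most two strong nodal domains for a simple second eigenvalue. So $\{\Phi>0\}$ and $\{\Phi<0\}$ are disjoint subtrees, and a single edge separates the positive terminal nodes from the negative ones.
\end{itemize}
This also rules out the alternating quartet automatically, since that pattern would create two positive strong domains. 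As you anticipated, the argument needs $\lambda_2$ simple and $v$ free of zero entries. These are the same hypotheses imposed in Theorem~\ref{thm:partitioning_consistency}.
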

Theorem \ref{thm:stdr_partition} is not directly applicable for recovery of the species tree,
since in our setting we do not
have the species tree similarity $S$, but only the similarity matrices $S^g$ of the different
gene trees. 
Toward recovery of the species tree, 
we generalize Theorem \ref{thm:stdr_partition} by introducing a set of matrices, denoted $\WT$, and proving the correctness of spectral partitioning for any weight matrix $W \in \WT$. 
The family $\WT$ consists of all matrices that satisfy the following two conditions. 
\begin{definition}[The $\T$ rank-1 condition]\label{def:rank_one_condition}
  A matrix $W$ satisfies the rank-1 condition with respect to $\T$, if for any partition of the terminal nodes into subsets $A$ and $B$, the submatrix $W(A,B)\in \R_+^{|A| \times |B|}$ is rank-1 if and only if $A$ and $B$ are clans in $\T$.
\end{definition}
For the next condition, let \( W((i,i'),(j,j')) \in [0,1]^{2\times 2} \) denote a submatrix of \( W \) with rows \( (i,i') \) and columns \( (j,j') \). In addition, let \( \T_Q \) denote the subtree of \( \T \) whose terminal nodes are \(Q = ( i,i',j,j') \), with topology induced by \( \T \).

\begin{definition}[The $\T$ quartet condition]\label{def:quartet_condition}
A matrix \( W \in [0,1]^{m \times m} \) satisfies the quartet condition with respect to \( \T \) if, for every quartet of nodes \(Q = ( i,j,i',j') \) such that \( (i,j) \) and \( (i',j') \) are siblings in \( \T_Q \), 
\[
|W((i,i'),(j,j'))| > 0.
\]
\end{definition}

%We denote by $\WT$ the family of matrices in $\R_+^{m \times m}$ that satisfy the above rank-1 condition. 
%The following lemma, which is a variation of  Lemma 3.1 in \cite{snj}, shows that the similarity $S$ is a particular member of $\WT$.
%\begin{lemma}\label{lem:snj_rank1}
% Let $S\in \R_+^{m \times m}$ be the exact similarity matrix of a tree $\T$ with $m$ terminal nodes. 
% Let $A,B$ be a partition of the terminal nodes, and let 
% $S(A,B)\in \R^{|A| \times |B|}$ be the submatrix of $S$ that contains the similarities between nodes in $A$ and nodes in $B$. Then, $S(A,B)$ is rank-1 if and only if $A$ and $B$ contain the terminal nodes of two clans in $\T$. 
%\end{lemma}
The following lemma, proven in Appendix \ref{app:proof_lemma_2}, shows that every matrix $W \in \WT$ corresponds to some tree with the same topology as $\T$. 
\begin{lemma}\label{lem:weight_assignment}
Let $W \in \WT$ be an arbitrary matrix that satisfies the rank-1 condition and the quartet condition with respect to a tree $\T$. 
Then there exists an assignment of edge weights in \( \T \), taking values in \( [0,1] \), such that for all pairs $(i,j)$, $W(i,j)$ is equal to the product of weights along the path from node $i$ to node $j$.
\end{lemma}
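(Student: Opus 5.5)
We will construct the edge weights directly, using the rank-1 condition to define them through ratios of entries of $W$, and the quartet condition to guarantee that all these ratios are well defined and positive. The guiding identity is the one a genuine tree-product matrix satisfies. For an internal node $h$ and three terminals $i,j,k$ lying in the three distinct branches at $h$, the weight of the path from $i$ to $h$ satisfies
\[
w(i,h)^2 = \frac{W(i,j)\,W(i,k)}{W(j,k)}.
\]
We will therefore define path weights by this formula and then show that it is independent of the choice of $j,k$. This independence is exactly what the rank-1 condition supplies.

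The first step is positivity. Every entry $W(i,j)$ with $i\neq j$ should be strictly positive. Given $i,j$, we pick $i',j'$ so that $(i,i')$ and $(j,j')$ are siblings in $\T_Q$; this is possible when $m\ge 4$, and small cases are handled separately. The quartet condition then gives $\det W((i,i'),(j,j'))\neq 0$. This rules out a zero row or column, but it does not by itself force $W(i,j)>0$. If $W(i,j)=0$, we will combine this with the rank-1 structure of $W(A,B)$ for a clan $A$ containing $i$ but not $j$. A rank-1 block with a zero entry has a zero row or zero column, and choosing the quartet appropriately makes that determinant vanish, a contradiction.

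The second step is to define the weights. We root $\T$ arbitrarily. For each edge $e$ separating a clan $A$ from $B=[m]\setminus A$, the rank-1 condition lets us write $W(A,B)=a_e b_e^{T}$ with positive vectors. The scaling is fixed by the terminal-pendant edges: we set the weight of the edge at leaf $i$ to $u_i=\sqrt{W(i,j)W(i,k)/W(j,k)}$, where $j,k$ lie in the two different branches below the neighbour of $i$. The rank-1 property of the relevant clan blocks shows that this value does not depend on which $j,k$ are chosen. For an internal edge $e=(h,h')$ with clans $A$ (on the $h$ side) and $B$, we set
\[
w(e)=\frac{W(i,j)}{\pi_A(i)\,\pi_B(j)},
\]
where $\pi_A(i)$ is the product of the already defined weights from $i$ to $h$. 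We proceed inductively from the leaves inward. The rank-1 property of $W(A,B)$, together with the inductive hypothesis that $W$ restricted inside $A$ factors through $\pi_A$, implies that $w(e)$ does not depend on the choice of $i\in A$ and $j\in B$.

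The third step is to verify the factorisation $W(i,j)=\prod_{e\in \mathrm{path}(i,j)}w(e)$, which follows by induction along the path. The fourth step, and the main obstacle, is to show that every weight lies in $[0,1]$, and in particular is at most $1$. Positivity is immediate. The upper bound must come from the quartet condition, not merely from $W\le 1$. For an internal edge with quartet $i,i'\mid j,j'$ across it, we have
\[
W(i,j)W(i',j') = w(e)^2\, W(i,i')\,W(j,j').
\]
Since $W(i,j)\le 1$, we will try to use the sign of the determinant to obtain $W(i,j)W(i',j')\le W(i,i')W(j,j')$, which yields $w(e)\le 1$. The quartet condition as stated only gives a nonzero determinant, so we must argue that its sign is positive. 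For a leaf edge, $u_i^2\le 1$ likewise reduces to $W(i,j)W(i,k)\le W(j,k)$. If these sign arguments fail, we will fall back on a continuity/connectedness argument, or conclude that the nonzero-determinant condition must be read as a positivity condition consistent with the intended model.
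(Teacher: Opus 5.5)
There is a genuine gap. Step~1 and the pendant-edge part of Step~4 try to derive facts that do not follow from the two conditions, and Step~2 does not actually define all the weights.

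\textbf{Step~1 (positivity).} This cannot be proved if zero entries are allowed. On the quartet tree $12|34$, let $W(1,3)=W(1,4)=0$ and set every other off-diagonal entry to $\tfrac12$.
The clan block $W(\{1,2\},\{3,4\})$ has rows $(0,0)$ and $(\tfrac12,\tfrac12)$, so it is rank one. Both non-clan splits give $2\times 2$ blocks with determinant $\tfrac14$. Every quartet determinant equals $W(1,2)W(3,4)-0=\tfrac14>0$.
Yet $W$ has no tree-product realization. Indeed, $W(2,3)\neq 0$ forces the internal weight and the pendant weight of $3$ to be nonzero, so $W(1,3)=0$ forces the pendant weight of $1$ to vanish. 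That contradicts $W(1,2)=\tfrac12$.

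Your contradiction argument fails because a zero row of a clan block enters a quartet determinant only through the subtracted cross term. It works only when $(i,j)$ is a cherry of some quartet. So strict positivity of the off-diagonal entries must be part of the hypothesis: read $\R_+$ as $(0,\infty)$, which holds for $\E[S^g]$. The paper's proof also uses this tacitly, since its rescaling at $h_A$ needs $W(A_1,B)$ and $W(A_2,B)$ to be nonzero.

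\textbf{Step~4 (the bound by $1$).} In this paper $|\cdot|$ denotes the determinant itself; compare the proof of Lemma~\ref{lem:population_mean}, where the same expression is computed and called positive. So the quartet condition already is the sign condition you want. For a quartet $i,i'\,|\,j,j'$ whose internal path is the single edge $e$, you get $\det=W(i,i')W(j,j')\bigl(1-w(e)^2\bigr)>0$, hence $w(e)<1$. This is exactly the paper's last step ($c_Ac_B<1$).

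For pendant edges, however, no sign or continuity argument can succeed. Both conditions are invariant under $W\mapsto DWD$ with $D$ positive diagonal, which rescales pendant weights freely, and the entrywise bound $W\le 1$ does not compensate.
For example, on $12|34$ take pendant weights $2,\tfrac14,\tfrac14,\tfrac14$ and internal weight $\tfrac12$. All entries of $W$ lie in $(0,1)$ and both conditions hold. But every realization must have $u_1^2=W(1,2)W(1,3)/W(2,3)=4$.
So the $[0,1]$ claim is provable only for internal edges, which is also all the paper's determinant computation delivers. For $\E[S^g]$ itself, Properties~1--2 show directly that the pendant weights are below $1$.

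\textbf{Step~2 (defining the weights).} As written this is circular. $\pi_B(j)$ always involves one of the other two edges at $h'$, which a leaves-inward order has not yet defined. As a result, edges at nodes with no leaf neighbour are never reached.

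The fix is to use your opening identity globally:
\begin{itemize}
\item[(i)] Define $\pi(i,h)$ for every leaf $i$ and every internal node $h$ by the three-point formula. It is independent of $j,k$ by the rank-one clan blocks.
\item[(ii)] For $i$ on the $h$ side of an edge $(h,h')$, set $w(h,h')=\pi(i,h')/\pi(i,h)$. This equals $\sqrt{W(a_1,b_1)W(a_2,b_2)/\bigl(W(a_1,a_2)W(b_1,b_2)\bigr)}$, with $a_1,a_2$ in the two other branches at $h$ and $b_1,b_2$ in those at $h'$. Hence it is independent of $i$ and of orientation.
\item[(iii)] Telescope along the path from $i$ to $j$. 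This gives $\pi(i,h)\pi(j,h)=W(i,j)$ at the last internal node $h$.
\end{itemize}

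This is a genuinely different route from the paper's. The paper inducts on $m$, applies Lemma~\ref{lem:rank_1_submatrices} to the two sides of a cut edge, and matches $v_A$ to $w(A,h_A)$ using the scaling freedom at $h_A$. Your route is explicit and also yields uniqueness of the positive realization.
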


Lemma \ref{lem:weight_assignment} implies that though $W$ is not necessarily the similarity matrix of $\T$, there is a different tree $\tilde \T$, with the same topology as $\T$ and different edge similarities for which $W$ is its similarity matrix. We use this result to prove that any matrix $W \in \WT$ can be used for spectral partitioning. 

%\noindent The following theorem shows that the rank-1 property is a sufficient condition
%for both the bottom-up SNJ and the top-down STDR.
\begin{theorem}\label{thm:snj_stdr}
 Let $G$ be a graph whose nodes correspond to the terminal nodes of a tree $\T$, and let $W$ denote its weight matrix. 
 %The weight between two nodes of the graph $x_i,x_j$ is equal to the similarity measure $S(i,j)$ as defined in Eq. \eqref{eq:similarity}. 
 %Let $S \in \R_+^{n \times n}$ be the similarity matrix (weight matrix) of the graph $G$ and 
 Let $v$ be the Fiedler vector of the graph's unnormalized Laplacian matrix $L$. If $W \in \WT$, then partitioning the terminal nodes according to the sign pattern of $v$ yields two clans in $\T$.
\end{theorem}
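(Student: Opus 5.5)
The plan is to reduce Theorem \ref{thm:snj_stdr} to Theorem \ref{thm:stdr_partition} via Lemma \ref{lem:weight_assignment}. Here $\WT$ is the family of matrices satisfying both the rank-1 condition (Definition \ref{def:rank_one_condition}) and the quartet condition (Definition \ref{def:quartet_condition}), with entries in $[0,1]$. Given $W \in \WT$, Lemma \ref{lem:weight_assignment} yields edge weights $w(e)\in[0,1]$ on the edges of $\T$ such that $W(i,j)=\prod_{e\in \mathrm{path}(i,j)} w(e)$ for all terminal $i\neq j$. Define a new tree $\tilde\T$ with the same topology as $\T$ and edge similarities $\mathcal S(e)=w(e)$. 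By the multiplicativity property \eqref{eq:multiplicativity}, the similarity matrix of the terminal nodes of $\tilde\T$ equals $W$ off the diagonal.

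The next step is to check the hypotheses of Theorem \ref{thm:stdr_partition} for $\tilde \T$. That theorem asks that the similarity matrix have entries between $0$ and $1$, and that it be a genuine tree similarity, i.e.\ a product of edge similarities along paths. The second requirement is exactly what Lemma \ref{lem:weight_assignment} provides. For the first, I would argue that no edge weight can be $0$. A zero weight on an internal edge would make an entire block $W(A,B)$ vanish, and any $2\times2$ minor straddling that edge would then vanish too. This contradicts the quartet condition for any quartet with sibling pairs $(i,j)$, $(i',j')$ chosen on opposite sides of that edge. Similarly, $W(i,j)>0$ follows from the quartet determinant being nonzero, which forces the relevant entries to be nonzero.

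Edge weights exactly equal to $1$ are a minor issue. I would handle them either by noting that the proof of Theorem \ref{thm:stdr_partition} in \cite{aizenbud2021spectral} only uses $0<\mathcal S(e)\le 1$, or by absorbing them: $\tilde\T$ is still a binary tree with the same clans, merely with a degenerate edge length. The diagonal of $W$ does not matter, since the unnormalized Laplacian $L=A-W$ is invariant to the diagonal entries of $W$. Hence $L$ coincides with the Laplacian of the similarity matrix of $\tilde\T$, and the two share the same Fiedler vector $v$.

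Applying Theorem \ref{thm:stdr_partition} to $\tilde\T$ then shows that the sign pattern of $v$ splits the terminal nodes into two clans of $\tilde\T$. Since $\tilde\T$ and $\T$ have identical topology, a clan is determined purely by which single edge is removed, so these are also clans of $\T$. The main obstacle I expect is the positivity and boundary step: verifying that Lemma \ref{lem:weight_assignment}'s weights lie strictly in $(0,1]$ and that the cited theorem tolerates weights equal to $1$. If needed, I would inspect the argument of \cite{aizenbud2021spectral} to confirm that it relies only on the rank-1 block structure of $S(A,B)$ across clans and on positivity of the entries.
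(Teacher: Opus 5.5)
Your reduction is the same as the paper's. Lemma~\ref{lem:weight_assignment} realizes $W$ as the path-product matrix of a tree $\tilde{\mathcal T}$ with the topology of $\mathcal T$, and Theorem~\ref{thm:stdr_partition} is then applied to $\tilde{\mathcal T}$. Clans transfer because the topology is shared. Your remark that $L=A-W$ ignores the diagonal of $W$ is a correct detail that the paper leaves implicit. The paper itself just plugs in, since Theorem~\ref{thm:stdr_partition} is stated for entries in $[0,1]$ and the lemma supplies weights in $[0,1]$.

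However, the positivity step you flag as the main obstacle is argued incorrectly, and you have the role of the quartet condition reversed.
\begin{itemize}
\item \textbf{Why a zero weight is not excluded by your quartets.} Suppose an internal edge $e=(u,u')$ has weight $0$, and take sibling pairs $(i,j)$ and $(i',j')$ on opposite sides of $e$. The minor the quartet condition constrains is $W((i,i'),(j,j'))$. Its diagonal entries $W(i,j)$ and $W(i',j')$ lie within one side of $e$, and only its off-diagonal entries vanish. Its determinant is therefore $W(i,j)W(i',j')>0$, and nothing is contradicted.
\item \textbf{What those quartets actually give.} Choose $i,j$ in the two subtrees hanging off $u$ and $i',j'$ in the two hanging off $u'$. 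Then the determinant equals $W(i,j)W(i',j')\bigl(1-w(e)^2\bigr)$. So these quartets force $w(e)<1$, not $w(e)>0$. This is exactly how the paper uses the condition in proving Lemma~\ref{lem:weight_assignment}, to obtain $c_Ac_B<1$. Weight $1$ on internal edges is thus excluded by hypothesis, not something to patch afterwards.
\item \textbf{How to exclude zero weights.} Use the rank-1 condition instead. Each edge $e$ splits the leaves into clans $A,B$, and $W(A,B)=w(e)\,w(A,h_A)\,w(B,h_B)^{T}$. This block must have rank exactly one, so it is nonzero, hence $w(e)\neq 0$. This also covers pendant edges.
\end{itemize}
With that repair, the rest of your argument goes through as in the paper.
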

%\noindent Put simply, the rank-1 condition is sufficient for both the SNJ STDR.
\begin{proof}
%The proof for theorem $\ref{thm:snj_stdr}$ is based on two components: (i) 
By lemma \ref{lem:weight_assignment}, if a matrix $W$ satisfies the rank-1 condition, then there is a tree $\tilde \T$ with the same topology as $\T$ such that elements $W(i,j)$ are equal to the product of edge weights of $\tilde \T$, along the path from node $i$ to node $j$. 
Theorem \ref{thm:stdr_partition} (Theorem 4.2 from \cite{aizenbud2021spectral}) proves that partitioning the terminal nodes according to the Fiedler vector of a graph with weights $W$ yields two clans of $\tilde \T$. Since the topology of $\tilde \T$ is identical to that of $\T$, the partition corresponds to two clans of $\T$. 
\end{proof}
%As mentioned, Theorem \ref{thm:snj_stdr} generalizes over Theorem 4.2 from \cite{aizenbud2021spectral}, which has a similar guarantee specifically for the similarity matrix $S$. 

%Lemma 3.1 in \cite{snj} proves that under the assumption in Eq. \eqref{eq:delta_xi}, the similarity matrix $S$ satisfies the rank-1 condition $S \in \WT$. Thus, Theorem \ref{thm:snj_stdr} that proves correctness for the particular case where the graph's weight matrix equals $S$ can be derived from Theorem \ref{thm:stdr_partition}.
 
 \subsection{Asymptotic consistency of the partition step}
\label{sec:key_properties}

We now prove the consistency of the partitioning step of \texttt{SDSR}. 
The proof relies on Theorem  
\ref{thm:snj_stdr} combined with the following lemma,
which shows that under the MSC model, the expected similarity $\E[S^g]$ satisfies the rank-1 condition. 
The proof of the lemma is in Appendix \ref{sec:proof_theorem_partition}.
\begin{lemma}\label{lem:population_mean}
Let $\T$ be a species tree, and let $\T^{g}$ be a gene tree generated according to the MSC model. 
Let $\E[S^{g}]$ be the expected similarity matrix over the distribution of $\T^{g}$. 
Then $\E[S^{g}]$ satisfies the rank-1 condition and the quartet condition with respect to $\T$. %Partitioning the terminal nodes according to the sign pattern of the Fiedler vector of $L$ yields two clans in the species tree.
\end{lemma}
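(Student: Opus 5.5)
The plan is to show that $\E[S^g]$ is a function only of the most recent common ancestor in the species tree, and that this function is strictly monotone along ancestral paths. Both conditions should then follow from the combinatorics of rooted quartets.

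\textbf{Step 1 (reduction to a node function).} Write $c = 4\,\mathrm{tr}(Q) < 0$. By Eq.~\eqref{eq:GTR_similarity}, $S^g_{ij} = e^{c\tau^g_{ij}} = e^{c\tau_{ij}} e^{c\Delta^g_{ij}}$. Under the MSC (conditions 1--2 of Section~\ref{sec:data_model}), the law of $\Delta^g_{ij}$ is determined by the rates $\lambda(e)$ on the path from $h_{ij}$ to the root, together with $\lambda_0$. It therefore depends on $(i,j)$ only through $h_{ij}$. Hence
\[
\E[S^g]_{ij} = f(h_{ij}), \qquad f(h) := e^{c\tau_h}\,\E\big[e^{c\Delta_h}\big],
\]
where $\Delta_h$ is the waiting time to coalescence of two lineages that enter the population above $h$ at time $\tau_h$. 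Clearly $0 < f(h) < 1$.

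\textbf{Step 2 (strict monotonicity).} Let $h'$ be a proper ancestor of $h$. I will couple the two coalescence times $T_h = \tau_h + \Delta_h$ and $T_{h'} = \tau_{h'} + \Delta_{h'}$. Beyond time $\tau_{h'}$ the two processes have identical rates, so run them together from that time. With positive probability, the lineages started at $h$ coalesce strictly before $\tau_{h'}$, which is positive because every $\lambda(e) > 0$. Therefore $T_h \le T_{h'}$ almost surely, with strict inequality on an event of positive probability. Since $c < 0$, this gives $f(h) > f(h')$. I expect this to be the only genuinely probabilistic step, and the place where care is needed: one must use the exact piecewise-constant rate structure of the MSC to justify the coupling.

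\textbf{Step 3 (quartet condition).} Take $Q = (i,j,i',j')$ with unrooted induced topology $ij|i'j'$. The rooted shape of $\T_Q$ is either balanced or a caterpillar.
\begin{itemize}
\item In the balanced case $((i,j),(i',j'))$ with children $a,b$ of the root $r$, the determinant is $f(a)f(b) - f(r)^2 > 0$, by Step 2.
\item In a caterpillar such as $(((i,j),i'),j')$, let $a = h_{ij}$, $b = h_{ii'}$ and $r$ the root of $\T_Q$. The determinant is $f(a)f(r) - f(b)f(r) = f(r)\big(f(a) - f(b)\big) > 0$.
\end{itemize}
The remaining caterpillars follow by symmetry.

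\textbf{Step 4 (rank-1 condition).} Suppose $\{A,B\}$ is a split along an edge $e$. Orienting from the root, one side, say $B$, is the leaf set below a node $u$. Then for $a \in A$ and $b \in B$ we have $h_{ab} = h_{au}$, independent of $b$. So all columns of $\E[S^g](A,B)$ coincide, and the submatrix is rank-1.

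Conversely, suppose $\{A,B\}$ is not a split of $\T$. By the standard characterization of splits, there exist $i,i' \in A$ and $j,j' \in B$ whose induced quartet is $ij|i'j'$. Step 3 then exhibits a nonsingular $2\times 2$ submatrix, so $\E[S^g](A,B)$ has rank at least $2$.

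I would cite or briefly prove the split characterization: if every quartet crossing $A|B$ were compatible with it, one could compatibly build a tree in which $A|B$ is an edge. This combinatorial step is routine compared to Step 2.
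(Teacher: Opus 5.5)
Your proof is correct. It uses the same two ingredients as the paper: the law of $\Delta^g_{ij}$ depends only on $h_{ij}$ (Property~\ref{prop_1}), and the strict dominance of Property~\ref{prop_2}, which your Step~2 coupling re-derives just as the paper's remark does. What differs is how you organize them.

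The paper works relative to the species similarity. On clan blocks it writes $\E[S^g(i,k)]=S(i,k)\,\E[e^{-2\Delta_{ik}}]$, imports rank-1-ness of $S(A,B)$ from Lemma~3.1 of \cite{snj}, and observes that the expectation only rescales columns. For non-clans it factors $S(i,j)S(k,l)$ out of a $2\times 2$ determinant. It handles the quartet condition separately, via the edge-weight structure of a tree similarity.

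You instead reduce $\E[S^g]$ to a node function $f(h_{ij})$ that is strictly decreasing toward the root, after which both conditions are pure combinatorics. This buys several things:
\begin{itemize}
\item The argument is self-contained.
\item One determinant computation serves both the quartet condition and the ``only if'' direction.
\item Caterpillar rootings are covered explicitly. The paper's factorization of $S(i,l)S(k,j)$ through the quartet's MRCA $r$ is valid only when $r$ lies on the quartet's internal edge.
\item You avoid the paper's appeal, in its quartet step, to $\E[S^g]$ being a tree similarity with weights in $(0,1)$. That is the conclusion of Lemma~\ref{lem:weight_assignment}, which itself assumes the quartet condition.
\item Your Step~1 gives the ultrametricity of $\T_E$ (Lemma~\ref{lem:S_and_ESg}) for free.
\end{itemize}
The paper's route, in turn, keeps the explicit relation $\E[S^g(i,k)]=S(i,k)\,\E[e^{-2\Delta_{ik}}]$ to the species similarity. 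It reuses that relation later (Lemma~\ref{lem:entry_noise} and the concentration bound).

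One small repair concerns the split characterization. Your suggested justification (``one could compatibly build a tree in which $A|B$ is an edge'') targets the wrong object: you need $A|B$ to be an edge of $\T$ itself. Argue directly instead. If the minimal subtrees of $\T$ spanning $A$ and $B$ are vertex-disjoint, any edge on the path joining them separates $A$ from $B$. Otherwise they share an internal vertex lying on a path $i$--$i'$ with $i,i'\in A$ and on a path $j$--$j'$ with $j,j'\in B$. Then $\{i,i',j,j'\}$ cannot induce $ii'|jj'$, so after relabeling $j,j'$ it induces $ij|i'j'$. The paper asserts this fact without proof.
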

%For a given tree $\mathcal T$ and a subset $A \subset [m]$ of terminal nodes, we denote by $h_{A}$ the (non-terminal) node that is the most recent common ancestor (MRCA) of all nodes in $A$. 
%When $A$ consists of a pair of nodes $\{i,j\}$, we denote their MRCA by $h_{i,j}$. 
%We denote by $D(i,h_{i,j})$ the distance in a gene tree between terminal node $i$ and non-terminal node $h_{i,j}$. 
%Under the coalescent model, 
%as illustrated in Figure \ref{fig:msc_model}, 
%species $i,j$ may coalesce higher up the tree.
%In the coalescent model, the difference between the gene tree and species tree distances, $D(i,j)-D(i,j)$, is due to a random shift in the coalescence time. 
%That is, there is a non-negative random variable $\Delta_{ij}$ such that
%\begin{equation}\label{eq:Delta_ij_def}
%    \Delta_{i,j} = D(i,h_{i,j})-D(i,h_{i,j}) = D(j,h_{i,j})-D(j,h_{i,j}),    
%\end{equation}
%where $h_{i,j}$ is the MRCA of $i,j$
%in the $k$-th gene tree, which may differ from 
%their MRCA $h_{i,j}$ in the species tree. 
%By the additive property of distance measures from Eq. \eqref{eq:additivity},
%\begin{equation}\label{eq:D_g}
%D(i,j) = D(i,j) + 2\Delta_{i,j}.
%\end{equation}
Recall that $\Delta_{i,j}^g$ denotes the difference between the coalescence time of gene $g$ in the gene tree and the corresponding coalescence time in the species tree. 
For our proof of consistency we use the
following two properties. 

%The second condition of the MSC model implies that the distribution of $\Delta_{i,j}^g$ is fully determined by the edge parameters $\lambda(e)$ along the path from $h_{i,j}$ to the root of the species tree. In particular, this condition implies the following two properties.
%the following two properties on the distribution of $\Delta_{i,j}^g$
%We assume that the distribution of $\Delta^{g}}_{i,j}$ satisfies the following two key properties, as illustrated in Figure \ref{fig:quartet}.
%\begin{enumerate}
\begin{prop}\label{prop_1}
 %[(i)] 
 For any two pairs of terminal nodes, $(i,j)$ and $(i',j')$, with the same MRCA in the species tree,  $h_{i,j} = h_{i',j'}$, the random variables $\Delta_{i,j}^{g}$ and $\Delta_{i',j'}^{g}$
 have the same distribution.
 \end{prop}
%if $h_{i,j} = h_{i',j'}$, the distribution of $\Delta^g$
 \begin{prop}\label{prop_2}
 %\item[(ii)] 
 For any triplet of terminal nodes $\{i,j,l\}$, where $h_{i,j}$ is the MRCA of $i,j$ and $h_{i,l} = h_{j,l}$ is the MRCA of all three nodes, $\Delta_{i,j}^g$ is stochastically dominated by $\tau_{i,l}-\tau_{i,j} + \Delta_{i,l}^{g}$, such that
 \[
 \Pr\big(\tau_{i,l}-\tau_{i,j} + \Delta_{i,l}^{g} \geq t \big) > \Pr(\Delta_{i,j}^{g} \geq t) \: , \quad \forall t > 0.
 \]
 \end{prop}
%\end{enumerate}

\begin{remark}
Note that under the MSC model, both properties 1 and 2 above indeed hold.
Indeed, property 1 holds since
the distribution of $\Delta^g_{i,j}$ and $\Delta^g_{i',j'}$ is determined by the edge parameters $\lambda(e)$ on the edges between the MRCA and the root, which are the same of both pairs.
As for the second property, 
let $A$ be the event that the pair $i,j$ coalesce before $\tau_{i,l}$. In that case $\Delta_{i,j}^g< \tau_{i,l}-\tau_{i,j}$. Under the condition that $A$ did not occur, the MSC implies that the distribution of $\tau_{i,l}-\tau_{i,j} + \Delta_{i,l}^{g} $ is identical to that of $\Delta^g_{i,j}$. Since the probability of $A$ is positive, we conclude that the second property holds.  
\end{remark}

%Property 2 holds since there is a probability that the coalescent event of $i,j$ will occur between $\tau_{i,j}$ and $\tau_{i,j}$, such that   $\Delta_{i,j}^g< \tau_{i,l}-\tau_{i,j}$. If however, $i,j$ did not coalesce before $\tau_{i,l}$ then  
%coalescen gene coalesence time $\tau^g_{i,j}$ will be smaller than $\tau_{i,l}$
The following theorem states the consistency of the partitioning step under properties 1 and 2. 

\begin{theorem}\label{thm:partitioning_consistency}
Let $\{\T^{g}\}_{g=1}^K$ be $K$ gene trees, generated independently from a species tree according to a distribution that satisfies properties 1 and 2 above. Let $\bar{S}= \frac1K \sum_{g=1}^K S^{g}$ be the empirical mean of their (exact) similarity matrices $S^{g}$
given in \eqref{eq:GTR_similarity}. 

Assume that all entries of the Fiedler vector of the Laplacian matrix corresponding to $\mathbb{E}[S^{g}]$ are non-zero, and that the smallest non-zero eigenvalue has multiplicity 1. Then, as $K\to\infty$, 
the partition computed by \texttt{SDSR} yields two clans in the species tree, with probability tending to one.     
\end{theorem}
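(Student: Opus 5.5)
Our plan is a population-level step followed by a perturbation step. Write $\bar W = \E[S^g]$, with unnormalized Laplacian $L_W = \mathrm{diag}(\bar W\mathbf 1) - \bar W$. Lemma~\ref{lem:population_mean} is stated for the MSC, but its proof should use only Properties~\ref{prop_1} and~\ref{prop_2} together with the closed form \eqref{eq:GTR_similarity}. We will therefore invoke it, after checking that its argument goes through under these two properties alone, to get $\bar W \in \WT$. Theorem~\ref{thm:snj_stdr} then says that the sign pattern of the Fiedler vector $v$ of $L_W$ splits the terminal nodes into two clans $C_1, C_2$ of $\T$. What remains is to show that the sign pattern of the Fiedler vector $\hat v$ of $\bar L = \mathrm{diag}(\bar S\mathbf 1)-\bar S$ coincides with that of $v$ (up to a global sign) with probability tending to one.

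For the perturbation step, each entry $S^g_{ij} = e^{4\mathrm{tr}(Q)\tau^g_{ij}}$ lies in $[0,1]$, and the $S^g$ are i.i.d. By the strong law of large numbers, or quantitatively by Hoeffding's inequality and a union bound over the $m^2$ entries, $\|\bar S - \bar W\|_{\max} \to 0$ almost surely. Since the Laplacian map is linear, $\|\bar L - L_W\|_2 \le 2m\|\bar S - \bar W\|_{\max} \to 0$ as well.

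The main obstacle is turning this into convergence of the Fiedler vector, and here the two hypotheses of the theorem are used. Let $0=\lambda_1<\lambda_2<\lambda_3$ be the relevant eigenvalues of $L_W$. Zero is simple because $\bar W$ has strictly positive entries (each $S^g_{ij}>0$), and $\lambda_2<\lambda_3$ because the smallest non-zero eigenvalue is assumed simple. Let $\gamma = \min(\lambda_2, \lambda_3-\lambda_2)>0$. Both $\bar L$ and $L_W$ annihilate $\mathbf 1$, so we may restrict them to $\mathbf 1^\perp$. There, by Weyl's inequality, once $\|\bar L-L_W\|_2<\gamma/2$ the second eigenvalue of $\bar L$ is simple and isolated. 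The Davis--Kahan $\sin\Theta$ theorem then gives $\min_{s=\pm1}\|s\hat v - v\|_2 \le 2\sqrt2\,\|\bar L - L_W\|_2/\gamma \to 0$. Because every entry of $v$ is non-zero, set $\mu=\min_i |v_i|>0$. As soon as $\|s\hat v-v\|_\infty\le\|s\hat v-v\|_2<\mu$, each entry $s\hat v_i$ has the same sign as $v_i$. Hence the sign partition from $\bar L$ equals $\{C_1,C_2\}$.

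Combining the two steps, the event $\|\bar S-\bar W\|_{\max} < \gamma\mu/(8m)$ (constants adjusted) has probability tending to one. Hoeffding makes this explicit: the probability is at least $1-2m^2\exp(-cK\gamma^2\mu^2/m^2)$, and this bound is also what the finite-$K$ guarantee would build on. On this event the partition computed by \texttt{SDSR} (in the variant with the unnormalized Laplacian and sign partitioning described at the start of this section) consists of two clans of $\T$, which proves the claim.
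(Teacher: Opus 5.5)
Your proposal is correct and follows the paper's proof essentially step for step. At the population level you combine Lemma~\ref{lem:population_mean} with Theorem~\ref{thm:snj_stdr}, and then you use the law of large numbers, the Davis--Kahan bound and the non-vanishing entries of the population Fiedler vector to carry the sign pattern over to the Laplacian of $\bar S$. Your version is slightly more careful than the paper's in three places: you take the eigengap from the population Laplacian, which is what Davis--Kahan requires (the paper writes it with the eigenvalues of $\bar L$); you handle the global sign ambiguity explicitly; and you note that Lemma~\ref{lem:population_mean} is needed under Properties~\ref{prop_1} and~\ref{prop_2} rather than the full MSC.
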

Theorem \ref{thm:partitioning_consistency} shows that the partitioning scheme yields two clans 
in the limit as the number of genes $K\to\infty$. 
For simplicity, the theorem is stated for the case of a single (top) partition. It can, however, be applied recursively to prove consistency in multiple partitions.

\subsection{Guarantee for correct partition with a finite number of genes}\label{subsec:finite_gene_garantee}

In the previous section, we discussed the asymptotic consistency of the partition step, 
as the number of genes tends to infinity. 
In this section, we present in Theorem \ref{thm:finite_gene_guarantee_v2} a guarantee for correct partitioning, which holds with high probability for a finite number of genes.
For this analysis, we make the assumption that the species tree is ultrametric, that is, all its terminal nodes are at the same distance from the root. 
This assumption holds, for example, under the MSC+GTR model. 
It implies that the smallest non-zero eigenvalue of the relevant graph Laplacian matrix has multiplicity one, see  \cite[Lemma 7]{balakrishnan2011noise}.

To present this result, we first introduce some definitions. 
We start by defining $\eta$, a measure of the \textit{imbalance} at the root of the species tree $\cal T$. Specifically, let $h_R$
and $h_L$ be the two children of the root $r$, and let $C_L$
and $C_R$ be the observed species that are descendants of $h_L$ and $h_R$ respectively.
The imbalance parameter is 
$\eta = \frac{\max\{|C_L|,|C_R| \} }{ \min\{ |C_L|,|C_R| \} }$. 
%
%For any internal node $h$, let $C_L(h),C_R(h)$ be 
%number of observed species that are descendants of its left and right child in the species tree $\mathcal T$, respectively. The imbalance parameter is defined as follows,
%    \(
%    \eta = \max_h \left\{ \frac{C_L(h)}{C_R(h)}, \frac{C_R(h)}{C_L(h)} \right\}.
 %   \)
Next, we define $\mathcal{T}_E$ to be the tree with the same topology as $\mathcal{T}$, whose similarity matrix is $\mathbb{E}[S^{g}]$. The existence of this tree follows from 
Lemma~\ref{lem:weight_assignment}
combined with the fact that $\mathbb{E}[S^{g}] \in \mathcal{W_T}$,
as stated in Lemma \ref{lem:population_mean}.
We denote $S_{\min}=\min_{i,j\in [m]}\E [S^{g}(i,j)]$, where the minimum is over all pairs $i,j$ of terminal nodes.
%In analogy to the definition of $\delta$ in \eqref{eq:delta_xi}, we define $\delta_E$ as a lower bound on the similarities between adjacent nodes in $\mathcal{T}_E$, namely $S_E(i,j)\geq \delta_E$ for all adjacent nodes $i,j\in\mathcal{T}_E$. 
Next, we introduce a 
\textit{separation parameter} $s_r$, which quantifies the separation between the two clans of the first partition of $\mathcal{T}_E$. 
As above, let $h_L,h_R$ be the two children of the root. 
Then, $s_r=\max\{S_E(r,h_L), S_E(r,h_R)\}$, 
where $S_E(h,h')$ is the similarity
between two nodes $h,h'\in\mathcal{T}_E$. 
%Recall the definition of $\delta$ from subsection \ref{subsec:rank_1}. Analogously, let $\delta_E$ be the constant satisfies $0<\delta_E\leq S_E(i,j)$ for all pairs $i,j$ of adjacent nodes in $\mathcal T_E$. 
We are now ready to present our high-probability guarantee for correctness of the partitioning step.

\begin{theorem}\label{thm:finite_gene_guarantee_v2}
    Let $\mathcal T$ be an ultrametric species tree with $m$ terminal nodes and imbalance parameter $\eta$. 
    Let $h_L,h_R$ be the two children of the root $r$ of $\mathcal T$ and let $C_L, C_R$ be their observed descendant species, respectively.
    Let ${\mathcal{T}^{g}}_{g=1}^K$ be $K$ i.i.d. gene trees generated according to the MSC+GTR model,  
    and let $\{S^{g}\}_{g=1}^K$ be their exact similarity matrices. 
    Assume that for some $\epsilon>0$, the number of genes $K$ satisfies 
\begin{equation}\label{eq:RHS_bound}
    %m^{C_3 \log(1/\delta_E)}
    K\geq
    \begin{cases}
    c_1 \cdot \frac{\eta^{3}}{(1 - s_r^2)^2S^2_{\min}} \cdot \log(\tfrac{m}{\epsilon}) & \text{if } \xi \leq \tfrac{1}{2}, \\
    c_2 \cdot \frac{\eta^{3}}{(1 - s_r^2)^2S^2_{\min}} \cdot m^{2 + 2\log_2 \xi}\cdot\log(\tfrac{m}{\epsilon}) & \text{if } \xi > \tfrac{1}{2}.
    \end{cases}
    \end{equation}
    where $c_1, c_2$ are universal constants. 
    Then, with probability larger than $1-\epsilon$, partitioning based on the Fiedler vector corresponding to $\bar S=\frac1K \sum_g S^{g}$ yields the two clans $C_L$ and $C_R$ of the species tree $\mathcal T$.
\end{theorem}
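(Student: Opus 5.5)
The argument is a perturbation one: a Davis--Kahan bound on the Fiedler vector combined with matrix concentration for $\bar S$. Write $W=\E[S^g]$, let $L_W$ be its unnormalized Laplacian and $\bar L$ the Laplacian of $\bar S$. By Lemma~\ref{lem:population_mean} and Lemma~\ref{lem:weight_assignment}, $W$ is the similarity matrix of $\mathcal T_E$, which has the topology of $\mathcal T$. By Theorem~\ref{thm:snj_stdr}, the sign pattern of the Fiedler vector $u$ of $L_W$ separates $C_L$ from $C_R$; ultrametricity, via \cite[Lemma 7]{balakrishnan2011noise}, gives simplicity of $\lambda_2(L_W)$.

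It then suffices to show that, with probability at least $1-\epsilon$, the Fiedler vector $\hat u$ of $\bar L$ (after fixing its sign) satisfies
\[
\|\hat u-u\|_\infty<\min_i|u_i|.
\]
Three ingredients are needed.

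\textbf{(a) Entry sizes and eigengap.} In the ultrametric tree $\mathcal T_E$, the root symmetry gives the explicit form $u_i=a$ on $C_L$ and $u_i=-b$ on $C_R$, up to lower-order corrections. Here $a|C_L|=b|C_R|$ by orthogonality to $\mathbf 1$, so $\min_i|u_i|\gtrsim 1/\sqrt{\eta m}$. The gap $\lambda_3-\lambda_2$ should be bounded below by $m S_{\min}(1-s_r^2)$ up to factors of $\eta$. The factor $(1-s_r^2)$ comes from comparing the cross-clan weights, of order $S_E(h_L,r)S_E(r,h_R)\,\cdot$(within-clan factors), with the within-clan weights. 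I would obtain this lower bound by restricting $L_W$ to vectors orthogonal to $\mathbf 1$ and to the clan indicator vector.

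\textbf{(b) Concentration.} The matrices $S^g$ are i.i.d., with entries in $[0,1]$ and mean $W$. By Hoeffding's inequality and a union bound over $m^2$ entries, $|\bar S_{ij}-W_{ij}|\le\sqrt{\log(m/\epsilon)/K}$ for all $i,j$. This gives $\|\bar L-L_W\|_\infty$ and $\|\bar L-L_W\|_2$ of order $m\sqrt{\log(m/\epsilon)/K}$.

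\textbf{(c) From $\ell_2$ to $\ell_\infty$.} Davis--Kahan gives only $\|\hat u-u\|_2\lesssim\|\bar L-L_W\|/\mathrm{gap}$, which is not enough. I would use a leave-one-out or entrywise eigenvector argument instead, writing
\[
\hat u_i=\frac{(\bar L\hat u)_i-\sum_j(\text{off-diag})\,\hat u_j}{\bar L_{ii}}.
\]
Each coordinate is then an average over $\Theta(m)$ terms, so entrywise Hoeffding bounds transfer to an $\ell_\infty$ error of order
\[
\frac{\sqrt{\log(m/\epsilon)/K}}{S_{\min}(1-s_r^2)}\cdot\frac{\mathrm{poly}(\eta)}{\sqrt m}.
\]
Requiring this to be below $1/\sqrt{\eta m}$ yields the first case of \eqref{eq:RHS_bound}, with the powers of $\eta$ collected into $\eta^3$.

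\textbf{Main obstacle: the eigengap when $\xi>1/2$.} If the maximal edge similarity exceeds $1/2$, deep subtrees have strongly correlated rows. The third eigenvalue, which corresponds to a split inside $C_L$ or $C_R$, can then approach $\lambda_2$. Equivalently, the Fiedler entries are no longer near-constant per clan, so the relevant gap shrinks polynomially in $m$. I would bound this by summing the geometric series $\sum_k (2\xi)^k$ over the depth $\log_2 m$ of a balanced subtree. This produces a loss of $(2\xi)^{2\log_2 m}=m^{2+2\log_2\xi}$ in the squared ratio of perturbation to gap, which explains the second case of \eqref{eq:RHS_bound}. When $\xi\le1/2$, the same series is bounded by a constant and no such loss occurs. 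Making this depth-dependent gap estimate rigorous is the delicate step; I would attempt it through the recursive block structure of $L_W$ induced by the clans of $\mathcal T_E$.
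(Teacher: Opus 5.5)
Your proposal has a genuine gap. The case $\xi>1/2$ is left open, and the mechanism you propose for it is wrong: the eigengap does not depend on $\xi$. Your step (a) is essentially right, and it can be made exact.

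\begin{itemize}
\item Since $\mathcal T$ is ultrametric and the $\Delta^g_{ij}$, $i\in C_L$, $j\in C_R$, are identically distributed (Property~\ref{prop_1}), every cross-root entry of $W=\E[S^g]$ equals $S_{\min}$. So $\mathcal T_E$ is ultrametric (Lemma~\ref{lem:S_and_ESg}); you use this without proof.
\item Then $\mathbf 1_{C_L}/|C_L|-\mathbf 1_{C_R}/|C_R|$ is an \emph{exact} eigenvector of $L_W$ with eigenvalue $mS_{\min}$. There are no lower-order corrections.
\item On vectors summing to zero on each clan, $L_W$ splits into $L_{C_L}+S_{\min}|C_R|I$ and $L_{C_R}+S_{\min}|C_L|I$. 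All within-clan weights are at least $S_{\min}/s_r^2$.
\item Hence $\min(\lambda_2,\lambda_3-\lambda_2)\ge mS_{\min}\min\{1,\tfrac{1}{1+\eta}(s_r^{-2}-1)\}$ and $\min_i|u_i|=1/\sqrt{\eta m}$, for every $\xi$.
\end{itemize}

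This computation is also what identifies the two clans as $C_L,C_R$; Theorem~\ref{thm:snj_stdr} alone only gives some pair of clans. Your obstacle paragraph therefore contradicts your own (a). The quantity $(2\xi)^{\log_2 m}=m^{1+\log_2\xi}$ you found is real. However, it bounds the row sums of the species matrix $S$, hence $\|S\|$ (Lemma~\ref{lem:spec_norm_similarity_bnd}). It enters through concentration, not through the gap.

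The missing idea is to replace entrywise Hoeffding by the matrix Bernstein inequality.
\begin{itemize}
\item Use $0\le S^g_{ij}\le S_{ij}$ and monotonicity of the spectral norm on nonnegative matrices. This gives $\|S^g-W\|\le\|S\|$ and $\|\E(S^g-W)^2\|\le\|S\|^2/4$.
\item Bernstein then yields $\|\bar S-W\|\lesssim\|S\|\sqrt{\log(m/\epsilon)/K}$, instead of your $m\sqrt{\log(m/\epsilon)/K}$. Your bound throws away a factor $m/\|S\|$, which is why $\ell_2$ Davis--Kahan looked insufficient.
\item With $\|\bar L-L_W\|\le(\sqrt m+1)\|\bar S-W\|$, plain Davis--Kahan together with $\|\cdot\|_\infty\le\|\cdot\|_2$ already preserves the signs whenever $\|\bar S-W\|\le\Delta(\mathcal T_E)$ from \eqref{eq:allowed_error_quantity}.
\item Inserting $\|S\|\le 3/2$, or $\|S\|\lesssim m^{1+\log_2\xi}$, gives both cases of \eqref{eq:RHS_bound}.
\end{itemize}
This is the paper's route, through Lemmas~\ref{lem:the_allowed_error} and~\ref{lem:concent_bound}.

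Your step (c) is therefore unnecessary, and as written it is also unproven. The vector $\hat u$ depends on the same noise, and the entries of each $S^g$ are strongly dependent (one tree per gene). So neither the ``average over $\Theta(m)$ terms'' heuristic nor a standard leave-one-out argument applies directly. Note also that (a)--(c), as you state them, never involve $\xi$ at all. That is further evidence that your diagnosis of where $\xi$ enters is misplaced.
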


% Discussion
Let us make several remarks regarding this theorem. 

%\textcolor{red}{Ofer: I think the next remark can be removed}
\begin{remark}
    Theorem \ref{thm:partitioning_consistency}
    guarantees a correct partition into unspecified two clans. In contrast, 
    Theorem \ref{thm:finite_gene_guarantee_v2}
    specifies that the partition is to the two clans induced by removal of the root of the tree. This more specific result is due to the assumption that the species tree is ultrametric.       
\end{remark}
\begin{remark}
    As expected, as the separation parameter $s_r \to 1$, namely as the clans become poorly separated, the number of genes required to guarantee a correct partition increases to infinity. 
\end{remark}

Theorem \ref{thm:finite_gene_guarantee_v2} provides a guarantee for the partitioning step of \texttt{SDSR}. 
Given that the partition is accurate, the precision of the reconstruction and merging steps depend on the outcome of the user-defined subroutine.
The following proposition extends Theorem \ref{thm:finite_gene_guarantee_v2} by providing a theoretical guarantee for the full 
\texttt{SDSR} pipeline.  
\begin{proposition}\label{prop:sdsr}
Assume that the assumptions of Theorem \ref{thm:finite_gene_guarantee_v2} hold, and that the number of gene alignments $K$ satisfies Eq. \eqref{eq:RHS_bound}. If, for both subsets $\tilde C_1$ and $\tilde C_2$ the subroutine computes a tree topology that matches the subtree of 
$\T$ induced by the respective subset, then 
\texttt{SDSR} returns a tree with the same topology as $\T$, with probability $1-\epsilon$.
\end{proposition}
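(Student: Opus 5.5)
The plan is to condition on the high-probability event of Theorem \ref{thm:finite_gene_guarantee_v2} and then argue, deterministically, that the outgroup merging step rebuilds $\T$ exactly. Let $\mathcal E$ be the event that the sign-pattern partition of the Fiedler vector of $\bar S$ equals $\{C_L,C_R\}$. By Theorem \ref{thm:finite_gene_guarantee_v2}, $\Pr(\mathcal E)>1-\epsilon$ under Eq. \eqref{eq:RHS_bound}. So it suffices to show that on $\mathcal E$, given the assumed correctness of the subroutine, the output $\hat\T$ has the topology of $\T$. Throughout, set $C_1=C_L$ and $C_2=C_R$.

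On $\mathcal E$, consider the unrooted species tree, where the root $r$ is suppressed and $h_L,h_R$ are joined by a single edge $e^*$. Removing $e^*$ separates $C_1$ from $C_2$, so both sets are clans. The first structural step is to identify the induced subtree on $\tilde C_1=C_1\cup\{O_2\}$ for any choice $O_2\in C_2$. The path from $O_2$ to any leaf of $C_1$ passes through $e^*$ and enters the clan at $h_L$. Hence the subtree of $\T$ induced by $\tilde C_1$ is the clan subtree $\T_L$ on $C_1$, with a pendant leaf $O_2$ attached by one edge to $h_L$; after suppressing degree-two vertices, $h_L$ is the root of the clan. By symmetry, the induced subtree on $\tilde C_2$ is $\T_R$ with $O_1$ attached at $h_R$. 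I will record this as a small lemma: if $C$ is a clan with root $h_C$ and $o\notin C$, then the tree induced on $C\cup\{o\}$ is the rooted clan tree with $o$ adjacent to $h_C$. The proof is immediate from path uniqueness in trees.

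By the hypothesis on the subroutine, $\tilde\T_1$ and $\tilde\T_2$ match these induced subtrees topologically. Therefore the node $h_1$ adjacent to $O_2$ in $\tilde\T_1$ corresponds to $h_L$, and $h_2$ corresponds to $h_R$. Deleting $O_2$ and its pendant edge leaves exactly the unrooted topology of $\T_L$ with distinguished vertex $h_L$, and likewise for $\T_R$. Adding the edge $h_1h_2$ then reproduces $\T_L\cup\T_R\cup\{e^*\}$, which is the unrooted topology of $\T$. The result holds with probability at least $\Pr(\mathcal E)>1-\epsilon$.

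The main obstacle is bookkeeping rather than new analysis, in two places. First, the degree-two vertex created when the outgroup is removed, and the degenerate case where a side has a single species, must be handled so that the identification of $h_1$ with $h_L$ is well defined. I would note that the induced tree is binary with $h_L$ of degree three, so $h_1$ remains internal unless $|C_1|=1$, which is trivial. Second, if the subroutine is applied recursively when $|\tilde C_i|>\tau$, the hypothesis is read as saying that the recursive call returns the correct induced topology. The argument above then applies unchanged, since it only uses the correctness of $\tilde\T_1$ and $\tilde\T_2$.
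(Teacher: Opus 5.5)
Your proposal is correct and follows essentially the same approach as the paper, which gives no written proof and justifies the proposition only by the remark that a correct partition together with correctly recovered subtrees makes the outgroup merge exact. Your argument has that same structure: condition on the event from Theorem~\ref{thm:finite_gene_guarantee_v2} that the sign-pattern partition of $\bar S$ is $\{C_L,C_R\}$, note that the node adjacent to the outgroup in each correctly recovered induced subtree is $h_L$ (resp.\ $h_R$) for any choice of outgroup, and conclude that reconnecting them restores the unrooted topology of $\T$.
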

Similarly to the partitioning guarantees, Proposition \ref{prop:sdsr} is written for the case of a single partition,  but can be applied recursively to show a similar result for further partitions. 
%Instead, assume a noisy similarity $\hat{S}_g$ whose level of noise is determined by the length of the sequences, $n$, from which $\hat{S}_g$ is estimated. 

\section{Simulations}\label{sec:experiments}

We present an empirical evaluation of \texttt{SDSR} on multiple simulated datasets with ILS and HGT events. 
The datasets and the rates of their HGT and ILS events are described in Section \ref{sec:data_description}. 
As the success of \texttt{SDSR} crucially depends on accurate partitioning of the species into clans, 
in Section \ref{sec:partition} we evaluate the accuracy of this partitioning step. 
In Section \ref{sec:testing_recunstruction_and_runtime}
we evaluate the reconstruction accuracy and runtime of \texttt{SDSR}.
For comparison, we use the following methods:
(i) the \texttt{RAxML} maximum likelihood package \cite{stamatakis2014raxml} applied to a concatenation of the gene alignment matrices (denoted as \texttt{CA-ML}), and (ii) \texttt{ASTRAL-IV} \cite{zhang2018astral, zhang2022weighting}.
We compare the outcome of \texttt{CA-ML} and \texttt{ASTRAL-IV} to the results obtained via \texttt{SDSR} with these methods as subroutines for recovering small trees. 
Finally, on the 10,000 species repository we compare our results to \texttt{uDance} \cite{balaban2024generation}, a state-of-the-art divide-and-conquer approach for species tree reconstruction \footnote{A python implementation of SDSR is available at \url{https://github.com/reshefo/sdsr} }. 

\subsection{Datasets }\label{sec:data_description}

We considered the following three publicly available repositories. Each repository includes several datasets with a simulated species tree, a set of corresponding gene trees, and their gene alignments. These repositories have been used to evaluate species tree reconstruction methods in several previous studies.  

\textbf{$50$-species datasets.} This repository 
consists of $100$ 
datasets from \cite{tandy1}, each containing $1,000$ gene alignments of $50$ species, with sequence length of $997$bp for each gene. 
The first $50$ datasets include only ILS events, with a mean nRF distance of $0.3$ between the true gene trees and the true species tree. The other $50$ datasets include, in addition to ILS events,  HGT events at an average rate of $20$ events per gene.

\textbf{$200$-species datasets.} 
This repository consists of $50$ datasets from \cite{zhang2023caster}, each containing a simulated concatenated genome alignment of $200$ species, where the simulations incorporate ILS events. As in \cite{zhang2023caster}, the sequence is divided into genes of size $2500$bp.

\textbf{$10{,}000$-species datasets.} 
This repository consists of $10$ datasets from \cite{balaban2024generation}, each containing $500$ gene alignments. The gene sequences ranges in length from $169$ to $869$ base pairs, with an average of $406$ base pairs.
In this dataset, ILS and HGT rates were set such that the mean nRF distance between the species tree and gene trees was $0.03$ due to ILS only and $0.44$ when adding the HGT events.

\subsection{Partition Accuracy}
\label{sec:partition}

First, we evaluate the accuracy of the partitioning step of \texttt{SDSR} as a function of the number of gene alignments $K$ for the $10{,}000$-species datasets.
The imbalance parameter $\beta$ was set to $0.01$ and the threshold parameter $\tau$ to $500$ species.
We compare the estimated partition to the most similar clan in the ground-truth species tree.
Figure \ref{fig:10000_Species_Datasets_Partition_based_similarity} shows the Rand Index score as a function of the number of genes. The left panel shows the score for the first partition only, and the right shows the average score for all partitions. 
The points mark the median of the $10$ datasets, and the line is the range between the $25\%$ and $75\%$ quantiles.
The results show that as the number of genes increases, the estimated partitions approach the ground-truth partitions. A similar experiment was done on the $50$-species and $200$-species repositories. The results, presented in Appendix \ref{sec:200_50_partitions}, show that $K=50$ alignments were sufficient for accurate partitioning for both datasets. In addition, we compared the partitioning accuracy to a variation of our approach where the average is taken over the distance matrices instead of the similarity matrices. The results, presented in Appendix \ref{sec:distance_partition}, show that this approach yields partitions with considerably more errors than \texttt{SDSR}.

\begin{figure}[t]
\centering
\makebox[\textwidth][c]{%
 \includegraphics[width=0.7\textwidth]{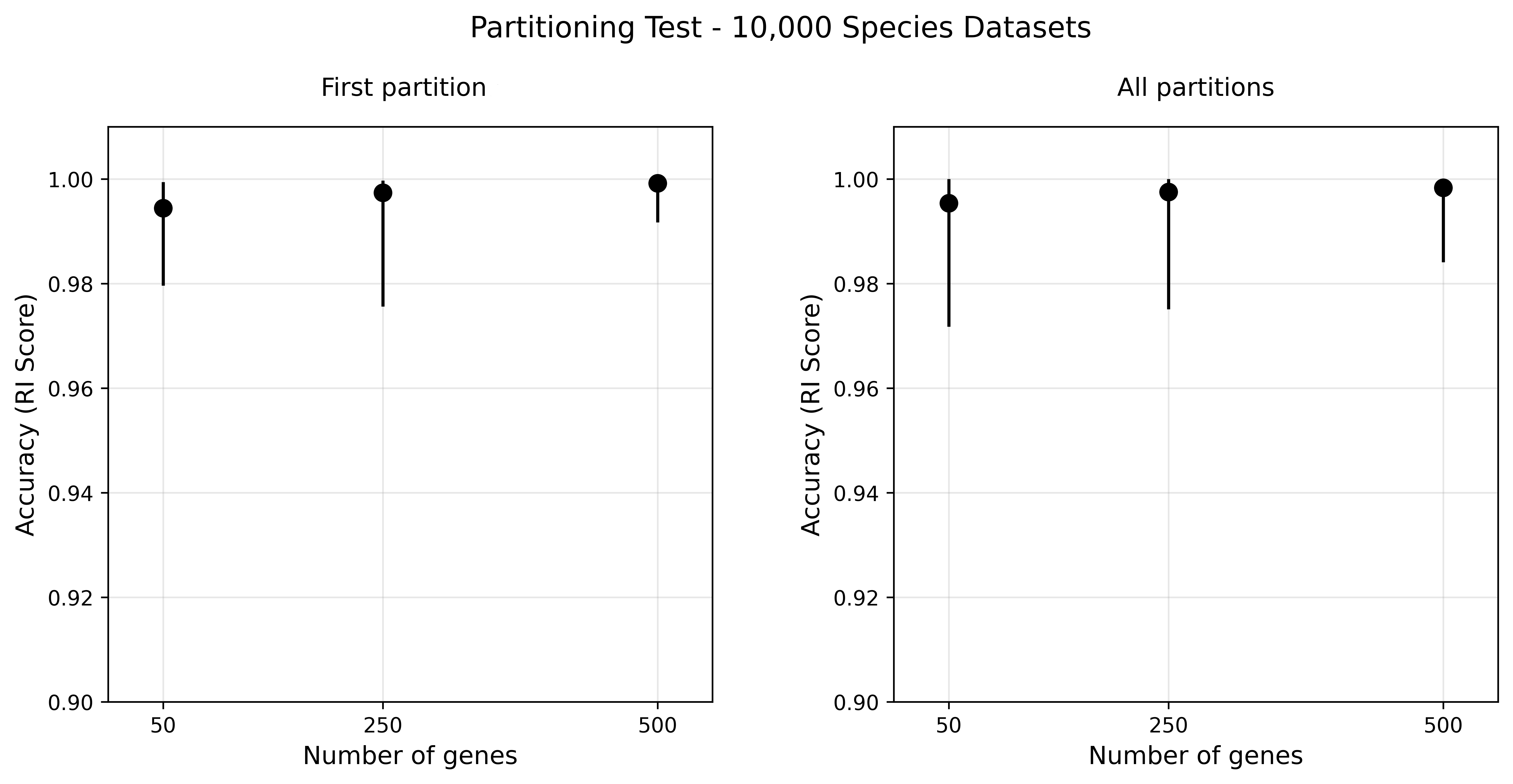}%
}
\caption{Partition accuracy as a function of the number of genes, on the $10{,}000$-species datasets. Results shown for the first partition (left) and all partitions (right). The accuracy of the first partition is measured by the Rand Index similarity score 
to the most similar partition in the ground-truth species tree. 
For internal partitions, we compared the obtained partition to the most similar partition in the corresponding subtree.}
\label{fig:10000_Species_Datasets_Partition_based_similarity}
\end{figure}

\begin{figure}[h]
 \centering
\includegraphics[width=0.75\textwidth]{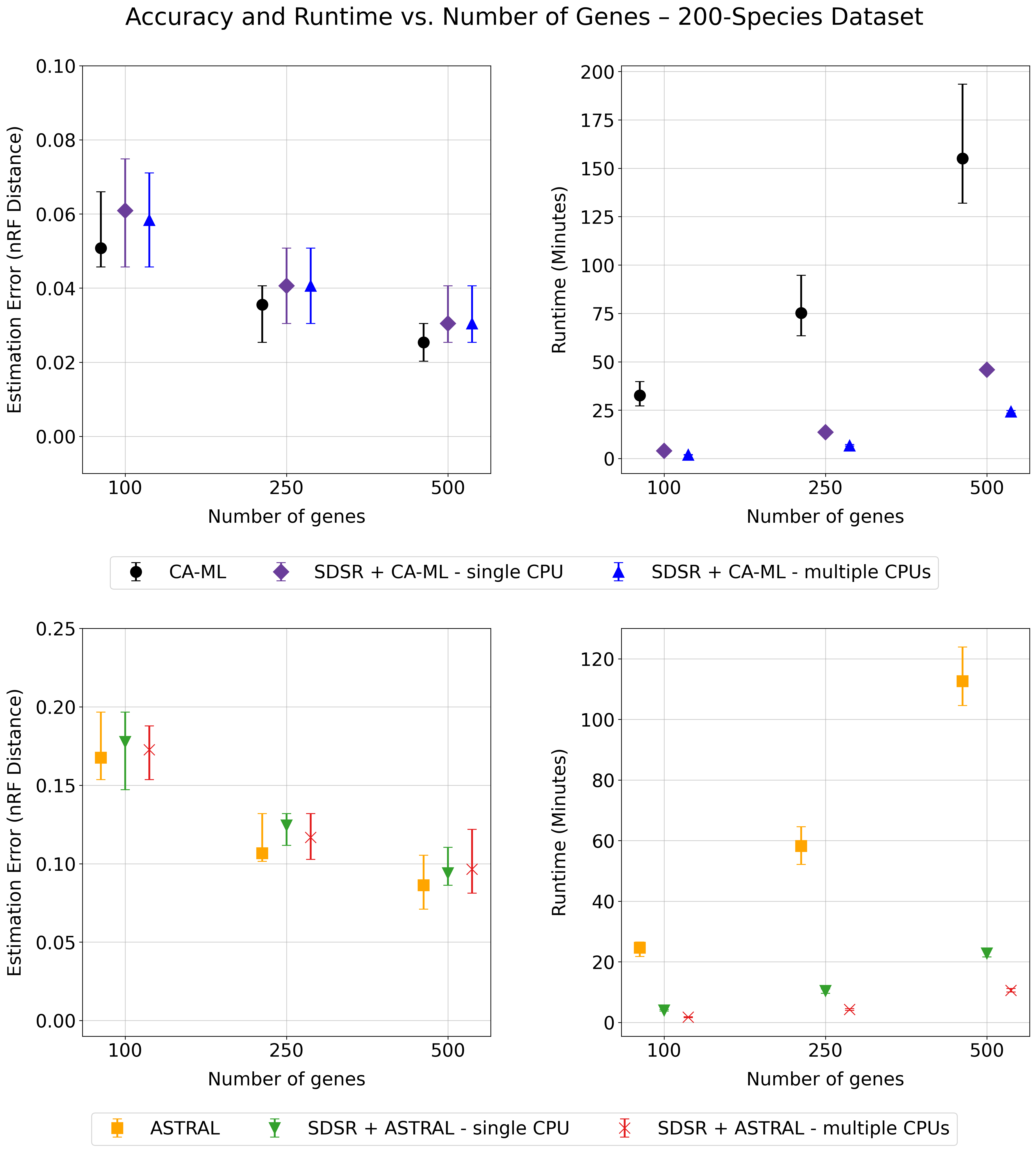}
 \caption{
 Accuracy and runtime as a function of the number of genes for the $200$-species datasets. The upper panel presents the nRF distance between the estimated and ground truth species tree of \texttt{CA-ML} and \texttt{SDSR} using \texttt{CA-ML} as a subroutine. The upper right panel presents the runtime. The bottom panels present the accuracy and runtime of \texttt{ASTRAL} and \texttt{SDSR} using \texttt{ASTRAL} as a subroutine.
 }
\label{fig:200_Species_Datasets_caml_verses_sdsr}
\end{figure}

\subsection{Reconstruction accuracy and runtime}\label{sec:testing_recunstruction_and_runtime}

We compared the accuracy and runtime of \texttt{ASTRAL} and \texttt{CA-ML} when used as stand-alone methods and when used as subroutines of \texttt{SDSR}, as a function of the number of genes.
For \texttt{SDSR}, we set the imbalance parameter $\beta=0.05$ and threshold to $\tau=50$ species. The partition and reconstruction steps were applied to $100$, $250$, and $500$ genes. 
Figure \ref{fig:200_Species_Datasets_caml_verses_sdsr}
shows the results for the 200-species repository.
The figure shows the 
median accuracy and the range between the $25\%$ and $75\%$ quantiles of the $50$ datasets in the repository. The upper row compares (i) \texttt{CA-ML}, (ii) \texttt{SDSR} + \texttt{CA-ML} on a single CPU, and (iii) \texttt{SDSR} + \texttt{CA-ML} on multiple CPUs. The bottom row shows a similar comparison with \texttt{ASTRAL}.
Following the original study from which the $200$-species datasets were obtained \cite{zhang2023caster}, the input to \texttt{ASTRAL} was slightly different than the input to \texttt{CA-ML}.
For \texttt{ASTRAL}, each gene alignment consisted of $500$bp selected out of a $2,500$bp window. For \texttt{CA-ML}, the input consisted of the entire sequences.

As seen in the figure, \texttt{SDSR} combined with either \texttt{CA-ML} or with \texttt{ASTRAL}, achieves similar accuracy as the respective method run as a stand-alone method to recover the full tree, while significantly reducing runtime. 
For example, with $100$ genes, \texttt{CA-ML} and \texttt{SDSR} + \texttt{CA-ML} achieved similar accuracy. We note that the differences in accuracy between applying \texttt{SDSR} on a single and multiple CPUs are due to randomness in the RAxML method.

The runtime of \texttt{SDSR} + \texttt{CA-ML} was $8$ times faster than \texttt{CA-ML} alone on a single CPU, and $17$ times faster when all subtrees are recovered in parallel on separate  CPUs.
The runtime results are consistent with the analysis in Section \ref{sec:complexity}. For a single CPU, the theoretical speedup of \texttt{SDSR} when partitioning the species to equally sized subsets is $m/\tau = 200/50 = 4$. In practice, the partition is to subsets that may be smaller than $50$, which increases the speedup. For multiple CPUs, the reconstruction of the different trees is parallelized. 
Hence, the theoretical speedup is the ratio between the complexity of recovering the full tree and the largest subtree, which is $m^2/\tau^2 = 16$. This is similar to the observed speedup of $17$. For $500$ genes, the runtime speedup is slightly reduced to $11$, due to the additional time required for computing the similarity matrices.

In the Appendix section, we show the results  of additional experiments that evaluate the effect of the threshold $\tau$.  First, we tested the accuracy of \texttt{SDSR} with a single partition. The results, presented in Appendix \ref{sec:partition_experiment_single_iteretion}, demonstrate that \texttt{ASTRAL} or \texttt{CA-ML} combined with \texttt{SDSR} achieves similar accuracy to the original method with reduced runtime. Next, we evaluate the effect of increasing the number of partitions by lowering the threshold parameter $\tau$. The results, presented in Appendix \ref{sec:multiple_partitions}, show that using a lower threshold substantially reduces runtime while maintaining similar accuracy. 

Next, we evaluate \texttt{SDSR} on larger datasets with $10,000$ species. We compare our approach to \texttt{uDance} \cite{balaban2024generation}, a state-of-the-art divide-and-conquer approach developed for large datasets. Both methods were applied in parallel on $32$ CPUs. Additionally, both methods used \texttt{ASTRAL} to reconstruct the subtrees. 
We set the imbalance parameter of SDSR to $\beta = 0.001$ and the partitioning threshold to $\tau = 1,000$. 
We note that for \texttt{SDSR}, partitioning was performed using all $500$ genes across all runs, while reconstruction was applied with $4$, $16$, $64$ and $256$ genes. We also note that \texttt{uDance} excludes some of the species from the output tree. The average number of species in the tree computed by \texttt{uDance} was around 6,500 species for $K=4$ genes and around 9,900 for $K=256$. The results show that for settings where the reconstruction is done with less than $K=64$ genes, \texttt{SDSR} computes more accurate trees
than \texttt{uDance}. However, \texttt{uDance} is more accurate for cases with larger number of genes. 

\begin{figure}[t]
 \centering
 \includegraphics[width=0.95\textwidth]{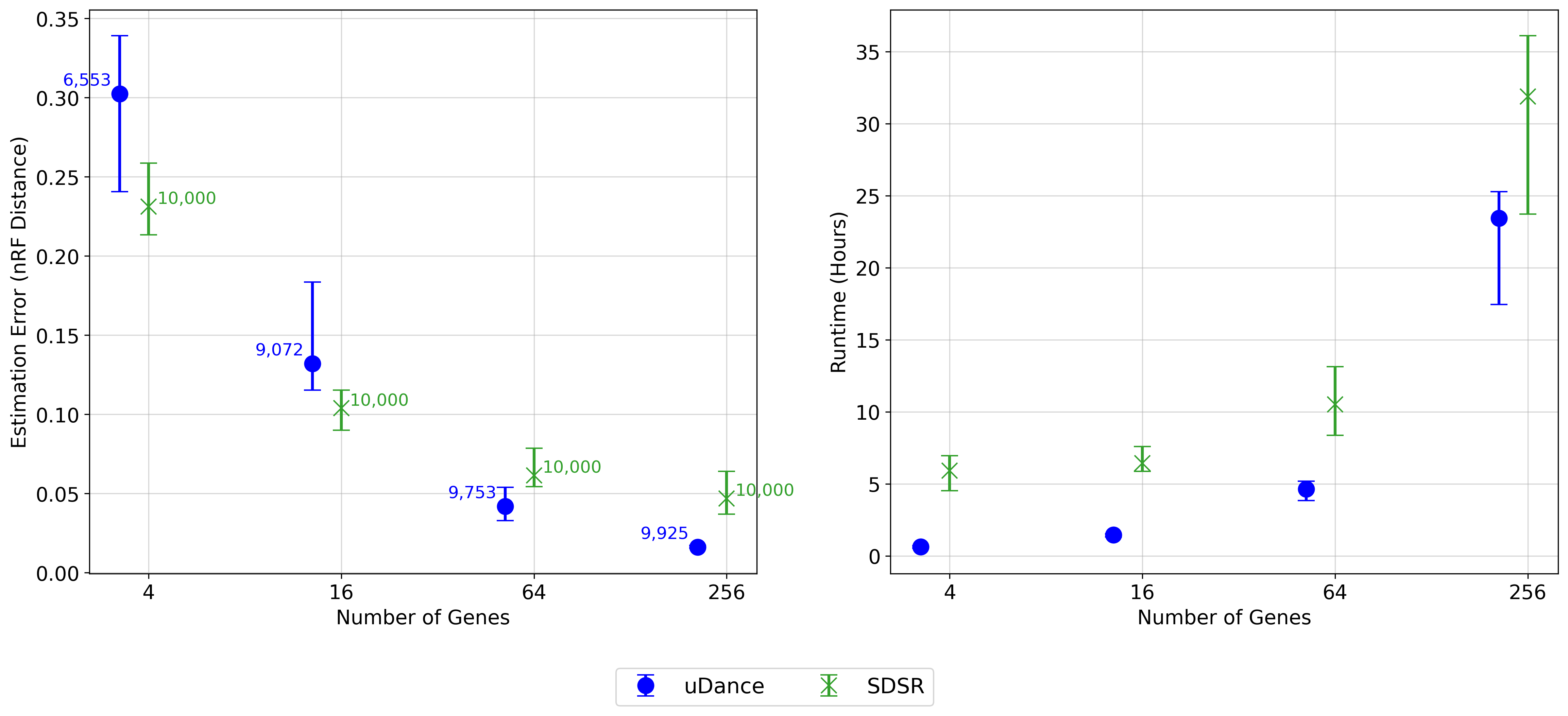}
 \caption{Comparison to \texttt{uDance} \cite{balaban2024generation} on the $10{,}000$-species datasets. The left panel shows the nRF distance between the estimated and ground-truth species tree as a function of the number of genes. Each marker represents the median runtime and nRF distance. The label on each marker denotes the median number of species in the reconstructed trees. 
 The right panel shows the runtime of the two methods.
 }
\label{fig:10000_species_more_data_same_time}
\end{figure}

\section{Discussion}
    \label{sec:discussion}

This paper introduces \texttt{SDSR}, a  {discordance-aware method} for species tree reconstruction. We derived theoretical guarantees under the MSC model for gene trees, and the GTR model for sequence alignments. {\texttt{SDSR} can boost the scalability of many base species tree reconstruction methods. The algorithm reduces runtime even when run on a single CPU and can also be parallelized to leverage multiple CPUs.}
As shown in the simulations above, the time reduction achieved by our method becomes more substantial as the number of species increases. 

There are several directions for future research. One potential approach to improve our method is by 
using a weighted average of the gene Laplacian matrices in the partitioning step, rather than assigning equal weight to each gene. Ideally, the weight assigned to each gene would reflect its reliability, specifically, how closely its gene tree matches the species tree. Since the discrepancy between the gene trees and the species tree is generally unknown, we plan to adapt unsupervised ensemble learning methods \cite{jaffe2015estimating, jaffe2016unsupervised} that estimate reliability based on correlations among gene trees.

Another potential source of error in our approach may arise in the merging step. In the current implementation of the \texttt{SDSR} algorithm, a single outgroup species is used to detect the root of each subtree, after which an edge between the estimated roots connects the two subtrees. However, relying on a single outgroup may be suboptimal, as it ignores potentially informative phylogenetic signals present in additional outgroups. For finding the root of the unrooted tree, it has
become common practice to use more than a single outgroup when such are available \cite{barriel1998rooting, kinene2016rooting}. In future research, we aim to add multiple outgroups to each subtree. Then, develop a machine-learning-based method for identifying the most accurate merging location among those suggested by different outgroups.

In terms of theoretical analysis, we can improve our guarantees by taking into account estimation errors in the similarity matrix. These directions can significantly increase the accuracy of our divide-and-conquer approach, enabling accurate recovery of larger trees without the need for extensive computational resources.

\printbibliography

\newpage

\appendix
\section{Complexity of partitioning and merging steps}
    \label{sec:complexity_appendix}
 
The complexity of the partitioning and merging steps is given in its recurrence form in Sec. \ref{sec:complexity}. The following lemma proves an explicit bound for this complexity.

\begin{lemma}\label{lem:complexity_k2logk}
     \textcolor{black}{Let $f(k)$ be a function that satisfies the recurrence equation 
    \[
    f(k) \leq \max_{ \alpha_0\leq \alpha<0.5} f(\alpha k) + f((1-\alpha)k) + c k^2 
    \]
    for some $0<\alpha_0\leq 0.5$ independent of $k$. Then $f(k) = \OO(k^2)$.}
\end{lemma}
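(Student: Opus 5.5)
We prove by strong induction on $k$ that $f(k)\le Mk^2$, where $M$ is a constant depending only on $c$, $\alpha_0$ and the values of $f$ on a finite initial range. The key observation is that splitting $k$ into $\alpha k$ and $(1-\alpha)k$ with $\alpha\ge\alpha_0$ strictly shrinks the sum of squares:
\[
\alpha^2+(1-\alpha)^2 = 1-2\alpha(1-\alpha)\le 1-2\alpha_0(1-\alpha_0)=:1-\gamma,
\]
because $\alpha(1-\alpha)$ is increasing on $[\alpha_0,1/2]$. Here $\gamma=2\alpha_0(1-\alpha_0)>0$ depends only on $\alpha_0$.

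\textbf{Base case.} Since $\alpha k\ge\alpha_0 k$ and $(1-\alpha)k\le(1-\alpha_0)k<k$, both subproblems are strictly smaller than $k$. The recursion therefore only needs base cases below some fixed threshold $k_0$, where the subproblems are handled directly. If sizes are real rather than integer, fix $k_0=1$ and assume $f$ is bounded, say by $B$, on $(0,k_0]$. Choose
\[
M\ge \max\Bigl\{\frac{c}{\gamma},\ \frac{B}{\alpha_0^2 k_0^2}\Bigr\}.
\]
Then $f(k)\le B\le Mk^2$ for every base case in the range $[\alpha_0 k_0,k_0]$, which is the only range the recursion can reach from above $k_0$.

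\textbf{Inductive step.} Take $k>k_0$ and assume $f(k')\le Mk'^2$ for all smaller arguments $k'$ that arise. Then
\[
f(k)\le \max_{\alpha}\bigl(M\alpha^2k^2+M(1-\alpha)^2k^2\bigr)+ck^2\le M(1-\gamma)k^2+ck^2\le Mk^2,
\]
where the last inequality uses $M\gamma\ge c$.

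\textbf{Main obstacle.} The delicate part is the base case. Subproblems can drop below $k_0$ at arguments that are not integers, so the induction has to be run over the well-founded chain of arguments generated by the recursion, or by induction on $\lceil \log_{1/(1-\alpha_0)} k\rceil$. In either form this needs boundedness of $f$ on $(0,k_0]$. The paper leaves this implicit, and we would state it as a mild assumption. Once it is in place, the uniform gap $\gamma>0$ coming from $\alpha\ge\alpha_0$ makes the rest routine, and the constant $M$ depends on $\alpha_0$ (that is, on $\beta$) as claimed.
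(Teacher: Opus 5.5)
Your proof is correct and is the same strong induction on $f(k)\le Mk^2$ that the paper uses. You make explicit the uniform gap $\alpha^2+(1-\alpha)^2\le 1-2\alpha_0(1-\alpha_0)$ and a single choice of $M$ that covers both the base range and the condition $M\gamma\ge c$, which the paper compresses into ``for large enough $C_c$''. You also note that the base case needs $f$ bounded on an initial range, which is automatic for integer sizes and an extra assumption for real ones; this is a fair sharpening of the paper's ``trivial'' base case.
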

\begin{proof}
     We prove the bound by induction. The induction base is trivial since for small $k$, there is always a constant $C_c$ such that $f(k) \leq C_c k^2 $. Assume for all $k <k_0$ we have that $f(k) \leq C_c k^2$. Then,
    \begin{align*}
    f(k_0) &\leq  \max_{ \alpha_0\leq \alpha<0.5} f(\alpha k_0) + f((1-\alpha)k_0) + c k_0^2 \\
    &\leq \max_{ \alpha_0\leq \alpha<0.5} C_c \alpha^2 k_0^2 + C_c (1-\alpha)^2 k_0^2 + c k_0^2 \\
    &\leq k_0^2\max_{ \alpha_0\leq \alpha<0.5} (C_c \alpha^2 + C_c (1-\alpha)^2+c) .
    \end{align*}
    For large enough $C_c$ we have that $f(k_0) \leq C_c k_0^2$, which concludes the proof.
\end{proof}
\begin{remark}
     In case the tree is split at each level with ratio constant, the computational complexity of the partitioning and merging steps can be derived directly from the Akra–Bazzi method \cite{akra1998solution}.
\end{remark}

\section{Proof of Lemma \ref{lem:weight_assignment}}
\label{app:proof_lemma_2}

To prove the lemma, we first state the following 
auxiliary result on the rank-1 condition. 

\begin{lemma}\label{lem:rank_1_submatrices}
Let $\mathcal T$ be a tree decomposed into
$\mathcal T = \mathcal T_A \cup \mathcal T_B$, where $A$ and $B$ are clans of $\mathcal T$. 
Suppose that $W$ satisfies the rank-1 condition with respect to $\mathcal T$. Then, 
$W(A,A)$ and $W(B,B)$ satisfy the rank-1 condition with respect to $\mathcal T_A$  and $\mathcal T_B$, respectively.
\end{lemma}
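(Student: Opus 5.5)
The plan is to reduce every statement about partitions of $A$ to a statement about partitions of all terminal nodes of $\T$, where the rank-1 condition for $W$ applies. Throughout, $\T_A$ is the topology induced on $A$: its edges are the edges of $\T$ inside the clan, with the two edges at the clan root $h_A$ merged into one. The same construction applies to $B$, so I only treat $A$.

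\emph{Combinatorial dictionary.} I will first prove the following claim. For a partition $A=A_1\cup A_2$, the sets $A_1,A_2$ are clans of $\T_A$ if and only if one of them, say $A_1$, is a clan of $\T$. In that case its complement $A_2\cup B$ is also a clan of $\T$.
\begin{itemize}
\item \emph{Forward direction.} Removing an edge of $\T_A$ corresponds to removing an edge of $\T$ lying in the $A$-side of $\T$. That removal cuts off a clan of $\T$ contained in $A$, namely the side not containing $h_A$'s link to $B$.
\item \emph{Converse direction.} If $A_1\subseteq A$ is a clan of $\T$, its separating edge lies inside the clan $A$. Hence it induces an edge of $\T_A$ separating $A_1$ from $A_2$.
\end{itemize}

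\emph{Easy direction ("if").} Suppose $A_1,A_2$ are clans of $\T_A$. By the dictionary, after relabeling, $(A_1, A_2\cup B)$ is a partition of the terminal nodes of $\T$ into two clans. The rank-1 condition then gives that $W(A_1,A_2\cup B)$ has rank one. Its column submatrix $W(A_1,A_2)$ therefore has rank at most one. It is nonzero, since the entries lie in $\R_+$ and are taken positive as in Definition \ref{def:rank_one_condition}, so its rank is exactly one.

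\emph{Hard direction ("only if").} Suppose $W(A_1,A_2)$ has rank one, and write $W(A_1,A_2)=u a^T$. Since $A,B$ are clans of $\T$, the rank-1 condition gives $W(A,B)=x y^T$ with positive vectors $x$ and $y$. This yields two block forms:
\[
W(A_1,A_2\cup B)=\bigl[\,u a^T \;\big|\; x_{A_1} y^T\,\bigr], \qquad W(A_2,A_1\cup B)=\bigl[\,a u^T \;\big|\; x_{A_2} y^T\,\bigr],
\]
using the symmetry of $W$ in the second. The first matrix has rank one if and only if $u\propto x_{A_1}$, and the second if and only if $a\propto x_{A_2}$. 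By the "only if" part of the rank-1 condition for $W$ and the dictionary, it suffices to show that at least one of these proportionalities holds.

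I would prove this by induction on the structure of $\T_A$. Let $A=A_L\cup A_R$ be the split at the two children of $h_A$. Both $(A_L, A_R\cup B)$ and $(A_R,A_L\cup B)$ are clan partitions of $\T$, so $W(A_L,A_R\cup B)$ and $W(A_R,A_L\cup B)$ have rank one. The two comparison cases are:
\begin{itemize}
\item If $A_1$ equals $A_L$ or $A_R$, we are done directly.
\item Otherwise, compare the rank-one factorizations of the blocks $W(A_1\cap A_L,\cdot)$ and $W(A_1\cap A_R,\cdot)$ against columns in $A_2$ and in $B$. The aim is to show that rank one of $W(A_1,A_2)$ forces $u$ and $x_{A_1}$ to agree up to scaling on each of $A_1\cap A_L$ and $A_1\cap A_R$, and then that the two scalings match. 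Otherwise one gets a contradiction with the rank-one structure on $A_L$ or $A_R$.
\end{itemize}
If $A_1$ or $A_2$ falls entirely inside $A_L$ (or $A_R$), I then recurse into that clan. That clan's complement inside $\T$ again plays the role of $B$.

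\emph{Main obstacle.} I expect the matching of scalings between $A_1\cap A_L$ and $A_1\cap A_R$ to be the crux. This is exactly where one must use the "only if" half of the rank-1 condition for $W$ on suitably chosen non-clan partitions of all terminal nodes. My first attempt would be the partition $\bigl((A_1\cap A_L)\cup(A_2\cap A_R),\ \text{rest}\bigr)$. I would show that if the scalings differ, this partition's submatrix is rank one. That contradicts its not being a clan partition.
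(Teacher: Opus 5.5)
\textbf{Review of the proposal.} Your easy direction is exactly the paper's proof. For a split $A_1|A_2$ of $\mathcal T_A$, after relabeling $A_1$ is a clan of $\mathcal T$ with complement $A_2\cup B$, so $W(A_1,A_2)$ is a column block of the rank-one matrix $W(A_1,A_2\cup B)$. That argument is all the paper gives; it never treats the ``only if'' half.

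\textbf{The gap.} Your hard direction cannot be completed, because the ``only if'' half does not follow from the rank-1 condition alone. Here is a counterexample.
Take $\mathcal T$ on five leaves with nontrivial splits $\{1,2\}|\{3,4,5\}$ and $\{3,4\}|\{1,2,5\}$. Let $W$ be symmetric with $W(1,2)=1/2$, $W(3,4)=1/8$, and all other off-diagonal entries equal to $1/4$.
\begin{itemize}
\item The $1|4$ blocks are positive rows, so they have rank one.
\item $W(\{1,2\},\{3,4,5\})$ and $W(\{3,4\},\{1,2,5\})$ each have two identical rows $(1/4,1/4,1/4)$, so they have rank one.
\item Each of the eight non-clan $2|3$ blocks has a $2\times 2$ minor equal to $1/16$ or $1/32$ (e.g.\ columns $2,5$ of $W(\{1,3\},\{2,4,5\})$), so they have rank two.
\end{itemize}
Hence $W$ satisfies Definition~\ref{def:rank_one_condition} for $\mathcal T$. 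Now take the clans $A=\{1,2,3,4\}$ and $B=\{5\}$. The block $W(\{1,3\},\{2,4\})=\begin{pmatrix}1/2&1/4\\1/4&1/8\end{pmatrix}$ has rank one, yet $\{1,3\}|\{2,4\}$ is not a split of the quartet $\mathcal T_A=12|34$.

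In this example neither $u\propto x_{A_1}$ nor $a\propto x_{A_2}$ holds. Your witness partition $\{1,4\}|\{2,3,5\}$ has rank two, so there is no contradiction to find. What the example violates is the quartet condition (Definition~\ref{def:quartet_condition}) at $12|34$.

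\textbf{How to repair it.} You have two options.
\begin{itemize}
\item State only the ``if'' half. That is the half the proof of Lemma~\ref{lem:weight_assignment} actually uses.
\item Add the quartet condition as a hypothesis. It is assumed where this lemma is applied, and it restricts to $W(A,A)$ because induced quartets in $\mathcal T_A$ and $\mathcal T$ coincide.
\end{itemize}
With the quartet condition, the ``only if'' half needs no induction. Suppose $A_1|A_2$ is not a split of $\mathcal T_A$. Then the subtrees spanning $A_1$ and $A_2$ share a vertex. So there are $i,i'\in A_1$ and $j,j'\in A_2$ whose paths $i$--$i'$ and $j$--$j'$ meet. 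Since the tree is binary, the induced quartet is $ij|i'j'$ after relabeling $j,j'$. The quartet condition then gives $\det W((i,i'),(j,j'))>0$, so $W(A_1,A_2)$ has rank at least two.
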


\begin{proof}[Proof of Lemma \ref{lem:rank_1_submatrices}]
Let $A=A_1 \cup A_2$ be a partition of $A$
into two clans with respect to the tree $\mathcal T_A$. Our aim is to show that
$W(A_1,A_2)$ is rank-1. To this end, 
note that $A_1$ is also a clan in the original tree $\mathcal T$. Therefore, by the rank-1 condition, $W(A_1,A_2 \cup B)$ is rank-1. This implies that its submatrix $W(A_1,A_2)$ is rank-1 as well. 
\end{proof}

\begin{proof}[Proof of Lemma \ref{lem:weight_assignment}]
We prove the lemma by induction on the number of terminal nodes $m$. 
Our base case is a tree with two nodes $x_1,x_2$ where there is a single edge and a single non-diagonal element $W(1,2)$. Hence, one can simply assign the weight $W(1,2)$ 
to the edge $e(1,2)$ and thus satisfy the lemma.

Next, we assume that the lemma holds for any tree with up to $m-1$ terminal nodes, and prove the claim for a tree $\T$ with $m$ nodes. 
We partition $\T$ into two rooted subtrees $\T_A$ and $\T_B$ by removing a single edge. Let $h_A$ and $h_B$ be the two root nodes of $\T_A$ and $\T_B$, respectively, and let $A$ and $B$ be their corresponding subsets of terminal nodes, see illustration in Figure \ref{fig:similarity_structure}.
Accordingly, we partition the similarity matrix $W$ into four blocks: (i) The submatrix $W(A,A)$ which contains the pairwise similarity between nodes in $A$, (ii) The submatrix $W(B,B)$ which contains the pairwise similarity between nodes in $B$, and (iii) the matrices $W(A,B)$ and $W(B,A) = W(A,B)^T$ that contain similarities between nodes in $A$ and those in $B$.

\begin{figure}[t]
 \centering
 \includegraphics[width=0.8\linewidth]{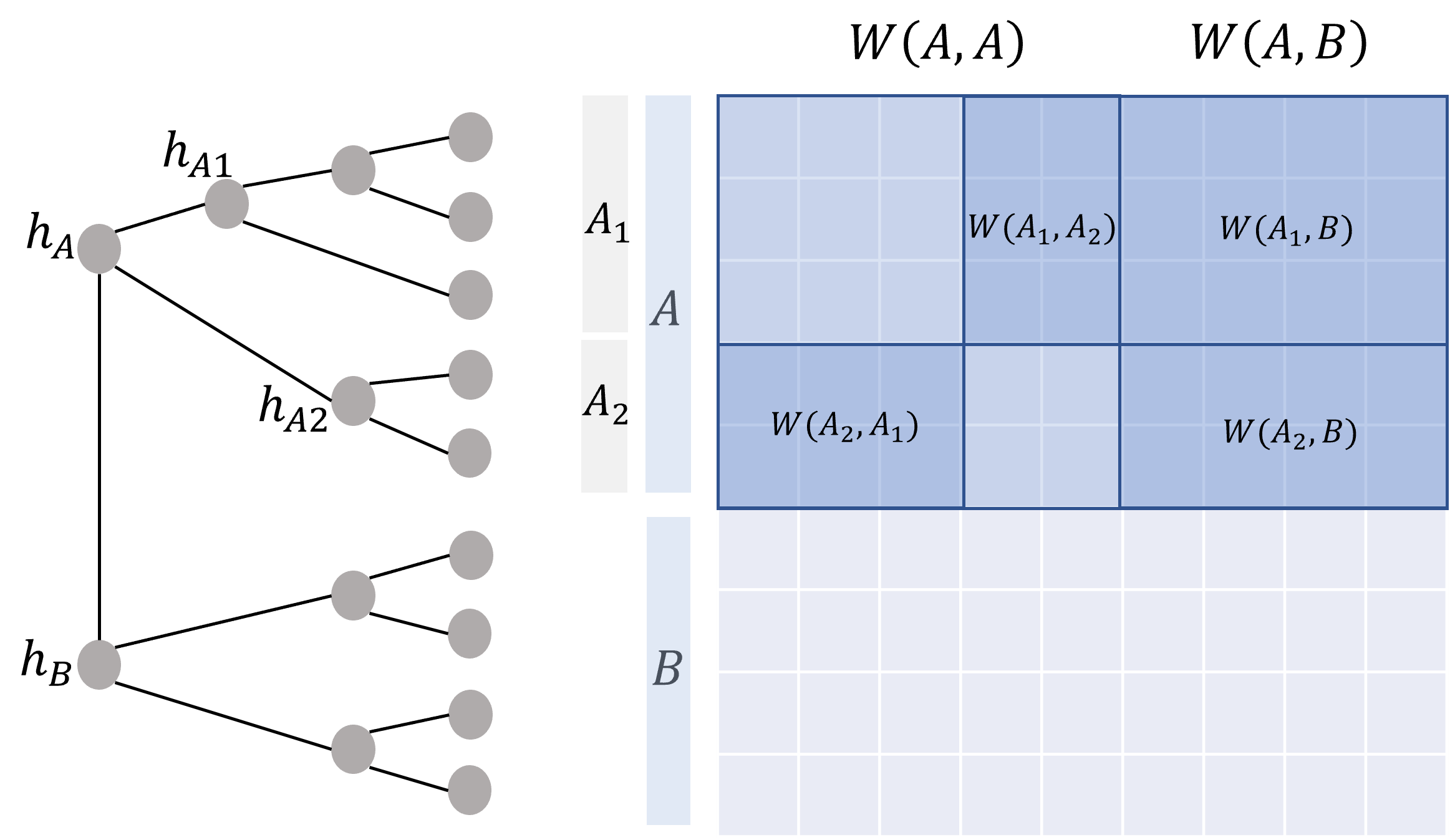}
 \caption{A tree $\T$ and its corresponding similarity matrix $S$. On the right - the matrix block partitions used in the proof of lemma \ref{lem:weight_assignment}}. 
 \label{fig:similarity_structure}
\end{figure}

The four blocks of $S$ satisfy the following two properties with respect to the tree $\T$.
\begin{itemize}[leftmargin=*]
 \item[\RNum{1}] 
 The submatrix $W(A,B)$ is rank-1. 
 
 \item[\RNum{2}] The submatrices $W(A,A),W(B,B)$ satisfy the rank-1 condition with respect to the trees $\T_A$ and $\T_B$ respectively. 
\end{itemize}
Property I follows from the
 fact that $A$ and $B$ are clans of $\mathcal T$, and that by the assumption of the lemma
 $W$ satisfies the rank-1 condition with respect to $\T$. 
  Hence there are vectors $v_A,v_B$ such that 
  \begin{equation}
  W(A,B)= v_A v_B^T
    \label{eq:v_AB}
  \end{equation}
  Property II follows from the above auxiliary
  Lemma \ref{lem:rank_1_submatrices}.

By Property II and the induction hypothesis, we can set weights to the edges of $\T_A$ such that each element $W(i,j)$ where $(i,j) \in A$ equals the product of weights along the path between nodes $x_i$ and $x_j$. Similarly, we can set such weights to $T_B$.
 
We denote by $w(A,h_A) \in \R^{|A|}$ a vector whose entries are the product of weights along the paths between the subset of nodes $A$ and $h_A$. 
Similarly, let  $w(B,h_B) \in \R^{|B|}$ be denote the product of weights between the subset $B$ and $h_B$. 
Our proof is based on the following auxiliary lemma. 
\begin{lemma}\label{lem:rank_1_aux_lemma}
The vectors $w(A,h_A)$ and $w(B,h_B)$ are proportional to $v_A$ and $v_B$ 
defined in Eq. \eqref{eq:v_AB}. Namely,  
\[
w(A,h_A) = c_A v_A, \qquad  w(B,h_B) = c_B v_B ,
\]
for some constants $c_A$ and $c_B$.
\end{lemma}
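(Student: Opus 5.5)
Since $W(A,B)=v_Av_B^T$ is rank-1 and, under the standing assumptions, has strictly positive entries, $v_A$ and $v_B$ can be taken entrywise positive. It therefore suffices to show that for any two nodes $i,i'\in A$,
\[
\frac{w(i,h_A)}{w(i',h_A)}=\frac{v_A(i)}{v_A(i')},
\]
and the same statement for $B$. If the ratios agree for every pair, the two vectors are proportional, and $c_A$, $c_B$ are well defined and positive.

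For $B$ the argument will be symmetric, so I fix $i,i'\in A$. The target ratio is easy to express, since for any $j\in B$,
\[
\frac{v_A(i)}{v_A(i')}=\frac{W(i,j)}{W(i',j)}.
\]
The plan is to compare this with ratios of path weights computed inside $\T_A$, using the rank-1 condition on a suitably enlarged clan.

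The key device is to look at the node $h_{ii'}$ where the paths from $i$ and $i'$ to $h_A$ meet. Removing the edge from $h_{ii'}$ toward $h_A$ separates a clan $A'\subseteq A$ containing $i,i'$. If $A'=A$, then $h_{ii'}=h_A$.

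\textbf{Case $A'\neq A$.} Pick some $k\in A\setminus A'$. Since $A'$ is a clan of $\T$, the rank-1 condition says $W(A',A^c{}')$ is rank-1, where the column set $A^c{}'=(A\setminus A')\cup B$ contains both $k$ and $j$. Rank-1 gives
\[
\frac{W(i,j)}{W(i',j)}=\frac{W(i,k)}{W(i',k)}.
\]
By the induction-hypothesis weights on $\T_A$, the right-hand ratio equals $w(i,h_{ii'})/w(i',h_{ii'})$, because the shared part of the paths from $h_{ii'}$ to $k$ cancels. This in turn equals $w(i,h_A)/w(i',h_A)$, because the path from $h_{ii'}$ to $h_A$ is also common to $i$ and $i'$.

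\textbf{Case $A'=A$.} Here $i$ and $i'$ lie in different child clans $A_1,A_2$ of $h_A$, and no third node of $A$ is available to play the role of $k$. This is the main obstacle. I would resolve it with a determinant identity, using the other side $B$ as the witnesses.

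Choose $j\in B$ and, if possible, $j'\in B$ from a different child clan of $h_B$. The quartet $(i,i',j,j')$ induces the split $ii'|jj'$, so the quartet condition makes $W((i,i'),(j,j'))$ nonsingular. That by itself does not yield the ratio, however. I would instead exploit that the proportionality constants are free: assign the weight of the edge $(h_A,h_B)$ afterwards, and define $c_A$ and $c_B$ through a single entry $W(i_0,j_0)$.

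The real content is that the ratio $v_A(i)/v_A(i')$ across the two children of $h_A$ is not forced by $W(A,A)$. The weights from the induction hypothesis on $\T_A$ are determined only up to a rescaling of the two child edges at $h_A$, namely multiplying one by $t$ and the other by $1/t$; this leaves $W(A,A)$ unchanged because every cross path uses both edges. So the plan is to choose $t$ to match the ratio $v_A(i)/v_A(i')$, and to check that this rescaling preserves the $\T_A$ representation. This is consistent because, by the first case, the ratios within each child clan already agree.

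Finally, the rescaled weights must lie in $[0,1]$. I would argue this using $W\le 1$ to bound the edge weights, and I expect this last check to need care.
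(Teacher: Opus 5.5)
Your argument is correct and is essentially the paper's proof. Within each child clan of $h_A$, proportionality comes from the rank-1 condition on a clan inside $A$ whose complement contains $B$: the paper phrases this as $W(A_1,A_2)$, $W(A_1,B)$ and $W(A_1,A_2\cup B)$ sharing a left singular vector, while you phrase it with ratios and the smallest clan containing $i,i'$. The remaining cross-child ratio is then matched using the $t,1/t$ rescaling freedom of the two edges at $h_A$, exactly as the paper does with $c_{A1},c_{A2}$.

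The closing $[0,1]$ check is not part of this statement, and the paper does not attempt it. It also cannot follow from $W\le 1$ alone. Take four leaves with split $ii'|jj'$, edge weights $2$ and $0.4$ at $i,i'$, $0.1$ on the internal edge, and $0.5$ at each of $j,j'$. Every entry of $W$ lies in $[0,1]$, and both the rank-1 and quartet conditions hold. Yet your rescaling forces $w(i,h_A)=2$.
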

Lemma \ref{lem:rank_1_aux_lemma} relates properties \RNum{1} and \RNum{2}. For an arbitrary partition $A,B$, the singular vectors of the rank-1 submatrix $W(A,B)$ are proportional to the weights $w(A,h_A)$ and $w(A,h_B)$. 
We first complete the proof of Lemma \ref{lem:weight_assignment} assuming Lemma \ref{lem:rank_1_aux_lemma} is correct. We set the weight  $w(h_A,h_B)=c_Ac_B$. Thus,
\[
w(A,h_A)w(h_A,h_B)w(h_B,B) = w(A,h_A)c_Ac_Bw(h_B,B) = v_A v_B^T = W(A,B).
\]
Finally, we prove that the product $c_Ac_B$ is smaller than 1. Consider a quartet of nodes $(i,j) \in A$ and $(i',j') \in B$ and consider the determinant of the following submatrix
\[
\bigg| 
\begin{pmatrix}
    W(i,j) & W(i,j') \\
    W(i',j) & W(i',j')
\end{pmatrix}
\bigg| = 
\bigg|
\begin{pmatrix}
w(i,h_A)w(h_A,j) & w(i,h_A)c_Ac_Bw(h_B,j') \\
w(j,h_A)c_Ac_Bw(h_B,i') & w(i',h_B)w(h_B,j')    
\end{pmatrix}
\bigg| = W(i,j)W(i',j')(1-(c_Ac_B)^2).
\]
By the definition of the rank-1 matrix, we have that the determinant is positive, which implies that $c_Ac_B<1$.
Together with property \RNum{2}, we found a set of weights such that $W(i,j) = w(x_i,x_j)$ for all pairs $(i,j)$ which proves Lemma \ref{lem:weight_assignment}.

Let us now prove auxiliary Lemma \ref{lem:rank_1_aux_lemma}. Let $h_{A1},h_{A2}$ and $h_{B1},h_{B2}$ be two pairs of nodes adjacent to $h_A$ and $h_B$, respectively. We denote by $A_1,A_2$ a partition of $A$ according to the nodes closer to $h_{A1}$ and $h_{A2}$, see illustration in Figure \ref{fig:similarity_structure}.
Consider the submatrix $W(A_1,A_2)$.
By property \RNum{2}, we can set weights to $\T_A$ such that,
\begin{equation}\label{eq:s_a_1_a_2}
W(A_1,A_2) = w(A_1,h_A)w(h_A,A_2) = w(A_1,h_{A1})w(h_{A1},h_A)w(h_{A},h_{A2})w(h_{A2},A_2).
\end{equation}

Note that we have a degree of freedom in the choice of $w(h_{A1},h_A)$ and $w(h_{A},h_{A2})$ since multiplying one and dividing the other by a constant does not change the product of weights between any pair of nodes in $A$. We will use this degree of freedom shortly.

Next, consider three submatrices: $W(A_1,A_2),W(A_1,B)$ and their concatenation $W(A_1, A_2 \cup B)$. All three submatrices are rank-1, 
$W(A_1,A_2)$ by Eq. \eqref{eq:s_a_1_a_2}, and $W(A_1,B)$ and $W(A_1, A_2 \cup B)$ since the matrices $W(A,A)$ and $W$ satisfy the rank-1 condition with respect to $\T_A$ and $\T$, respectively. Since $W(A_1, A_2 \cup B)$ is a concatenation of $W(A_1,A_2)$ and $W(A_1,B)$, it implies that all three submatrices have the same left singular vector. By Eq. \eqref{eq:s_a_1_a_2} this singular vector is proportional to $w(A_1,h_{A1})$.
We can apply a similar reasoning with the three matrices $W(A_2,A_1),W(A_2,B)$ and $W(A_2,A_1 \cup B)$ to conclude that the left singular vector of all three matrices is proportional to $w(A_2,h_{A2})$. We thus conclude that there are two constants $c_{A1}$ and $c_{A2}$ such that the left singular vector of $W(A_1,B)$ is equal to $c_{A1} w(A_1,h_{A1})$ and the left singular vector of $W(A_2,B)$ is equal to $c_{A2} w(A_2,h_{A2})$. 

The submatrix $W(A,B)$ is a concatenation of $W(A_1,B)$ and $W(A_2,B)$. Its left singular vector $v_A$ is thus a concatenation of $c_{A1} w(A_1,h_{A1})$ and $c_{A2} w(A_2,h_A)$.
Recall that we have a degree of freedom for the choice of $w(h_{A1},h_A)$ and $w(h_{A2},h_A)$. We set the two weights such that 
\[
w(h_{A1},h_A) c_{A1} = w(h_{A2},h_{A2}) c_{A2} \equiv c_A.
\]
With these weights, $v_A$ is a concatenation of $c_A w(A_1,h_A)$ and $c_A w(A_2,h_B)$, which equals $c_A w(A,h_A)$.
A similar reasoning yields that $v_B = c_B w(B,h_B)$ which concludes the proof.  

\end{proof}

\section{Proof of Theorem \ref{thm:partitioning_consistency}}
\label{sec:proof_theorem_partition}

\subsection{Proof of Lemma \ref{lem:population_mean}: $\E[S^{g}]$ satisfies the rank-1 and quartet conditions.}
\begin{proof}[Proof of Lemma \ref{lem:population_mean}]

Let $S(A,B)$ and $S^{g}(A,B)$ denote two similarity matrices between nodes in a subset $A$ and nodes in the complementary subset $B$ in the species tree and gene tree, respectively. 
In the following proof we first assume that $A$ and $B$ are clans, 
and hence by Lemma 3.1 in \cite{snj}  $S(A,B)$ is rank-1. We prove that in this case, $\E[S^{g}(A,B)]$ is also rank-1. Then, we show that if $A$ is not a clan, then $\E[S^{g}(A,B)]$ is not rank-1.

Assume $A$ and $B$ are clans obtained by disconnecting the  edge between $h_A$ and its parent node, and let $i \in A, k \in B$ be a pair of nodes with similarity  $S^g(i,k)$ in the gene tree. By Eq. multiplicity of the similarity

\begin{equation}\label{eq:gene_sim}
S^g(i,k) = \exp\big(-D^g(i,k)\big) = \exp\big(-D(i,k) - 2\Delta_{ik}\big) = S(i,k)\exp(-2\Delta_{ik}).
\end{equation}

Since $k$ is external to the clan $A$, all pairs that include a node $i \in A$ and node $k$ share the same common ancestor, i.e. $h_{i,k} =  h_{A \cup \{k\}}$ for any $i \in A$. Thus, by property \ref{prop_1}, the distributions of $\Delta_{i,k}$, and hence the expectation $\E[\exp(-2\Delta_{i,k})]$ are the same for any node $i\in A$. 

Let $S(A,k)$ be a single column in $S(A,B)$ that contains the similarity values between $k$ and the subset $A$ in the species tree, and let $\E[S^g(A,k)]$ denote the expectation of the corresponding column in the gene tree.
Eq. \eqref{eq:gene_sim} implies that $\E[S^g(A,k)]$ is equal, up to a multiplicative factor, to $S(A,k)$.
The same holds for any node $k' \in B$.
Since $S(A,B)$ is rank 1, then $\E[S^g(A,B)]$ is also rank-1. 

\begin{figure}[t]
 \centering
 \includegraphics[width = 0.4\linewidth]{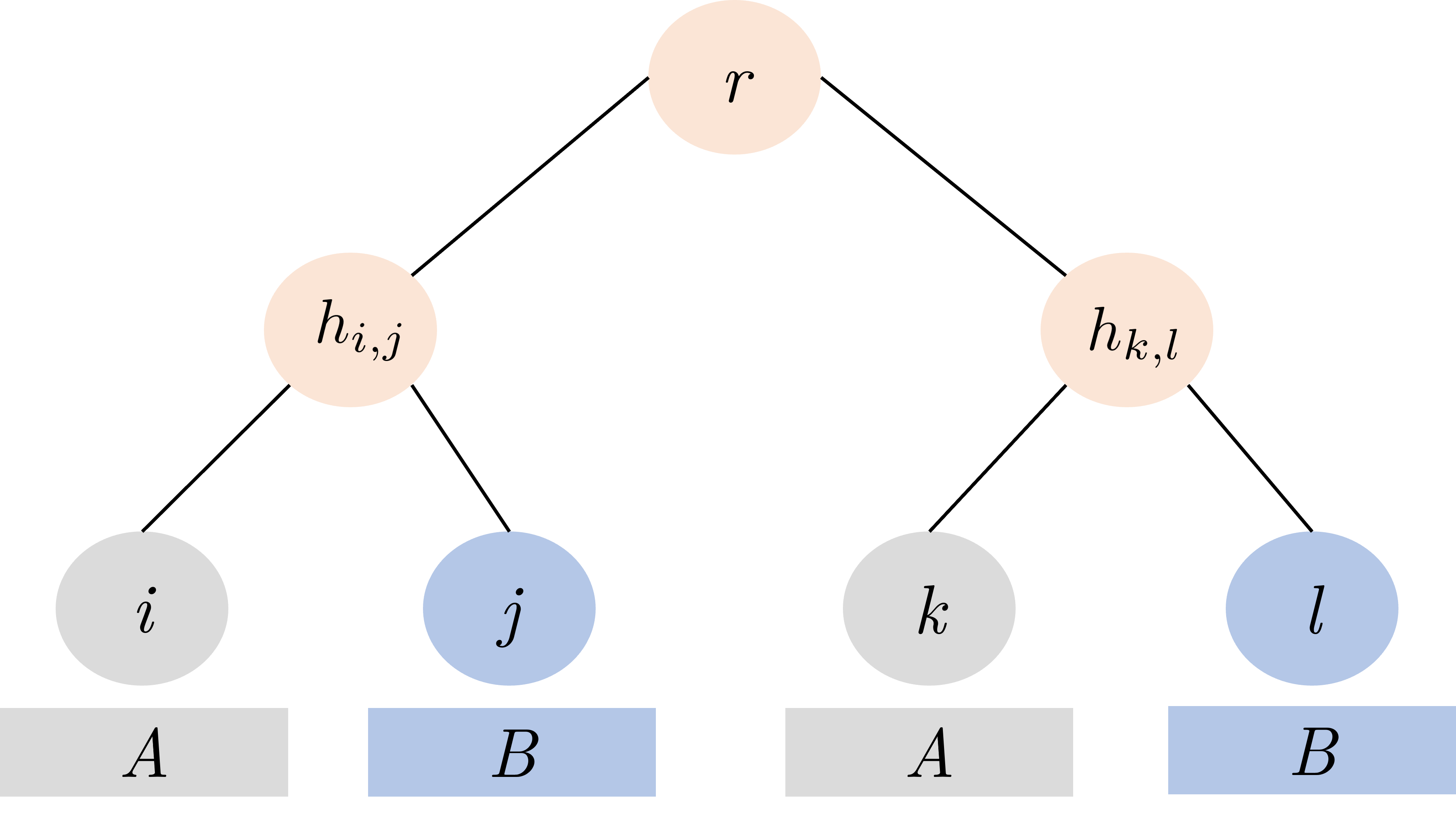}
 \hspace{2em}
 \includegraphics[width = 0.4\linewidth]{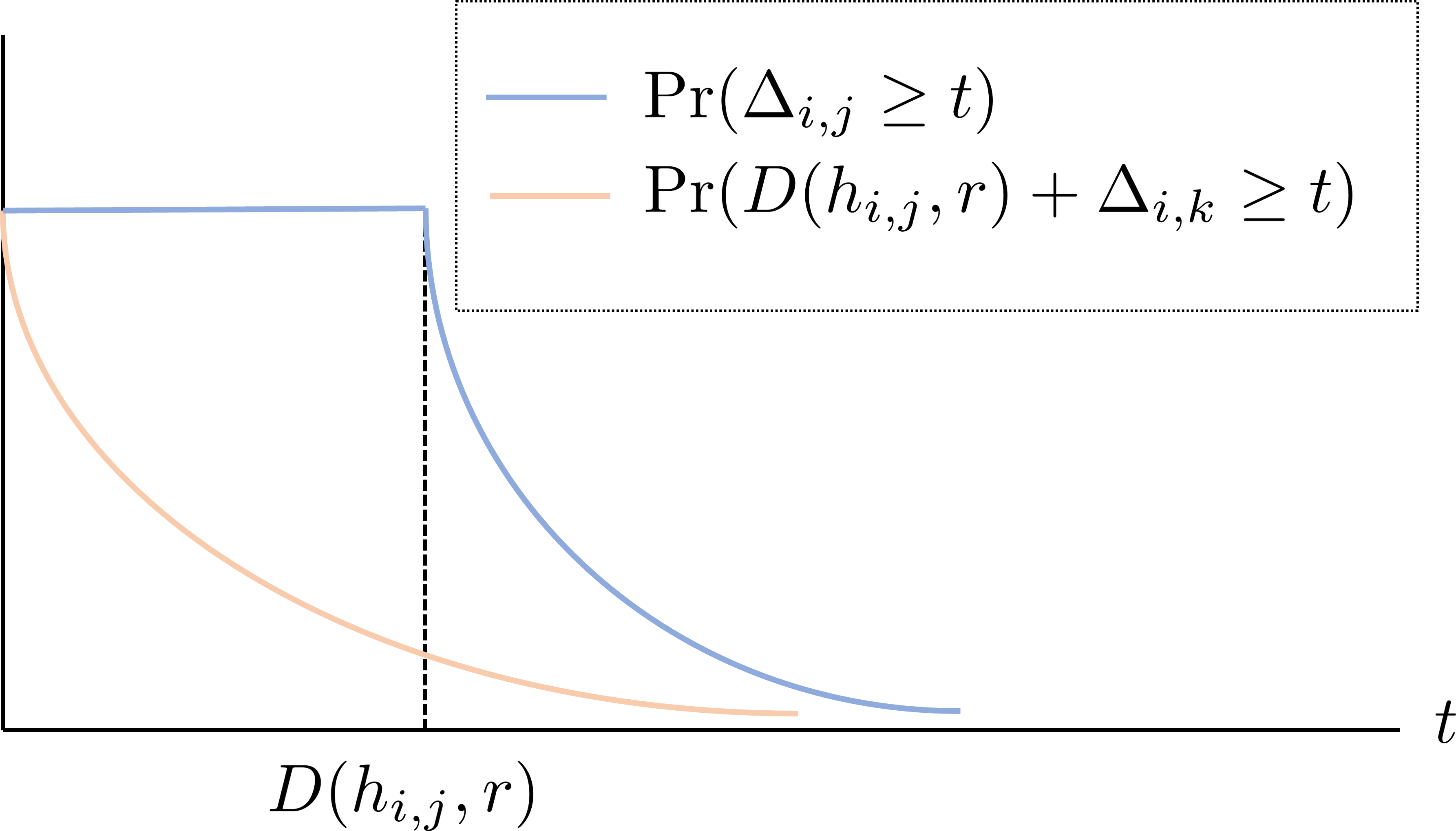}
 \caption{
  Left: Illustration of a quartet tree. For part (b) of our proof, we show that if $A$ and $B$ are not clans, $S(A,B)$ is not rank 1. Right: key property \RNum{2}: The random variable $\Delta_{i,k} + D(h_{i,j},r)$ stochastically dominates $\Delta_{i,j}$.} 
 \label{fig:quartet}
\end{figure}

Next, consider the case where the subset $A$ is not a clan in the species tree.
There are at least two pairs $(i,k) \in A$ and $(j,l) \in B$ where $(i,j)$ and $(k,l)$ are siblings in the inner structure of the quartet in the species tree, see illustration in Fig. \ref{fig:quartet}.
We denote by $r$ the MRCA of the quartet. Consider the $2 \times 2$ submatrix $\E[S^g(A,B)]$,

\begin{equation}\label{eq:gene_submatrix}
\E\Big[S^g\big((i,k),(j,l)\big)\Big] = 
\begin{pmatrix}
 \E[S^g(i,j)] & \E[S^g(i,l)] \\
 \E[S^g(k,j)] & \E[S^g(k,l)]
\end{pmatrix}
\end{equation}
Similar to Eq. \eqref{eq:gene_sim}, each element of this matrix can be expressed in the following form:
\[
\E[S^g(i,j)] = S(i,j)\E[\exp(-2\Delta_{ij})].
\]
In addition, by the multiplicativity property of the similarity measure we have
\[
S(i,l) = S(i,h_{i,j})\exp\big(-D(h_{i,j},r)\big)\exp\big(-D(r,h_{k,l})\big)S(h_{k,l},l), 
\]
\[
S(k,j) = S(k,h_{k,l})\exp\big(-D(h_{k,l},r)\big)\exp\big(-D(r,h_{i,j})\big)S(h_{i,j},j).
\]
Thus,
\[
S(i,l)S(k,j) = S(i,j)S(k,l)\exp(-2D(h_{i,j},r))\exp(-2D(h_{k,l}, r))
\]
The determinant of the submatrix in Eq. \eqref{eq:gene_submatrix} equals,
\begin{align}
&\E[S^g(i,j)]\E[S^g(k,l)]-\E[S^g(i,l)]\E[S^g(k,j)] \notag \\
&= S(i,j)S(k,l) \Big( \E[\exp\big(-2\Delta_{i,j}\big)]\E[\exp\big(-2\Delta_{k,l}\big)] \notag \\ 
&- \E[\exp\big(-2(D(h_{i,j},r)+\Delta_{i,l})\big)]\E[\exp\big(-2(D(h_{k,l},r)+\Delta_{k,j})\big)] \Big)
\end{align}
Consider the triplet $(i,j,l)$. The root node $r$ is an ancestor of $h_{i,j}$ and hence by property \ref{prop_2} it holds that $D(h_{i,j},r)+\Delta_{i,l}$ stochastically dominates $\Delta_{i,j}$. Since $\exp(-2 d)$ is a monotonically decreasing function in $d$, then
\[
\E[\exp(-2(D(h_{i,j},r)+\Delta_{i,l}))] < \E[\exp(-2\Delta_{ij})].
\]
Similarly,  
\[
\E[\exp(-2(D(h_{k,l},r)+\Delta_{k,j}))] < \E[\exp(-2\Delta_{k,l})].
\]
Hence, the determinant of the submatrix is positive, which implies it is rank-2. Thus, $\E[S^g(A,B)]$ cannot be rank-1. 

To prove that the quartet condition holds, it suffices to show that the this condition is satisfied in any similarity matrix of a tree, whose weights are between $0$ and $1$. 
Consider a quartet $Q = ((i,j),(k,l))$ such that $(i,k)$ and $(j,l)$ are siblings in $\T_Q$, see Fig. \ref{fig:quartet}. 
We denote by $S(h_{i,j},h_{k,l})$ the similarity between the ancestor $h_{i,j}$ of $(i,j)$ and the ancestor $h_{k,l}$ of $(k,l)$.
The determinant $|S((i,j),(k,l)|$ is equal to,
\[
|S((i,j),(k,l)| = S(i,k)S(j,l) - S(i,l)S(j,k) = S(i,k)S(j,l)(1-S(h_{i,j},h_{k,l})^2).
\]
Since $0<S(h_{i,j},h_{k,l})<1$, the quartet is positive. 
\end{proof}

%Let $\bar S$ denote the average of the gene similarity matrices and $\E[\bar S]$ denote its population mean, such that
%\[
%\E[\bar S] = \frac1K \sum_{k=1}^K \E[S^{g}] = \E[S^{g}]
%\]
\subsection{Proof of Theorem \ref{thm:partitioning_consistency}}

%Let $\E[S^{g}]$ denote the population mean of the gene similarity matrix. 
%The theorem is proved in two steps. First, in Lemma \ref{lem:population_mean} we prove that $\E[S^{g}]$ satisfies the rank-1 condition with respect to $\T$. This immediately implies via Theorem \ref{thm:snj_stdr} that partitioning the species according to the Fiedler vector of a graph whose weight matrix is equal to $\E[S^{g}]$ yields two clans. Then, we show that the same holds with a sufficiently small perturbation to the graph weights. 

%\textcolor{red}{Boaz: Cannot follow the proof below. First of all what is $L^{g}$? If the idea is the graph Laplacian corresponding to $\bar S$, then it should be denoted $\bar L$.
%In addition, expected graph Laplacian 
%$\E[S^{g}]$ does not make sense. }

\begin{proof}[Proof of Theorem \ref{thm:partitioning_consistency}]
By 
Lemma \ref{lem:population_mean},  $\E[S^{g}]$ satisfies the rank-1 condition with respect to $\T$, namely $\E[S^{g}] \in \WT$.
Let $v$ denote the Fiedler vector of a graph whose weight matrix is equal to $\E[S^{g}]$. %yields two clans.
%This immediately implies via 
Since $\E[S^{g}] \in \WT$, Theorem \ref{thm:snj_stdr} implies that partitioning the species according to $v$ yields two clans.

Let $\bar L$ be the Laplacian of a graph whose weight matrix equals the sample mean $\bar S = \frac1K \sum_{g=1}^K S^{g}$. We denote by $\hat v$ the corresponding Fiedler vector, and by $0 = \gamma_1<\gamma_2<\gamma_3$ the smallest eigenvalues of $\bar L$.  %Recall that $\gamma_2$ 
%Let $v$ and $\hat v$ denote the Fiedler vectors whose weights matrix is equal to $\E[S^{g}]$ and $\bar S$, respectively. 
By the Davis-Kahan theorem \cite{yu2015useful}, %we have 
\begin{equation}\label{eq:davis_kahan}
\|\hat v-v\| \leq 2^{3/2}\frac{\|\bar L- \E[\bar L]\|}{\min(\gamma_2,\gamma_3-\gamma_2)}. 
\end{equation}
By the law of large numbers $\bar S$ converges to $\E[S^{g}]$, and hence $\bar L$ to $\E[\bar L]$. %We also assume that $\min_i |(v_g)_i|$ is bounded away from zero. 
Thus, the Davis-Kahan bound (r.h.s of Eq. \eqref{eq:davis_kahan}) converges to zero for $K \to \infty$. Since by assumption $\min_j |v(j)| > 0$, the sign pattern of $\hat v$ is the same as $v$ once the Davis-Kahan bound falls below $\min_j |v(j)|$. Thus, partitioning the terminal nodes according to the sign pattern of $\hat v$ yields the same subsets as that of $v$, and is guaranteed to yield two clans in the species tree.
\end{proof}

\section{Evaluating the partitioning step on the $50$-species and $200$-species datasets}
\label{sec:200_50_partitions}

Here, we evaluate the accuracy of the partitioning step of \texttt{SDSR} as a function of the number of gene alignments $K$ for the $50$-species and $200$-species datasets. 
Recall that the parameter $\beta$ sets the minimum partition in the constrained K-means.
For both the $50$-species and the $200$-species datasets, $\beta$ was set to $0.05$. The threshold parameter $\tau$ was set to $50$ for the $200$-species datasets and to $25$ for the $50$-species datasets. 
We compare the outcome of the first partition to the most similar partition in the species tree. Figure \ref{fig:200_Species_Datasets_Partition_based_similarity} shows the Rand Index score as a function of the number of genes.
The points mark the median of the 50 datasets, and the line is the range between the $25\%$ and $75\%$ quantiles.
The results show that a small number of genes is sufficient to obtain an accurate partition for these datasets.

\begin{figure}[t]
\centering
\makebox[\textwidth][c]{%
 \includegraphics[width=0.9\textwidth]{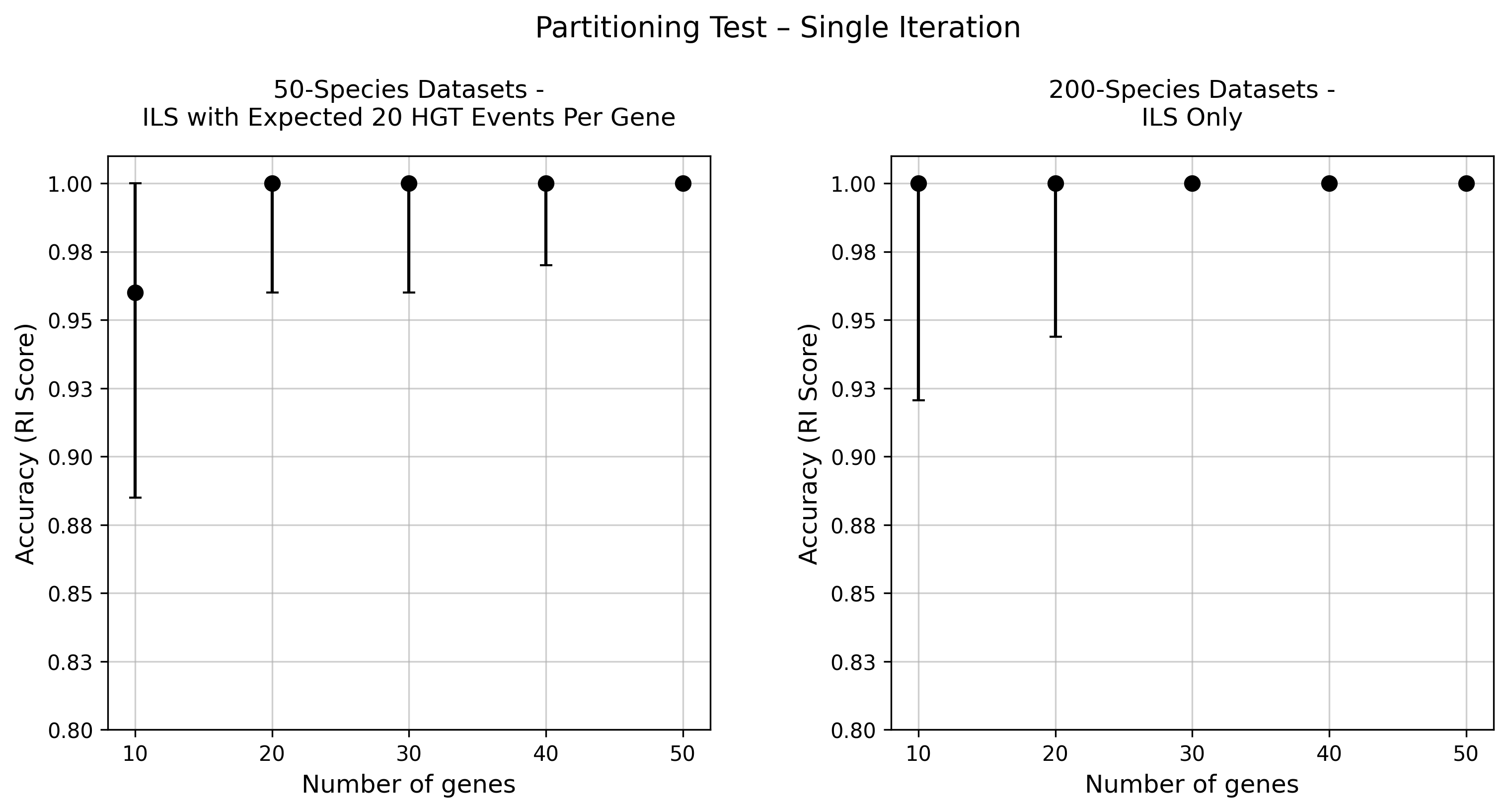}%
}
\caption{
Accuracy of the first partition 
as a function of the number of genes, 
on the $50$-species datasets (left) and the $200$\bl{-}species datasets (right). The accuracy is measured by the Rand Index similarity score between the partition obtained by the Fiedler vector and the most similar partition in the ground-truth species tree.}
\label{fig:200_Species_Datasets_Partition_based_similarity}
\end{figure}

\section{Test partitioning via  average distance matrix}
\label{sec:distance_partition}

As described in Section \ref{sec:method}, the \texttt{SDSR} partitioning approach computes a similarity matrix from each gene alignment. Here, we explore an alternative approach that is based on an average distance matrix.
Let \( D^g \) denote a pairwise distance matrix estimated from the $g$-th gene alignment, and let $\bar D$ denote the average distance matrix:
\[
\bar{D} = \frac{1}{K} \sum_{g=1}^{K} D^g.
\]
We compute a single similarity matrix via,
\[
S(i,j) = \exp(-\bar D(i,j)), 
\] 
and compute its corresponding Laplacian. 
Partitioning is then performed based on the Fiedler vector, as described in Step 1a of Section \ref{sec:method}.
To evaluate the performance of this approach, we conduct experiments on the $50$ and $200$ species datasets. Gene lengths and parameter values for $\beta$ and $\tau$ are as described in Section \ref{sec:200_50_partitions}.
Figure \ref{fig:200_Species_Datasets_Partition_based_distance} presents the Rand Index score as a function of the number of genes.
The points mark the median of the 50 datasets, and the line represents the range between the $25\%$ and $75\%$ quantiles. The results show that partitions obtained via the distance-based approach are less accurate than those obtained using the averaged Laplacian approach. 

\begin{figure}[t]
\centering
\makebox[\textwidth][c]{%
 \includegraphics[width=0.95\textwidth]{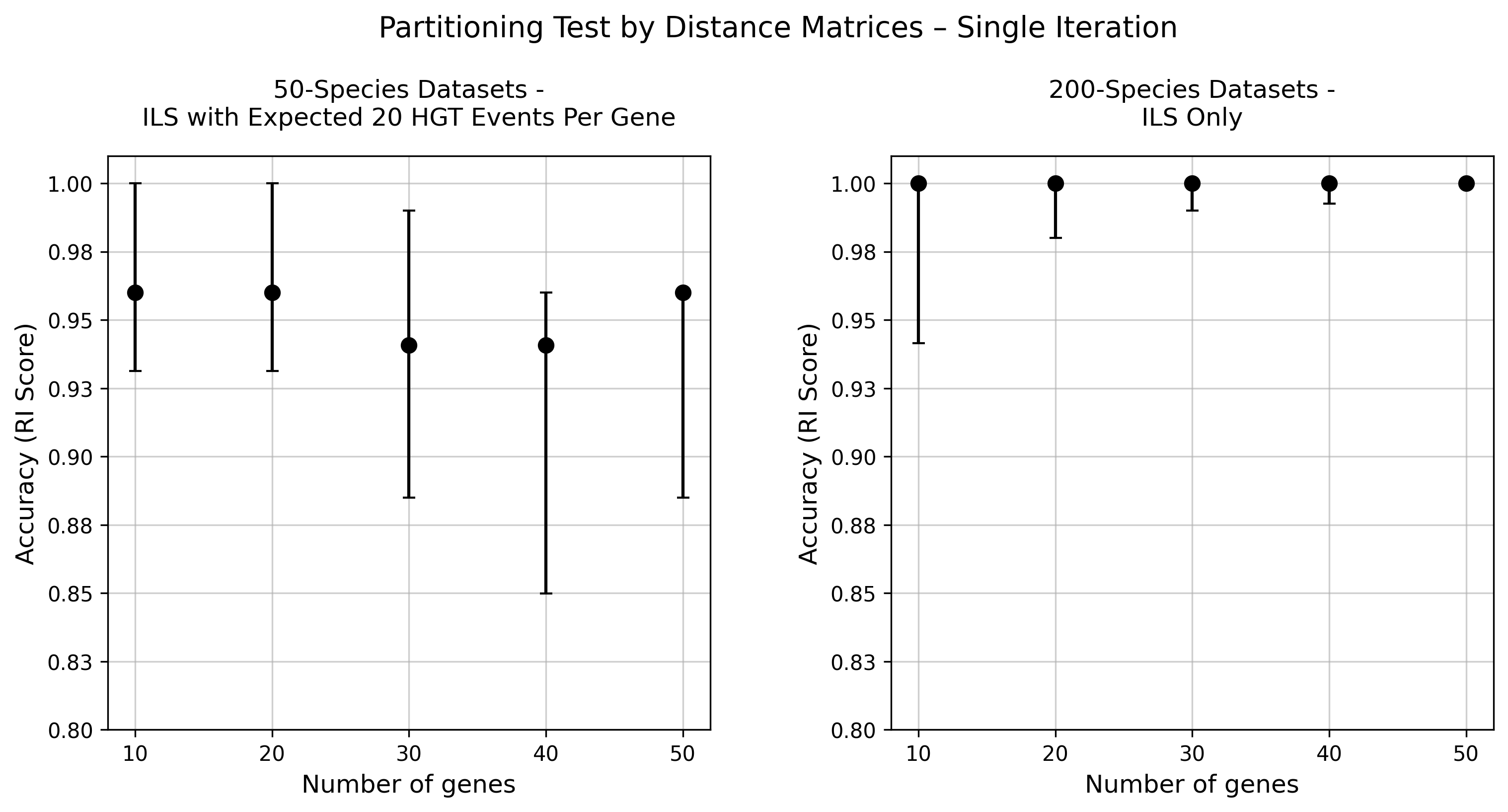}%
}
\caption{Accuracy of the first partition, based on distance matrices,  
as a function of the number of genes. 
(left) $50$\bl{-}species dataset; (right) $200$\bl{-}species dataset. The accuracy is measured by the Rand Index similarity score between the partition obtained by the Fiedler vector and the most similar partition in the ground-truth species tree.}
\label{fig:200_Species_Datasets_Partition_based_distance}
\end{figure}

\section{Reconstruction accuracy with a single \texttt{SDSR} iteration}\label{sec:partition_experiment_single_iteretion}

We compared the accuracy and runtime of ASTRAL and CA-ML when used as stand-alone methods and when used as subroutines of \texttt{SDSR} with a single iteration , single CPU, and $\beta = 0.05$.
The experiment was conducted on the $50$ and $200$ species datasets. 
Figure \ref{fig:50_species} shows the accuracy and runtime for all methods as a function of the number of genes for the $50$ species datasets. The markers represent the median, and the lines show the range between the $25\%$ and $75\%$ quantiles. 
For a large number of genes, there is almost no difference between the accuracy of ASTRAL and \texttt{SDSR} + ASTRAL, and between CA-ML and \texttt{SDSR} + CA-ML. 
However, the spectral approach demonstrates a significant reduction in runtime. For instance, for the ILS-only datasets with $250$ genes, the processing time of \texttt{SDSR} + CA-ML was around $40\%$ of that of CA-ML.

\begin{figure}[t]
 \centering
 \includegraphics[width=0.9\textwidth]{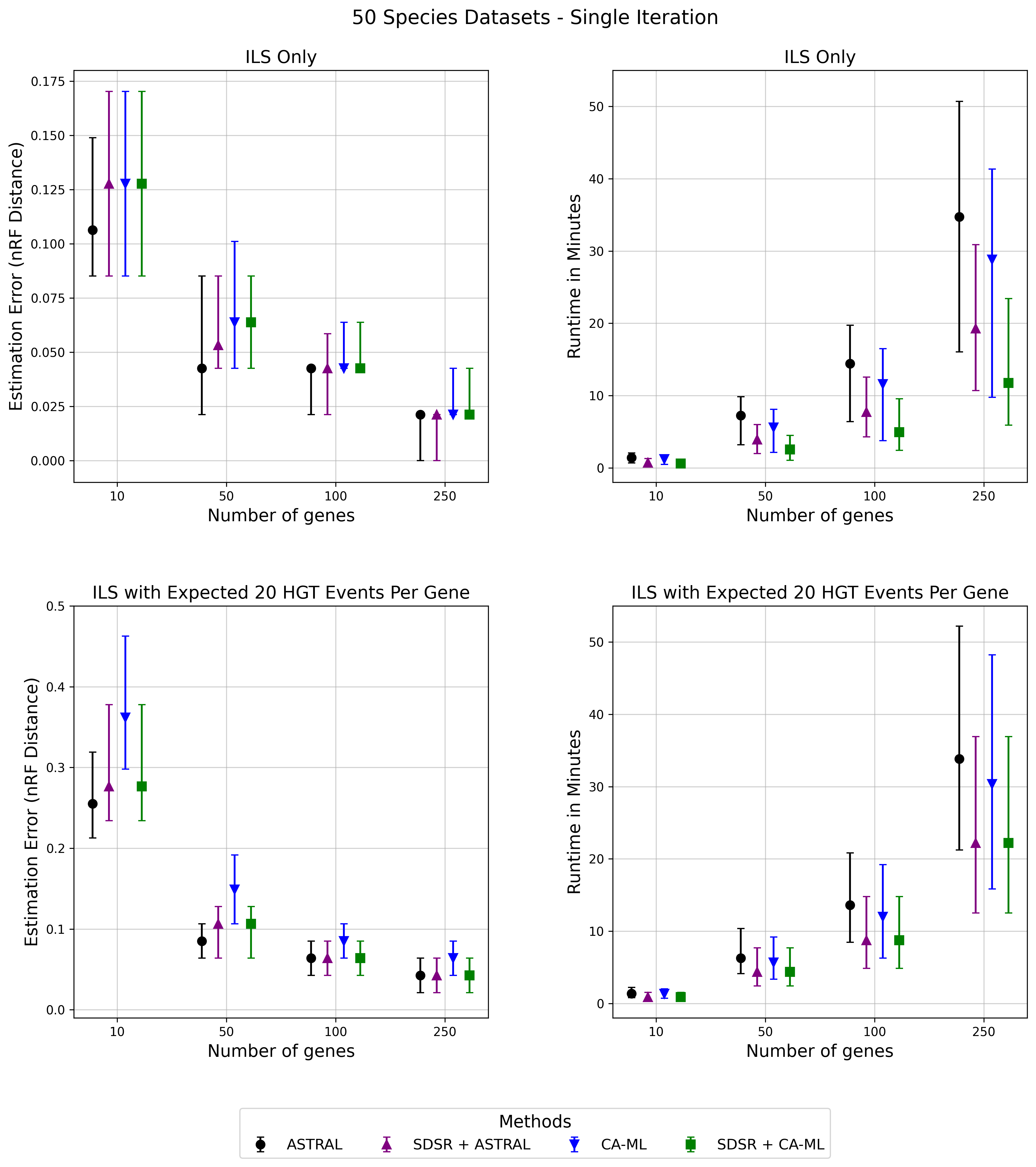}

 \caption{Results for the $50$ species datasets with a single \texttt{SDSR} iteration: The left panels show the nRF distance between the estimated and ground-truth species trees as a function of the number of genes, for the dataset with only ILS (upper panel) and ILS+HGT (bottom panel). The right panels show the runtime in minutes.}
 \label{fig:50_species}
\end{figure}

\begin{figure}[t]
 \centering
 \includegraphics[width=0.9\textwidth]{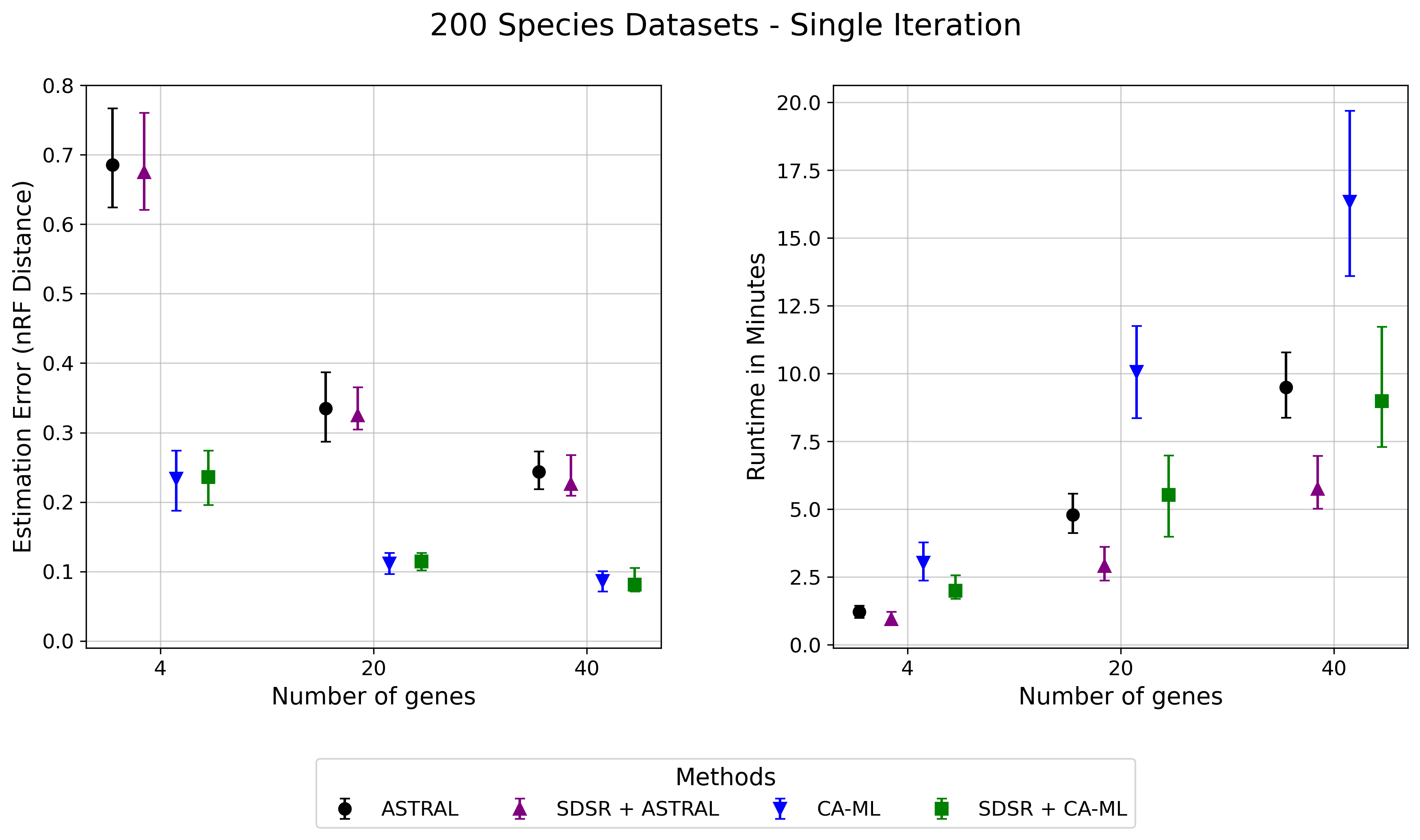}
 \caption{Results for the $200$ species datasets with a single \texttt{SDSR} iteration: (left panel) nRF distance between the estimated and ground-truth species trees as a function of the number of genes; (right panel) runtime in minutes. 
 }
 \label{fig:200_species}
\end{figure}

\section{Reconstruction accuracy with different threshold values}\label{sec:multiple_partitions}

Here, we measure the effect of $\tau$, the threshold parameter that upper bounds the size of the tree partitions, on the accuracy and runtime. The experiment was conducted using the $200$ species datasets.
We applied CA-ML as a stand-alone method and as a subroutine within \texttt{SDSR} on a single CPU, using $\beta = 0.05$, and various values of $\tau$. 
Figure \ref{fig:recursion level} shows the runtime and accuracy vs. the number of genes.
The lines mark the range between the $25\%$ and $75\%$ quantiles, and the markers represent the median.
For lower values of $\tau$, the spectral approach has a significantly lower runtime, with a similar accuracy.

\begin{figure}[t]
 \centering
 \includegraphics[width=0.9\textwidth]{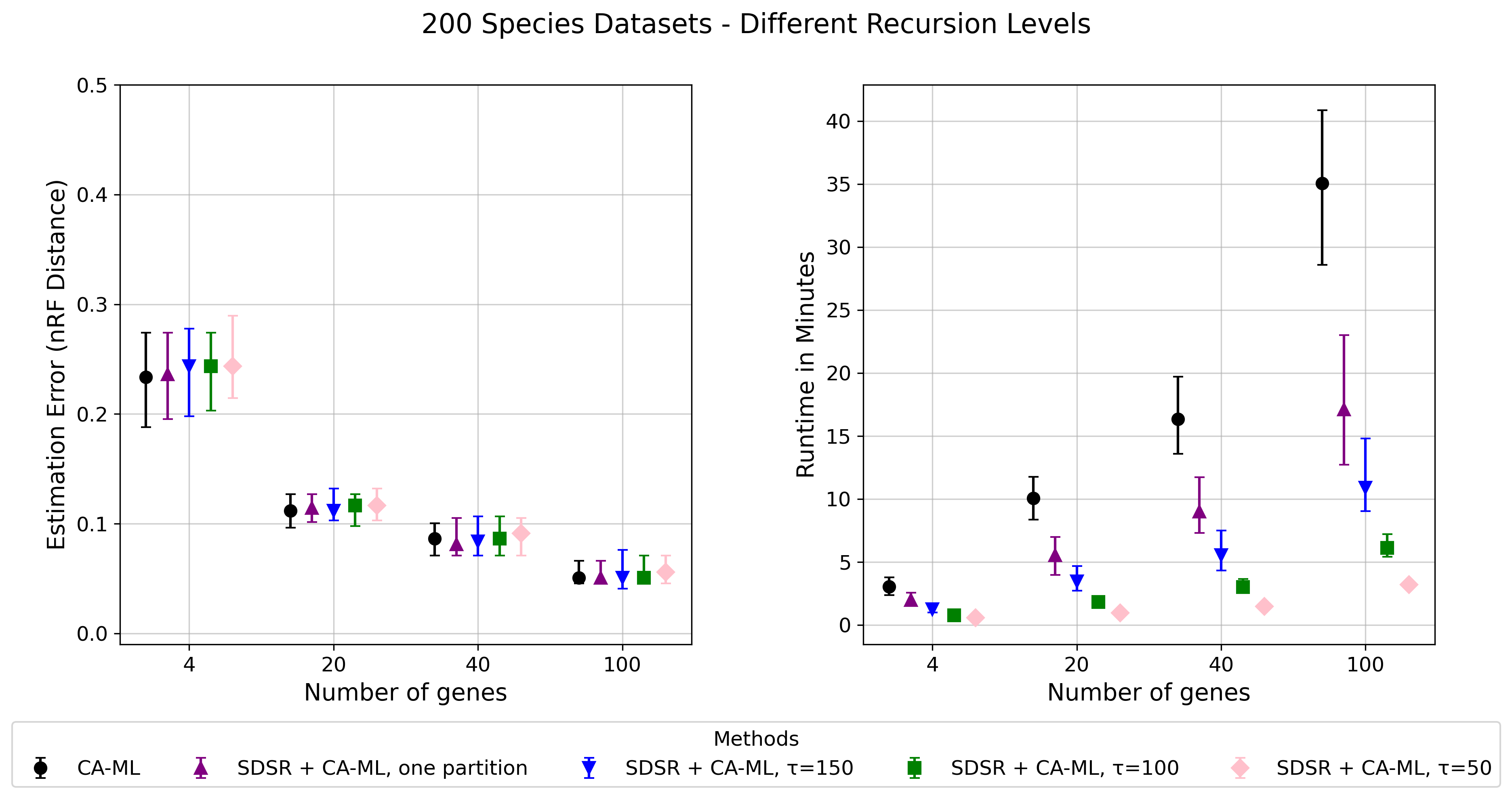}
 \caption{Results for the $200$ species datasets - multiple \texttt{SDSR} iterations on a single CPU: (left panel) nRF distance between the estimated and ground-truth species trees as a function of the number of genes; (right panel) runtime in minutes. 
 }
 \label{fig:recursion level}
\end{figure}

\section{Proofs for the finite genes guarantee}\label{sec:app_finite_gene}

This section is dedicated to the proof of Theorem~\ref{thm:finite_gene_guarantee_v2}, and it is organized as follows. In \ref{subsec:v2_proof} we prove Theorem \ref{thm:finite_gene_guarantee_v2} based on Lemma \ref{lem:finite_gene_guarantee}, which proved based on lemmas \ref{lem:the_allowed_error} and \ref{lem:concent_bound} in Subsection \ref{subsec:finite_gene_proof}. Then Lemma \ref{lem:the_allowed_error} and Lemma \ref{lem:concent_bound} proved in subsections \ref{subsec:allowed_error_justification} and \ref{subsec:conct_ineq_proof}, respectively. Finally, several auxiliary lemmas used in \ref{subsec:conct_ineq_proof} were proved in \ref{subsec:proof_auxiliary}.

\subsection{Proof of Theorem \ref{thm:finite_gene_guarantee_v2}}\label{subsec:v2_proof}

We first present two auxiliary results, Lemma \ref{lem:finite_gene_guarantee} and Lemma \ref{lem:spec_norm_similarity_bnd}. To state Lemma \ref{lem:finite_gene_guarantee}, we first
introduce \textit{the allowed error} $\Delta(\mathcal T_E)$, where $\mathcal T_E$ is a tree with the same topology as $\mathcal T$ and with similarity $\mathbb{E}[S^g]$ as discussed in Subsection \ref{subsec:finite_gene_garantee}.  
The quantity $\Delta(\mathcal T_E)$ is 
given by
\begin{equation}\label{eq:allowed_error_quantity}
        \Delta(\mathcal{T}_E)=\frac{\sqrt{m} \cdot S_{\min}}{\sqrt{\eta}2^{3/2}\left(\sqrt{m}+1\right)}\min\left\{ 1,\frac{1}{1+\eta}\left(\frac{1}{s_r^2}-1\right)\right\},
\end{equation}
where the quantities $\eta,S_{\min}$ and $s_r$
were all defined in Subsection \ref{subsec:finite_gene_garantee}. 

The next lemma, proved in Subsection \ref{subsec:finite_gene_proof}, states that if the number of genes is larger than a quantity that depends on $\Delta(\mathcal T_E)$, then spectral partition yields two clans of the species tree. 
In what follows, $\|S\|$ is the spectral norm of the matrix $S$.
\begin{lemma}\label{lem:finite_gene_guarantee}
    Let $\mathcal T$ be an ultrametric species tree with $m$ terminal nodes and similarity matrix $S$. 
    Let $h_L,h_R$ be the two children of the root $r$ of $\mathcal T$ and let $C_L,C_R$ be their observed descendant species, respectively. 
    Let $\{\mathcal{T}^g\}_{g=1}^K$ be $K$ i.i.d. gene trees generated according to the MSC+GTR model,  
    and let $\{S^g\}_{g=1}^K$ be their exact similarity matrices. Assume that for some $\epsilon>0$
    \begin{equation}\label{eq:finite_gene}
        K \geq \left(\frac{\|S\|^2}{2\Delta^2(\mathcal{T}_E)}+\frac{2\|S\|}{3\Delta(\mathcal{T}_E)} \right)\cdot \log\left(\frac{2m}{\epsilon}\right),
    \end{equation}    
    then, with probability larger than $1-\epsilon$, partitioning based on the Fiedler vector corresponding to $\bar S=\frac1K \sum_g S^{g}$ yields the two clans $C_L$ and $C_R$ of the species tree $\mathcal T$.
\end{lemma}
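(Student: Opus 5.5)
The plan is to combine a deterministic perturbation statement with a matrix concentration inequality; these are exactly the two ingredients the roadmap announces, namely Lemma \ref{lem:the_allowed_error} and Lemma \ref{lem:concent_bound}.

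\textbf{Deterministic step.} I first establish that $\Delta(\mathcal T_E)$ is an admissible perturbation size. Precisely, if $\|\bar S-\E[S^g]\|\le \Delta(\mathcal T_E)$, then the sign pattern of the Fiedler vector of the Laplacian of $\bar S$ coincides with that of $\E[S^g]$. By Lemma \ref{lem:population_mean} and the ultrametric assumption, the latter sign pattern separates $C_L$ from $C_R$.

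To prove this, I would work with the tree $\mathcal T_E$ given by Lemma \ref{lem:weight_assignment}. In $\mathcal T_E$ the Fiedler vector $v$ of the unnormalized Laplacian $L_E$ has a closed structure on the root split. Since $\mathcal T_E$ is ultrametric, $v$ is constant in sign on $C_L$ and $C_R$, and I would lower bound $\min_j |v(j)|$ by roughly $1/\sqrt{\eta m}$. I would also lower bound the gaps $\gamma_2$ and $\gamma_3-\gamma_2$. Roughly, $\gamma_2\gtrsim m S_{\min}$, and $\gamma_3-\gamma_2\gtrsim m S_{\min}(1/s_r^2-1)/(1+\eta)$, coming from the comparison of within-clan similarity against across-root similarity. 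The multiplicity-one fact is cited from \cite{balakrishnan2011noise}.

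Next I need to control the Laplacian perturbation. The map $S\mapsto L$ is linear, and the degree part contributes at most $\sqrt m\,\|E\|$ in operator norm. This gives $\|\bar L-L_E\|\le(\sqrt m+1)\|\bar S-\E S^g\|$, which explains the factor $\sqrt m+1$ in $\Delta$. Davis--Kahan, in the form of Eq.~\eqref{eq:davis_kahan}, then gives $\|\hat v-v\|_\infty\le\|\hat v-v\|<\min_j|v(j)|$. Assembling these constants should reproduce Eq.~\eqref{eq:allowed_error_quantity}.

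\textbf{Probabilistic step.} I would apply the matrix Bernstein inequality to the i.i.d.\ centered symmetric matrices $X_g=(S^g-\E S^g)/K$. Every entry of $S^g$ lies in $[0,1]$ and $S^g$ is entrywise dominated by the species similarity $S$, since $\tau^g_{ij}\ge\tau_{ij}$. Entrywise domination of nonnegative matrices gives $\|S^g\|\le\|S\|$, and hence $\|X_g\|\le\|S\|/K$. The variance proxy satisfies $\|\sum\E X_g^2\|\le\|S\|^2/K$. Matrix Bernstein then gives
\[\Pr\big(\|\bar S-\E S^g\|\ge t\big)\le 2m\exp\Big(-\frac{Kt^2/2}{\|S\|^2+\|S\|t/3}\Big).\]
Setting $t=\Delta(\mathcal T_E)$ and requiring the right side to be at most $\epsilon$ yields exactly condition \eqref{eq:finite_gene}, with the constants $1/2$ and $2/3$ matching. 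This step is essentially Lemma \ref{lem:concent_bound}.

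\textbf{Conclusion and main obstacle.} On the event $\|\bar S-\E S^g\|<\Delta(\mathcal T_E)$, which has probability at least $1-\epsilon$, the deterministic step gives the partition $\{C_L,C_R\}$. I expect the main difficulty to lie in the deterministic step: obtaining explicit lower bounds on $\min_j|v(j)|$ and on the eigengap $\gamma_3-\gamma_2$ for the ultrametric tree $\mathcal T_E$ in terms of $\eta$, $S_{\min}$ and $s_r$. My first attempt would use an explicit two-block test vector, constant on $C_L$ and on $C_R$, together with Courant--Fischer and an interlacing argument on the block structure induced by the root.
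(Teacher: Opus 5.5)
Your architecture matches the paper's: the deterministic allowed-error statement (Lemma~\ref{lem:the_allowed_error}) combined with matrix Bernstein (Lemma~\ref{lem:concent_bound}) evaluated at $t=\Delta(\mathcal T_E)$. However, the probabilistic step as you wrote it does not deliver the threshold in \eqref{eq:finite_gene}.

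\textbf{The constants do not match.} Your variance proxy is $\|\sum_g\mathbb E X_g^2\|\le\|S\|^2/K$. With it, the requirement $2m\exp\bigl(-\frac{K\Delta^2/2}{\|S\|^2+\|S\|\Delta/3}\bigr)\le\epsilon$ is equivalent to
\[
K\ge\Bigl(\frac{2\|S\|^2}{\Delta^2(\mathcal T_E)}+\frac{2\|S\|}{3\Delta(\mathcal T_E)}\Bigr)\log\Bigl(\frac{2m}{\epsilon}\Bigr).
\]
The leading coefficient here is $2$, not $1/2$. So the claim that ``the constants match'' is an arithmetic slip, and your argument proves the lemma only with up to four times as many genes.

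The missing ingredient is the sharper variance bound $\|\mathbb E[(S^g-\mathbb E S^g)^2]\|\le\|S\|^2/4$:
\begin{itemize}
\item By Lemma~\ref{lem:entry_noise}, $S^g_{il}\in[0,S_{il}]$, so the bounded-range variance inequality gives $\sigma(S^g_{il})\le S_{il}/2$.
\item Hence $|V_{ij}|\le\sum_l\sigma(S^g_{il})\sigma(S^g_{jl})\le\frac14(S^2)_{ij}$.
\item Lemmas~\ref{lem:nonnegetive_bnd} and~\ref{lem:entrywise_absolute} then give $\|V\|\le\||V|\|\le\frac14\|S\|^2$.
\end{itemize}

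\textbf{The uniform bound also has a gap.} The step ``$\|S^g\|\le\|S\|$, hence $\|X_g\|\le\|S\|/K$'' does not follow. The triangle inequality only gives $2\|S\|/K$, which would turn the $2/3$ into $4/3$. Use instead the entrywise bound $|S^g_{ij}-\mathbb E S^g_{ij}|\le S_{ij}$, which holds because both numbers lie in $[0,S_{ij}]$. Then $\|S^g-\mathbb E S^g\|\le\||S^g-\mathbb E S^g|\|\le\|S\|$.

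\textbf{The deterministic step.} You do not need to re-derive the eigengap bounds. Your ``roughly'' estimates for $\min_j|v(j)|$, $\gamma_2$ and $\gamma_3-\gamma_2$ are unproven; they essentially reverse-engineer \eqref{eq:allowed_error_quantity}. The paper simply invokes Lemma~5.1 of \cite{aizenbud2021spectral}.

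The one thing that must actually be checked is something you take for granted: that $\mathcal T_E$ is ultrametric (Lemma~\ref{lem:S_and_ESg}). This is short. For $i\in C_L$ and $j\in C_R$,
\[
\mathbb E[S^g_{ij}]=S_{ij}\,\mathbb E\bigl[e^{4\mathrm{tr}(Q)\Delta^g_{ij}}\bigr].
\]
Here $S_{ij}$ is constant because $\mathcal T$ is ultrametric. The $\Delta^g_{ij}$ are identically distributed because all these pairs have MRCA $r$ (Property~\ref{prop_1}). So the cross-root block of $\mathbb E[S^g]$ is constant, which is what makes the root split $\{C_L,C_R\}$ the relevant one.
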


The lower bound \eqref{eq:finite_gene} depends on $\|S\|$. 
Next, we upper bound this quantity. 
Since all entries $S(i,j)\in[0,1]$, a trivial bound is $\|S\|\leq m$. 
In the following lemma, proved in Appendix~\ref{subsec:proof_auxiliary}, we present a sharper bound.
To describe it, recall the definition of $\xi$ as the upper bound on the similarity between adjacent nodes.

\begin{lemma}\label{lem:spec_norm_similarity_bnd}
    Let $\mathcal{T}$ be a (rooted or unrooted) binary tree with $m$ terminal nodes, and let $S$ denote its associated similarity matrix.
    Assume that for some constant $\xi<1$, 
    the similarity values
    $\mathcal S(e)$ at all edges $e$ of $\mathcal T$ satisfy $\mathcal S(e) \leq \xi$. 
    Then, 
    \begin{equation}\label{eq:S_norm_bound}
        \|S\|\leq
        \begin{cases}
        3/2 & \xi \leq 0.5 \\
        1 + 2(m-1)^{1 + \log_{2} \xi} &  \xi > 0.5
        \end{cases}    
    \end{equation}
\end{lemma}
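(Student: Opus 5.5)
The plan is to bound $\|S\|$ by the maximal row sum of $S$, and then control each row sum by combining the multiplicativity of similarities along paths, Eq.~\eqref{eq:multiplicativity}, with a Kraft-type identity for binary trees. Since $S$ is symmetric with nonnegative entries, $\|S\| \le \max_i \sum_j S(i,j)$; this follows from Gershgorin or from $\|S\|\le \sqrt{\|S\|_1\|S\|_\infty}$. The diagonal contributes $S(i,i)=1$. For $j\neq i$, let $d_{ij}$ be the number of edges on the path from $i$ to $j$. By Eq.~\eqref{eq:multiplicativity} and $\mathcal S(e)\le \xi$,
\[
S(i,j)\le \xi^{d_{ij}}, \qquad\text{so}\qquad \|S\|\le 1+\max_i \sum_{j\ne i}\xi^{d_{ij}}.
\]

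Next I will show a Kraft-type inequality: for every terminal node $i$,
\[
\sum_{j\neq i} 2^{-d_{ij}}\le \tfrac12 .
\]
Let $u$ be the neighbor of $i$ and remove the edge $(i,u)$. What remains is a tree rooted at $u$ in which every internal node has at most two children, and $j$ sits at depth $d_{ij}-1$. In the unrooted binary case it is a full binary tree rooted at $u$. In the rooted case, the degree-$2$ root of $\mathcal T$ only becomes an internal node with a single child, which lengthens paths. The classical Kraft inequality for leaves of a tree whose nodes have at most two children gives $\sum_j 2^{-(d_{ij}-1)}\le 1$, which is the claim. I expect this to be the main step: a naive count of the leaves at distance $k$ (at most $2^{k-2}$) yields a divergent series at $\xi=1/2$, so the bound must use the global constraint.

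\emph{Case $\xi\le 1/2$.} Here $\xi^{d}\le 2^{-d}$, so the row sum is at most $1+\tfrac12=\tfrac32$.

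\emph{Case $\xi>1/2$.} Write $\xi = 2^{-p}$ with $p=-\log_2\xi\in(0,1)$, and set $x_j=2^{-d_{ij}}$. Then $\sum_j \xi^{d_{ij}}=\sum_j x_j^{p}$, with $m-1$ nonnegative terms satisfying $\sum_j x_j\le \tfrac12$. Since $t\mapsto t^p$ is concave, Jensen's (or Hölder's) inequality gives
\[
\sum_{j} x_j^p\le (m-1)^{1-p}\Big(\sum_j x_j\Big)^p\le (m-1)^{1-p}2^{-p}\le (m-1)^{1+\log_2\xi}.
\]
Hence $\|S\|\le 1+(m-1)^{1+\log_2\xi}$, which is already within the stated bound $1+2(m-1)^{1+\log_2\xi}$; the factor $2$ leaves slack. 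Both cases together give Eq.~\eqref{eq:S_norm_bound}.
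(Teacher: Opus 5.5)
Your proof is correct. Its first half matches the paper's: both bound $\|S\|$ by the largest row sum, delete leaf $i$, root the rest at its neighbor $u$, and use multiplicativity to get $S(i,j)\le\xi^{d_{ij}}$. The paper writes this row sum as $1+\xi\|u\|_1$, with $u$ the similarities from that neighbor to the other $m-1$ leaves.

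The routes split at the key step, bounding $\sum_{j\ne i}\xi^{d_{ij}}$ uniformly over topologies. The paper proves an auxiliary lemma by an exchange argument. It moves cherries to deeper leaves (for $\xi\le 1/2$) or shallower ones (for $\xi>1/2$), shows each move increases $\sum_j\xi^{d_j}$, concludes that the caterpillar, respectively the complete balanced tree, is extremal, and evaluates the sum there. You replace all of this by Kraft's inequality, which encodes the global depth constraint in one line. You then finish with $\xi^d\le 2^{-d}$, or with Jensen for $t\mapsto t^p$, $p=-\log_2\xi\in(0,1)$.

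Your route has several advantages:
\begin{itemize}
\item It is shorter and needs none of the termination and global-optimality claims that the exchange argument only sketches.
\item It explicitly covers the unary node created by re-rooting a rooted $\mathcal T$. The paper's lemma, stated for rooted binary trees, does not literally cover this case.
\item It is sharper for $\xi>1/2$. Your chain gives $1+\xi(m-1)^{1+\log_2\xi}$, and Kraft and Jensen are both tight for a perfectly balanced pruned tree. The paper's balanced-tree estimate loses a factor $1/\xi$ and ends at $1+(m-1)^{1+\log_2\xi}$. Both are below the stated $1+2(m-1)^{1+\log_2\xi}$.
\end{itemize}

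The paper's approach buys structural information, namely which topologies are extremal. Also, for $\xi<1/2$ its chain gives $1+\xi$ rather than $3/2$. You would recover this by pulling one factor $\xi$ off the edge at $i$ before applying Kraft.

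A harmless slip in your aside: up to $2^{k-1}$ leaves, not $2^{k-2}$, can lie at distance $k$ from $i$.
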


With these two results, we now prove Theorem~\ref{thm:finite_gene_guarantee_v2}.

\begin{proof}[Proof of Theorem \ref{thm:finite_gene_guarantee_v2}]
    Assume that the conditions of Theorem \ref{thm:finite_gene_guarantee_v2} hold, and in particular that the number of genes $K$ satisfies Eq. \eqref{eq:RHS_bound}. 
    To prove the corollary, is suffices to show that the right-hand side of \eqref{eq:finite_gene} is smaller than that of \eqref{eq:RHS_bound}. 
    Since the RHS of Eq. \eqref{eq:finite_gene} depends in $\|S\|/\Delta(\mathcal T_E)$, we separately upper bound $\|S\|$ and $\frac{1}{\Delta(T_E)}$.
    
    To bound $\frac{1}{\Delta(\mathcal{T}_E)}$, we take the reciprocal of Eq. 
    \eqref{eq:allowed_error_quantity},  
    \begin{equation} \label{eq:delta_inv_expanded}
        \frac{1}{\Delta(\mathcal{T}_E)} = \frac{\sqrt{\eta}\cdot 2^{3/2}(\sqrt{m}+1)}{\sqrt{m}} \cdot (S_{\min})^{-1} \cdot \max\left\{1, (1+\eta)\cdot \frac{s_r^2}{1 - s_r^2} \right\}.
    \end{equation}
    Next, we bound the maximum term in \eqref{eq:delta_inv_expanded}. Since $\eta > 1$ and $s_r<1$, 
    \begin{equation} \label{eq:max_term_bound}
        \max\left\{1, (1+\eta)\cdot \frac{s_r^2}{1 - s_r^2} \right\} \leq \frac{1+\eta}{1 - s_r^2}.
    \end{equation}
    Now, notice that $\frac{\sqrt{\eta}\cdot 2^{3/2}(\sqrt{m}+1)}{\sqrt{m}} \leq c \sqrt{\eta}$ for some constant $c$. 
    Inserting this inequality and Eq. \eqref{eq:max_term_bound} into Eq. \eqref{eq:delta_inv_expanded} gives
    \begin{equation} \label{eq:delta_inv_final}
        \frac{1}{\Delta(\mathcal{T}_E)} \leq c_1\frac{\eta^{3/2}}{(1 - s_r^2)S_{\min}},
    \end{equation}
    where $c_1$ is some constant. 
    
    We now proceed to bounding the spectral norm  $\|S\|$. 
    By Equation~\eqref{eq:S_norm_bound}
    of Lemma \ref{lem:spec_norm_similarity_bnd}, 
    \begin{equation} \label{eq:S_norm_bound2}
        \|S\| \leq 
        \begin{cases}
            3/2 & \text{if } \xi \leq \tfrac{1}{2}, \\
            c_2 \cdot m^{1 + \log_2 \xi} & \text{if } \xi > \tfrac{1}{2}.
        \end{cases}
    \end{equation}
    where $c_2$ is a universal constant.

    Next, since both  $\|S\|\geq 1$ and $\frac{1}{\Delta(\mathcal{T}_E)}\geq 1$, it follows that $\frac{\|S\|}{\Delta(\mathcal{T}_E)}\leq\frac{\|S\|^2}{\Delta^2(\mathcal{T}_E)}$. Thus, 
    the term in Eq. \eqref{eq:finite_gene} may be bounded as follows, 
    \begin{equation}\label{eq:dominant_term}
        \left(\frac{\|S\|^2}{2\Delta^2(\mathcal{T}_E)}+\frac{2\|S\|}{3\Delta(\mathcal{T}_E)} \right)\leq \frac{7}{6}\frac{\|S\|^2}{\Delta^2(\mathcal{T}_E)}.
    \end{equation}  

    Finally, combining equations  \eqref{eq:delta_inv_final}, \eqref{eq:S_norm_bound2} and \eqref{eq:dominant_term} yields
    \[
    \left(\frac{\|S\|^2}{2\Delta^2(\mathcal{T}_E)}+\frac{2\|S\|}{3\Delta(\mathcal{T}_E)} \right)\cdot \log\left(\frac{2m}{\epsilon}\right)
    \leq
    \begin{cases}
    C_1 \cdot \frac{\eta^{3}}{(1 - s_r^2)^2S^2_{\min}} \cdot \log(\tfrac{m}{\epsilon}) & \text{if } \xi \leq \tfrac{1}{2}, \\
    C_2 \cdot \frac{\eta^{3}}{(1 - s_r^2)^2S^2_{\min}} \cdot m^{2 + 2\log_2 \xi}\cdot\log(\tfrac{m}{\epsilon}) & \text{if } \xi > \tfrac{1}{2}.
    \end{cases}
    \]
    where $C_1, C_2$ are absolute constants. 
    As discussed above, this inequality yields Eq. \eqref{eq:finite_gene}. 
\end{proof}

\subsection{Proof of Lemma \ref{lem:finite_gene_guarantee}}\label{subsec:finite_gene_proof}

The proof relies on the following two auxiliary lemmas. 
The first is a deterministic result showing that if $\|\bar{S} - \mathbb{E}[S^g]\| \le \Delta(\mathcal{T}_E)$, then the Fiedler vector yields a correct partition. 
The second is a concentration inequality on $\|\bar{S} - \mathbb{E}[S^g]\|$. 
Lemma \ref{lem:finite_gene_guarantee} is proved by bounding the probability that $\|\bar{S}-\mathbb{E}[S^g]\|\geq\Delta(\mathcal{T}_E)$.
We first formally state these two auxiliary results.

\begin{lemma}\label{lem:the_allowed_error}
    Let $\T^g$ be a gene tree generated according to the MSC+GTR model with an underlying 
    ultrametric species tree $\cal T$ with $m$ species. 
    Let $\mathcal T_E$ be the tree that corresponds to the matrix $\mathbb{E}[S^g]$, where $S^g$ is the similarity of $\T^g$.
    Let $\Delta(\mathcal{T}_E)$ be the allowed error defined in Eq. \eqref{eq:allowed_error_quantity}. 
    Then, for any symmetric matrix $A$ 
    \begin{equation}
        \label{eq:allowed_error2}
    \|A-\mathbb{E}[S^g]\|\leq \Delta(\mathcal{T}_E)
    \end{equation}
    partitioning the terminal nodes based on the Fiedler vector of the Laplacian of $A$ yields two clans in the species tree $\mathcal T$.
\end{lemma}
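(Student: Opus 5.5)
We prove Lemma \ref{lem:the_allowed_error} by a perturbation argument on the Fiedler vector. The structure is: (a) describe the Fiedler vector $v$ of the unperturbed Laplacian $L_E$ of $\mathbb{E}[S^g]$ and bound its entries away from zero; (b) bound the spectral gap of $L_E$ from below; (c) apply Davis--Kahan to show that the perturbed Fiedler vector $\hat v$ keeps the sign pattern of $v$.

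\textbf{Step (a): structure of $v$.} By Lemma \ref{lem:population_mean} and Lemma \ref{lem:weight_assignment}, $\mathbb{E}[S^g]$ is the similarity matrix of $\mathcal T_E$, which has the topology of $\mathcal T$. The MSC is symmetric with respect to the root, and $\mathcal T$ is ultrametric, so every pair $(i,j)$ with $i\in C_L$, $j\in C_R$ has the same expected similarity. Call this common value $s_0$; note $s_0\ge S_{\min}$. Hence the cross block satisfies $\mathbb{E}[S^g](C_L,C_R)=s_0\mathbf 1\mathbf 1^T$.

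Let $u$ be the vector equal to $1/|C_L|$ on $C_L$ and $-1/|C_R|$ on $C_R$, suitably normalized. Then
\[
L_E u = s_0 m\, u + \big(L_{C_L}\oplus L_{C_R}\big)u,
\]
and since each within-clan Laplacian annihilates constants on its block, the second term vanishes. So $u$ is an eigenvector with eigenvalue $\lambda_u=s_0 m$.

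We then show $\lambda_u$ is the smallest nonzero eigenvalue, using the multiplicity-one fact from \cite[Lemma 7]{balakrishnan2011noise}. Every other eigenvector orthogonal to $\mathbf 1$ and $u$ sums to zero on each clan. On such a vector the Laplacian acts as $s_0 m\,I$ plus the within-clan Laplacian with reduced weights $S_E(i,j)-s_0$. Within each clan the smallest similarity is $S_E(i,j)\ge s_0/s_r^2$, because the path between two leaves of the same clan avoids the two root edges. Consequently the next eigenvalue satisfies
\[
\gamma_3 \ge s_0 m + \min(|C_L|,|C_R|)\,s_0\Big(\tfrac1{s_r^2}-1\Big).
\]
This gives
\[
\gamma_3-\gamma_2 \gtrsim \frac{m\,s_0}{1+\eta}\Big(\tfrac1{s_r^2}-1\Big), \qquad \gamma_2=m s_0\ge m S_{\min}.
\]
These two expressions are exactly the two terms inside the $\min$ in \eqref{eq:allowed_error_quantity}.

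\textbf{Step (b): entries of $v$ and the norm budget.} The normalized $u$ has entries of magnitude at least $1/\sqrt{\eta m}$ up to factors of order $(\sqrt m+1)/\sqrt m$; the smallest entry belongs to the larger clan. We also need a Laplacian norm bound. Because the Laplacian involves the degree matrix, $\|L_A-L_E\|\le \|D_A-D_E\|+\|A-\mathbb{E}S^g\|$. Controlling the degree term by the spectral norm costs roughly a factor of $\sqrt m$. We will try to absorb this by bounding the $\ell_\infty$ perturbation of $\hat v$ rather than its $\ell_2$ perturbation, which is plausibly where the $\sqrt m/(\sqrt m+1)$ factor originates.

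\textbf{Step (c): Davis--Kahan.} As in the proof of Theorem \ref{thm:partitioning_consistency}, Davis--Kahan gives
\[
\|\hat v - v\|\le 2^{3/2}\,\frac{\|L_A-L_E\|}{\min(\gamma_2,\gamma_3-\gamma_2)}.
\]
Substituting the gap bounds and $\|A-\mathbb{E}[S^g]\|\le\Delta(\mathcal T_E)$ makes the right-hand side smaller than $\min_j|v_j|\approx 1/\sqrt{\eta m}$. Therefore no entry changes sign, and the Fiedler partition of $A$ recovers exactly $C_L$ and $C_R$.

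\textbf{Main obstacle.} The hard part is the eigenvalue-gap lower bound, specifically proving $\gamma_3-\gamma_2\gtrsim \frac{m s_0}{1+\eta}(s_r^{-2}-1)$ with the correct dependence on $\eta$. The reduced within-clan Laplacians have small smallest-nonzero eigenvalues, so the bound must come from the $s_0 m$ shift together with the uniform within-clan excess similarity, not from the within-clan Laplacians themselves. A second delicate point is keeping the $m$-dependence clean when passing from $\|A-\mathbb{E}[S^g]\|$ to the Laplacian perturbation.
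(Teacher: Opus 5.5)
Your proposal is correct and follows the paper's route. The paper's only new ingredient is that the cross block $\mathbb{E}[S^g](C_L,C_R)$ is constant, so $\mathcal T_E$ is ultrametric; this follows from Property 1 together with ultrametricity of $\mathcal T$, and the paper then cites Lemma 5.1 of \cite{aizenbud2021spectral}. Your explicit Fiedler vector, eigengap bound $\min(\gamma_2,\gamma_3-\gamma_2)\ge mS_{\min}\min\{1,\frac{1}{1+\eta}(s_r^{-2}-1)\}$ and Davis--Kahan step simply unpack that cited lemma. One correction to Step (b): no $\ell_\infty$ refinement is needed, because $\min_j|v_j|=1/\sqrt{\eta m}$ exactly and the plain bound $\|L_A-L_E\|\le(\sqrt m+1)\|A-\mathbb{E}[S^g]\|$ is exactly where the factor $(\sqrt m+1)$ in $\Delta(\mathcal T_E)$ comes from. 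That bound follows from $\|\mathrm{diag}(E\mathbf 1)\|\le\sqrt m\|E\|$ with $E=A-\mathbb{E}[S^g]$, so the constants match $\Delta(\mathcal T_E)$ on the nose.
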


The above lemma is a variant of Lemma 5.1 from \cite{aizenbud2021spectral}. 
To apply the proof of \cite{aizenbud2021spectral} to our setting, 
we have to show that if $\mathcal{T}$ is ultrametric, then $\mathcal{T}_E$ is also ultrametric. 
This is established in Subsection \ref{subsec:allowed_error_justification}.

Next, we present the second result. 
Its proof appears in Appendix \ref{subsec:conct_ineq_proof}.
\begin{lemma}[Concentration Inequality]\label{lem:concent_bound}
    Let $\T$ be a species tree with $m$ terminal nodes, and similarity matrix $S$. 
    Let $\{\T^g\}_{g=1}^K$ be $K$ i.i.d. gene trees generated according to the MSC+GTR model,
    and let $\{S^g\}_{g=1}^K$ be their exact similarity matrices.
    Then, the matrix $\bar{S}=\frac{1}{K}\sum_{g=1}^K S^g$ satisfies 
    \begin{equation}
        \label{eq:concentration_bar_S}
        \Pr(\|{\bar{S}-\mathbb{E}[S^g]}\|>t)\leq 2m\exp{\left(-\frac{Kt^{2}/2}{\| S\|^{2}/4+\| S\|t/3}\right)}, 
        \quad \forall t\geq 0.
    \end{equation}
\end{lemma}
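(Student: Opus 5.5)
This is a direct application of the matrix Bernstein inequality for sums of independent, zero-mean, bounded, symmetric random matrices.

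Write
\[
\bar S-\mathbb{E}[S^g]=\sum_{g=1}^K Z_g, \qquad Z_g=\tfrac1K\big(S^g-\mathbb{E}[S^g]\big).
\]
The matrices $Z_g$ are i.i.d. because the gene trees are i.i.d. Each $Z_g$ is symmetric $m\times m$ with $\mathbb{E}[Z_g]=0$. Matrix Bernstein (in the form of Tropp) gives
\[
\Pr\Big(\Big\|\sum_g Z_g\Big\|>t\Big)\le 2m\exp\Big(-\frac{t^2/2}{\sigma^2+Rt/3}\Big),
\]
where $\|Z_g\|\le R$ almost surely and $\sigma^2=\|\sum_g\mathbb{E}[Z_g^2]\|$. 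Matching \eqref{eq:concentration_bar_S} therefore reduces to two bounds: $R\le \|S\|/K$ and $\sigma^2\le \|S\|^2/(4K)$. Substituting them and multiplying numerator and denominator by $K$ then gives exactly the stated expression.

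\textbf{The domination $0\le S^g\le S$ entrywise.} By condition 1 of the MSC, $\tau^g_{ij}\ge\tau_{ij}$. Since $\mathrm{tr}(Q)<0$, Eqs. \eqref{eq:GTR_similarity} and \eqref{eq:GTR_similarity_species} give $0\le S^g_{ij}\le S_{ij}$ for all $i,j$; both matrices have ones on the diagonal. Taking expectations, $0\le \mathbb{E}[S^g]\le S$ entrywise as well. For nonnegative matrices the spectral norm is monotone under entrywise domination, by Perron--Frobenius or by testing against $|x|$. Hence $\|S^g\|\le\|S\|$ and $\|\mathbb{E}S^g\|\le\|S\|$.

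\textbf{The bound on $R$.} Entrywise, $S^g-\mathbb{E}[S^g]$ lies between $-S$ and $S$, so its absolute value is dominated by $S$. The spectral norm of a real matrix is at most that of its entrywise absolute value, hence $\|S^g-\mathbb{E}S^g\|\le\|S\|$. (Alternatively, $\|S^g-\mathbb{E}S^g\|\le\max(\|S^g\|,\|\mathbb{E}S^g\|)\le\|S\|$, which holds because both are PSD when $S^g$ is a tree similarity.) This gives $R=\|S\|/K$.

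\textbf{The bound on $\sigma^2$, the main obstacle.} The factor $1/4$ must come from the centered matrix being a variance. The plan is to write $X=S^g$, so $\mathbb{E}[(X-\mathbb{E}X)^2]=\mathbb{E}[X^2]-(\mathbb{E}X)^2$, and then control this by a matrix analogue of Popoviciu's inequality. Introduce the midpoint-shifted variable $Y=X-S/2$. Since $0\le X_{ij}\le S_{ij}$, we get $|Y|\le S/2$ entrywise, so $\|Y\|\le\|S\|/2$. Moreover $\mathbb{E}[(X-\mathbb{E}X)^2]=\mathbb{E}[Y^2]-(\mathbb{E}Y)^2\preceq \mathbb{E}[Y^2]$. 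Each $Y^2$ is PSD with $\|Y^2\|\le\|S\|^2/4$, so $\|\mathbb{E}[Y^2]\|\le\|S\|^2/4$. Summing over the $K$ genes with the $1/K^2$ scaling yields $\sigma^2\le\|S\|^2/(4K)$. Plugging $R$ and $\sigma^2$ into Bernstein completes the proof.
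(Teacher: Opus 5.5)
Your proof is correct. The reduction to matrix Bernstein is the same as the paper's; you normalize by $1/K$ and keep threshold $t$, while the paper keeps the unnormalized sum and uses threshold $Kt$, which is cosmetic. Your bound $R=\|S\|/K$ also matches the paper's: entrywise $|S^g-\mathbb{E}S^g|\le S$, combined with $\|A\|\le\||A|\|$ and monotonicity of the spectral norm on nonnegative matrices.

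You genuinely differ from the paper in the variance proxy.

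\textbf{The paper's route.} It works entrywise. It bounds $|V_{ij}|\le\sum_l\sigma(S^g_{il})\sigma(S^g_{jl})\le\frac14(S^2)_{ij}$, using the scalar Popoviciu bound $\sigma(S^g_{il})\le S_{il}/2$ and Cauchy--Schwarz on covariances. It then passes to the spectral norm via $\|V\|\le\||V|\|\le\frac14\|S^2\|=\frac14\|S\|^2$.

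\textbf{Your route.} You work in the Loewner order. Centering at the deterministic midpoint $S/2$ gives $0\preceq\mathbb{E}[(X-\mathbb{E}X)^2]=\mathbb{E}[Y^2]-(\mathbb{E}Y)^2\preceq\mathbb{E}[Y^2]\preceq\frac{\|S\|^2}{4}I$. The nonnegative-matrix tools are used only once, to get $\|Y\|\le\|S\|/2$ from $|Y|\le S/2$.

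\textbf{What each buys.} Your argument is a genuine matrix analogue of Popoviciu's inequality. It holds for any symmetric random $X$ with $\|X-M\|\le r$ almost surely for some deterministic $M$, and it avoids the covariance bookkeeping. The paper's argument gives the finer entrywise statement $|V|\le\frac14 S^2$, at the cost of the extra step from $V$ to $|V|$. Both yield the same constant $\frac14$.

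\textbf{Your parenthetical alternative for $R$.} It is also valid. $S^g_{ij}$ depends only on the MRCA of $i,j$ in the gene tree and decreases toward the root. Hence $S^g$ is a nonnegative combination of the rank-one matrices built from indicator vectors of clade leaf sets, so it is PSD, and so is $\mathbb{E}S^g$. That fact would need this justification, but your main argument does not require it.
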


We are now ready to prove  Lemma~\ref{lem:finite_gene_guarantee}.
\begin{proof}[Proof of Lemma~\ref{lem:finite_gene_guarantee}]
    Let $\Delta(\mathcal{T}_E)$ be the allowed error, defined in \eqref{eq:allowed_error_quantity}. 
    By Lemma~\ref{lem:concent_bound}, with $t = \Delta(\mathcal{T}_E)$

    \begin{equation}\label{eq:error_probability}
        \Pr\left( \|\bar{S} - \mathbb{E}[S^g]\| \leq  \Delta(\mathcal{T}_E) \right) 
        \geq 1- 2m \exp\left( -\frac{K \Delta^2(\mathcal{T}_E)/2}{\|S\|^2/4 + \|S\| \Delta(\mathcal{T}_E)/3} \right).
    \end{equation}
    By Lemma~\ref{lem:the_allowed_error}, if $\|\bar S-\mathbb{E}[S^g]\|\leq \Delta(\mathcal T_E)$, the spectral partitioning is guaranteed to output a correct partition. 
    It is easy to verify that if the number of genes $K$ satisfies Eq. \eqref{eq:finite_gene}, then the RHS of \eqref{eq:error_probability} is at least $1-\epsilon$, which concludes the proof. 
\end{proof}

\subsection{On the proof of Lemma~\ref{lem:the_allowed_error}}\label{subsec:allowed_error_justification}

Here we address the difference between Lemma \ref{lem:the_allowed_error} and its reference Lemma 5.1 in \cite{aizenbud2021spectral}. 
Both results state a condition under which the sign pattern of the Fiedler vector yields a correct partition of the leaves into two clans. 
Specifically, the theorems require that the input matrix for the algorithm is not too far in spectral norm from a reference matrix.

The key difference lies in the assumptions made about the trees that correspond to the reference matrices.
In \cite{aizenbud2021spectral}, the underlying tree $\T_1$ is assumed to be ultrametric, and its similarity is the reference matrix.
In contrast, Lemma \ref{lem:the_allowed_error} assumes an ultrametric species tree $\T$, but the reference matrix is $\mathbb E[S^g]$, which corresponds to a different tree $\T_E$.
To apply the original proof from \cite{aizenbud2021spectral}, it suffices to show that $\T_E$ is also ultrametric.
This is formalized in the following lemma.

\begin{lemma}\label{lem:S_and_ESg}
    Let $S$ be the similarity matrix of an ultrametric species tree $\mathcal{T}$. 
    Let $S^g$ be the similarity matrix of a gene tree $\T^g$ generated according to the MSC+GTR model. 
    Then, the tree $\mathcal{T}_E$ with similarity $\mathbb{E}[S^g]$ is also ultrametric.
\end{lemma}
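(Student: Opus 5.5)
The plan is to show that $d_E(i,j) := -\log \mathbb{E}[S^g(i,j)]$ is an ultrametric on the leaves whose hierarchy is the topology of $\mathcal T$. I would then identify it with the tree metric of $\mathcal T_E$, using the fact that a tree metric on a binary topology essentially determines its edge lengths.

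\textbf{Step 1: $\mathbb{E}[S^g(i,j)]$ depends only on the MRCA $h_{ij}$.} Under MSC+GTR, $S^g(i,j)=e^{4\text{tr}(Q)\tau^g_{ij}} = S(i,j)\, e^{4\text{tr}(Q)\Delta^g_{ij}}$, so
\[
\mathbb{E}[S^g(i,j)] = e^{4\text{tr}(Q)\tau_{h_{ij}}}\, \mathbb{E}\big[e^{4\text{tr}(Q)\Delta^g_{ij}}\big].
\]
By Property \ref{prop_1}, the law of $\Delta^g_{ij}$ depends only on $h_{ij}$. Hence $\mathbb{E}[S^g(i,j)] = F(h_{ij})$ for some function $F$ on the internal nodes of $\mathcal T$.

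\textbf{Step 2: $F$ is strictly decreasing toward the root.} Take a triplet $i,j,l$ with $h_{ij}$ a strict descendant of $h_{il}=h_{jl}$. Property \ref{prop_2} says that $\tau_{il}-\tau_{ij}+\Delta^g_{il}$ strictly stochastically dominates $\Delta^g_{ij}$. The map $x\mapsto e^{4\text{tr}(Q)x}$ is strictly decreasing because $\text{tr}(Q)<0$. Therefore $F(h_{il})<F(h_{ij})$, which is the same computation as in the proof of Lemma \ref{lem:population_mean}. Together with Step 1, this gives the strong three-point condition $d_E(i,l)=d_E(j,l)>d_E(i,j)$ for every such triple.

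\textbf{Step 3: build a rooted ultrametric realization.} For each internal node $h$ of $\mathcal T$, set the height $H(h) := -\tfrac12 \log F(h)$, and set $H(\text{leaf})=0$. Give each edge $(h,h')$, with $h'$ the parent, the length $H(h')-H(h)$. This is positive by Step 2. For internal children it uses $H(h)\ge$ some positive value, since $F<1$ because $\tau_h>0$. In the resulting rooted tree every leaf is at height $H(r)$ below the root, so the tree is ultrametric. The leaf-to-leaf path length is $2H(h_{ij}) = d_E(i,j)$. Equivalently, the product of the edge weights $e^{-\text{length}}$ along the path equals $\mathbb{E}[S^g(i,j)]$.

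\textbf{Step 4 (main obstacle): identify this tree with $\mathcal T_E$.} $\mathcal T_E$ was only defined through Lemma \ref{lem:weight_assignment}, so the realization is not a priori unique. I would argue uniqueness as follows. For a binary topology with at least three leaves, each edge similarity is recoverable from leaf similarities by the standard four-point/three-point formulas. For a pendant edge at $i$, use $\mathcal S(i,h)^2 = W(i,j)W(i,k)/W(j,k)$. For an internal edge, use the analogous quartet ratio. Hence any two realizations of $\mathbb{E}[S^g]$ on this topology agree on every edge of the unrooted tree. The only freedom is how the weight of the edge through the root $r$ is split between $(r,h_L)$ and $(r,h_R)$. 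I would fix that split by choosing the point given by the heights $H$, which places the root so that all leaves are equidistant from it. So $\mathcal T_E$, rooted at that point, is ultrametric, and this is exactly what is needed to invoke the proof of Lemma 5.1 in \cite{aizenbud2021spectral}.

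The delicate points are twofold. First, the uniqueness claim must be checked to use only the positive determinant of the quartet condition. Second, the edge weights from Lemma \ref{lem:weight_assignment} must be confirmed to lie in $(0,1)$ rather than being merely nonnegative, so that the logarithms are finite.
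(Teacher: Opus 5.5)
Your proof is correct, but it follows a different route from the paper's.

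\textbf{The paper's route.} It uses Property~\ref{prop_1} only at the root. Every pair $i\in C_l$, $j\in C_r$ has MRCA $r$, so $\mathbb{E}[S^g_{ij}]=s_1\,\mathbb{E}[e^{4\text{tr}(Q)\Delta^g_{ij}}]$ is constant across the root split. It then appeals to the characterization ``$\mathcal T$ is ultrametric iff all cross-root similarities are equal.''

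\textbf{Your route.} You show that $\mathbb{E}[S^g(i,j)]=F(h_{ij})$ at every internal node. You then use Property~\ref{prop_2} to make $F$ strictly decreasing toward the root, and you build the realization explicitly from the heights $H=-\tfrac12\log F$.

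\textbf{What each buys.} The paper's version is shorter. Yours makes explicit two things that its ``easy to see'' characterization passes over.

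First, constant cross-root similarity only shows that the leaves of $C_l$, and separately those of $C_r$, are equidistant from $r$. These two distances may differ. Because the matrix determines only the product of the two weights on the root edge, one must re-split that edge. The equalizing point lies on the edge, with both halves of weight at most $1$, only if $F(r)\le F(h_L)$ and $F(r)\le F(h_R)$ for internal $h_L,h_R$. That is exactly your Step~2, and it is where Property~\ref{prop_2} is genuinely needed.

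Second, Steps~1--3 on their own already give an existence proof of $\mathcal T_E$ with all edge weights in $(0,1)$, independent of Lemma~\ref{lem:weight_assignment}.

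\textbf{Your two flagged points.} Both are harmless. The three- and four-point formulas need only positivity of the entries of $\mathbb{E}[S^g]$, not the quartet determinant. Once uniqueness off the root edge holds, any realization coincides there with your explicit one, so its weights automatically lie in $(0,1)$.

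(Minor slip: in Step~3 it is the pendant edges, not the internal ones, whose positive length comes from $F<1$.)
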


\begin{proof}[Proof of Lemma \ref{lem:S_and_ESg}]
    Recall that an ultrametric tree is a tree where all the leaves are at the same log-det distance from the root. 
    Equivalently, in an ultrametric tree, the similarity between every leaf and the root is equal.
    We start with a property of ultrametric trees.
    
    Let $\T$ be a rooted tree with similarity $S$, and $C_l,C_r$ be the two clans induced by the removal of the root.
    Then, easy to see that 
    \begin{equation}\label{eq:ultrametric}
    \T \text{ is ultrametric} \iff 
    \forall i\in C_l, j\in C_r: \quad S_{ij}=s_0   ,
    \end{equation}
    where $s_0$ is some constant.
    
    Next, let $\T$ be a species tree, $S^g$ a gene similarity matrix generated according to the MSC+GTR model, and $\T_E$ the tree corresponding to the matrix $\mathbb E[S^g]$. 
    Recall $\T_E$ and $\T$ share the same topology, therefore $C_l,C_r$ are clans in both trees.
    By Eq. \eqref{eq:ultrametric}, it suffices to show that $\mathbb E[S^g_{ij}]=s_0$ for all $i\in C_l, j\in C_r$.
    
    By Eq. \eqref{eq:ultrametric}, since $\T$ is ultrametric, $S_{ij}=s_1$ for all $i \in C_l$ and $j \in C_r$, for some constant $s_1$.
    By the gene similarity under the GTR, Eq. \eqref{eq:GTR_similarity}, for any $i\in C_l, j\in C_r$,
    \begin{equation*}
        \mathbb E[S^g_{ij}] 
        = \mathbb E[e^{4\text{tr}(Q)\tau^g_{ij}}]
    \end{equation*}
    Inserting $\tau^g_{ij}=\tau_{ij}+\Delta^g_{ij}$, and using the definition of the species similarity, Eq. \eqref{eq:GTR_similarity_species},
    \begin{equation*}
        \mathbb E[S^g_{ij}] 
        = \mathbb E[e^{4\text{tr}(Q)(\tau_{ij}+\Delta^g_{ij})}]
        =S_{ij} \mathbb E[e^{4\text{tr}(Q)\Delta^g_{ij}}]
        =s_1 \mathbb E[e^{4\text{tr}(Q)\Delta^g_{ij}}].
    \end{equation*}
    It remains to claim that $\mathbb E[e^{4\text{tr}(Q)\Delta^g_{ij}}]$ is constant. 
    Indeed, by the MSC model, $\{\Delta^g_{ij}\}$ are identically distributed for all $i\in C_l, j\in C_r$.
\end{proof}

\subsection{Proof of Lemma \ref{lem:concent_bound}}\label{subsec:conct_ineq_proof}
We first present four auxiliary lemmas, used in the proof of Lemma~\ref{lem:concent_bound}. 
The first lemma is the Matrix Bernstein Inequality, a fundamental tool in bounding the deviation of random matrices, see \cite{berenstein_ineq}.
The proofs of the other three lemmas appear in Appendix~\ref{subsec:proof_auxiliary}.

\begin{lemma}[Matrix Bernstein Inequality]\label{lem:beren_ineq}
    Let $\boldsymbol{A} = \sum_{l=1}^{K} A_{l}$ where $A_{l}$ are independent random symmetric $m \times m$ matrices with $\mathbb{E}[A_{l}] = 0$ and $\|A_{l}\| \leq L$. Define
    \begin{equation}\label{eq:variance_proxy}
    \nu(\boldsymbol{A}) = \left\| \sum_{l=1}^{K} \mathbb{E}[A_{l}^{2}] \right\|.    
    \end{equation}
    Then for all $t \geq 0$,
    \begin{equation}\label{eq:bern_ineq}
    \mathbb{P}\left( \| \boldsymbol{A} \| \geq t \right) \leq 2m \exp\left( -\frac{t^{2}/2}{\nu(\boldsymbol{A}) + Lt/3} \right).
    \end{equation}
\end{lemma}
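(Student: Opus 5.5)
The plan is to follow the standard Laplace-transform (matrix Chernoff) route of Tropp, since the statement is exactly the symmetric matrix Bernstein inequality. First, reduce the two-sided tail to one-sided tails. Since $\|\boldsymbol{A}\| = \max\{\lambda_{\max}(\boldsymbol{A}), \lambda_{\max}(-\boldsymbol{A})\}$, and $-A_l$ satisfies the same hypotheses as $A_l$, it suffices to show
\[
\mathbb{P}\bigl(\lambda_{\max}(\boldsymbol{A}) \geq t\bigr) \leq m \exp\left(-\frac{t^2/2}{\nu(\boldsymbol{A}) + Lt/3}\right).
\]
A union bound over the two one-sided tails then produces the factor $2m$. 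For any $\theta>0$, Markov's inequality together with $e^{\theta\lambda_{\max}(\boldsymbol{A})} = \lambda_{\max}(e^{\theta \boldsymbol{A}}) \leq \operatorname{tr} e^{\theta\boldsymbol{A}}$ gives
\[
\mathbb{P}\bigl(\lambda_{\max}(\boldsymbol{A})\geq t\bigr) \leq e^{-\theta t}\,\mathbb{E}\operatorname{tr} e^{\theta \boldsymbol{A}}.
\]

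The key step, and the main obstacle, is controlling $\mathbb{E}\operatorname{tr}\exp\bigl(\sum_l \theta A_l\bigr)$. Because the summands do not commute, the scalar factorization of the moment generating function is unavailable. I would invoke Lieb's concavity theorem: for fixed symmetric $H$, the map $X \mapsto \operatorname{tr}\exp(H + \log X)$ is concave on positive definite matrices. Conditioning on all but one summand and applying Jensen's inequality, one at a time, yields the subadditivity of matrix cumulant generating functions:
\[
\mathbb{E}\operatorname{tr}\exp\Bigl(\sum_l \theta A_l\Bigr) \leq \operatorname{tr}\exp\Bigl(\sum_l \log \mathbb{E} e^{\theta A_l}\Bigr).
\]
Since this step carries the whole noncommutative difficulty, I would cite Lieb's theorem rather than reprove it.

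Next comes the per-summand moment generating function bound. Let $g(\theta) = (e^{\theta}-\theta-1)/\theta^2$; this function is increasing. Since the eigenvalues of $A_l$ lie in $[-L,L]$, the transfer rule gives $e^{\theta A_l} \preceq I + \theta A_l + g(\theta L)\,\theta^2 A_l^2$. Taking expectations and using $\mathbb{E}A_l=0$ gives $\mathbb{E}e^{\theta A_l} \preceq I + g(\theta L)\theta^2\mathbb{E}A_l^2$. The operator monotonicity of the logarithm and $\log(1+x)\le x$ then give
\[
\log \mathbb{E}e^{\theta A_l} \preceq g(\theta L)\theta^2\,\mathbb{E}A_l^2.
\]
Summing over $l$ and using the monotonicity of $\operatorname{tr}\exp$, we get $\operatorname{tr}\exp(\cdot) \le m\exp\bigl(g(\theta L)\theta^2\nu(\boldsymbol{A})\bigr)$.

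Finally, restrict to $0<\theta<3/L$. On this range the bound $g(\theta L) \le \tfrac{1}{2}\big/\bigl(1-\theta L/3\bigr)$ follows from comparing Taylor coefficients, using $k! \ge 2\cdot 3^{k-2}$. The tail is then at most
\[
m\exp\left(-\theta t + \frac{\theta^2\nu(\boldsymbol{A})/2}{1-\theta L/3}\right).
\]
Choosing $\theta = t/(\nu(\boldsymbol{A}) + Lt/3)$, which indeed lies in $(0,3/L)$, and simplifying gives the exponent $-\tfrac{t^2/2}{\nu(\boldsymbol{A})+Lt/3}$. This completes the one-sided bound, and hence Eq. \eqref{eq:bern_ineq}.
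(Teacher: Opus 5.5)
Your proof is correct. The paper gives no proof of this lemma; it cites it as the standard matrix Bernstein inequality. Your argument is exactly the standard proof behind that citation: the matrix Laplace transform bound, Lieb's concavity giving subadditivity of matrix cumulant generating functions, the per-summand bound $\log \mathbb{E} e^{\theta A_l} \preceq g(\theta L)\theta^2 \mathbb{E}A_l^2$, and the choice $\theta = t/(\nu(\boldsymbol{A}) + Lt/3)$. The only loose end is the degenerate case $\nu(\boldsymbol{A})=0$ (with $t>0$ and $L>0$), where your $\theta$ equals $3/L$ rather than lying strictly inside $(0,3/L)$; there $\mathbb{E}A_l^2=0$ forces $A_l=0$ almost surely, so the bound holds trivially.
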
 

The second lemma describes a property of the entries in the gene similarity matrix $S^g$. 
\begin{lemma}\label{lem:entry_noise}
    Assuming $\T^g$ is a random gene tree generated according to the MSC+GTR model, and let $S^g$ be its similarity. 
    Then $\forall i,j\quad 0\leq S^g_{ij}\leq S_{ij}$.
\end{lemma}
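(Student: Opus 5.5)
The plan is to use the closed form of the gene similarity under the GTR model, Eq. \eqref{eq:GTR_similarity}, together with the defining property of the MSC, and then compare the result directly with the species similarity, Eq. \eqref{eq:GTR_similarity_species}.

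\textbf{Nonnegativity.} For any pair of terminal nodes $i,j$ we have $S^g_{ij} = e^{4\text{tr}(Q)\tau^g_{ij}}$. This is the exponential of a real number, so $S^g_{ij} > 0$, and in particular $S^g_{ij}\geq 0$. (For $i=j$ both matrices equal $1$, since $\tau^g_{ii}=\tau_{ii}=0$, so the diagonal case is trivial.)

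\textbf{Upper bound.} For the upper bound, write $\tau^g_{ij} = \tau_{ij} + \Delta^g_{ij}$. Condition 1 of the coalescent process in Section \ref{sec:data_model} says that the gene coalescence time exceeds the species divergence time, i.e. $\Delta^g_{ij}\geq 0$ almost surely; this is also implicit in $\Delta^g_{ij}$ being exponentially distributed. Factoring gives
\[
S^g_{ij} = e^{4\text{tr}(Q)\tau_{ij}}\, e^{4\text{tr}(Q)\Delta^g_{ij}} = S_{ij}\, e^{4\text{tr}(Q)\Delta^g_{ij}},
\]
which is the same factorization used in the proof of Lemma \ref{lem:S_and_ESg}.

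\textbf{Key step.} The only point needing care is the sign of $\text{tr}(Q)$. Under the GTR model, each row of $Q$ sums to zero with nonnegative off-diagonal entries, so the diagonal entries are nonpositive and $\text{tr}(Q)\le 0$. In the nondegenerate case $Q\neq 0$ this is strict, as noted after Eq. \eqref{eq:det_GTR_phh'}. Hence
\[
4\text{tr}(Q)\Delta^g_{ij}\le 0, \qquad e^{4\text{tr}(Q)\Delta^g_{ij}}\le 1,
\]
and multiplying by $S_{ij}\geq 0$ yields $S^g_{ij}\le S_{ij}$.

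Nothing here is genuinely hard; the statement holds pointwise for every realization of the gene tree, not just in expectation.
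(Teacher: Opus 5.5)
Your proof is correct and follows the same route as the paper. Both write $\tau^g_{ij}=\tau_{ij}+\Delta^g_{ij}$, factor $S^g_{ij}=S_{ij}\exp(4\text{tr}(Q)\Delta^g_{ij})$, and use $\Delta^g_{ij}\ge 0$ with $\text{tr}(Q)<0$ to place the second factor in $[0,1]$; your remarks on strict positivity, the diagonal case, and $\text{tr}(Q)\le 0$ when $Q=0$ are harmless refinements.
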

 
The third lemma describes a relation between the spectral norms of matrices with non-negative entries. 
\begin{lemma}\label{lem:nonnegetive_bnd}
    Let \( A,B\in \mathbb{R}^{m\times m} \) be matrices whose entries are all non-negative. Then, 
    \begin{align*}
        \|A\| &\leq\|{A+B}\|.
    \end{align*}
\end{lemma}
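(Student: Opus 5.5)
The plan is to use the variational characterization of the spectral norm, $\|M\| = \max_{\|x\|=\|y\|=1} x^T M y$, and to show that for a matrix with non-negative entries this maximum can be taken over entrywise non-negative unit vectors. Once the test vectors are non-negative, adding the non-negative matrix $B$ cannot decrease the bilinear form, and the inequality follows.

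The first step is to fix unit vectors $x,y\in\R^m$ with $x^T A y = \|A\|$; these exist because the unit sphere is compact and the bilinear form is continuous. Next I would write $|x|$ and $|y|$ for the entrywise absolute values, which are again unit vectors. Since every entry $A_{ij}\ge 0$, the triangle inequality gives
\[
\|A\| = x^T A y = \sum_{i,j} x_i A_{ij} y_j \le \sum_{i,j} |x_i| A_{ij} |y_j| = |x|^T A |y|.
\]
The second step uses $B_{ij}\ge 0$ together with the non-negativity of $|x_i||y_j|$:
\[
|x|^T A |y| \le |x|^T A |y| + |x|^T B |y| = |x|^T (A+B) |y|.
\]
Finally, the Cauchy--Schwarz inequality and the definition of the operator norm give $|x|^T (A+B)|y| \le \|A+B\|$, since $|x|$ and $|y|$ have unit norm. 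Chaining the three inequalities yields $\|A\|\le\|A+B\|$.

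No step is a serious obstacle. The only point needing care is the reduction to non-negative test vectors in the first step, and it relies precisely on the entrywise non-negativity of $A$. The argument never uses symmetry, so it applies to general square (indeed rectangular) non-negative matrices, as the statement requires.
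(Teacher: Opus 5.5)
Your proof is correct and follows essentially the same route as the paper: replace the test vector(s) by their entrywise absolute values, which cannot decrease the value because $A$ is entrywise non-negative, and then note that adding the non-negative matrix $B$ can only increase it. The only cosmetic difference is that you work with the bilinear form $x^T M y$ and two test vectors, whereas the paper uses a single maximizing unit vector $v$ for $\|Av\|$ and the fact that $\|Av\| \le \|Av + Bv\|$ when both vectors are non-negative.
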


Let \( |A| \) denote the matrix obtained by taking the entrywise absolute values of \( A \).
The fourth lemma is a basic result concerning the spectral norm of \( |A| \).
\begin{lemma}\label{lem:entrywise_absolute}
    Let \( A\in \mathbb{R}^{m\times m} \) be a matrix. Then, $\|A\|\leq \||A|\|$.
\end{lemma}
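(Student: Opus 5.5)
The plan is to use the variational characterization of the spectral norm as a bilinear maximum, $\|A\| = \max_{\|x\|=\|y\|=1} y^T A x$, and to compare the bilinear form of $A$ with that of $|A|$ by passing to entrywise absolute values of the test vectors. No earlier result from the paper is needed; the argument rests only on the triangle inequality and on the fact that taking entrywise absolute values of a vector preserves its Euclidean norm.

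\textbf{Step 1: pick maximizers.} Choose unit vectors $x,y\in\R^m$ attaining the maximum, so that $\|A\| = y^T A x = \sum_{i,j} y_i A_{ij} x_j$. These can be taken to be the top right and left singular vectors of $A$, and with that choice the value is nonnegative.

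\textbf{Step 2: triangle inequality.} Bound the sum termwise:
\[
\|A\| \le \sum_{i,j} |y_i|\,|A_{ij}|\,|x_j| = |y|^T |A|\, |x|,
\]
where $|x|$ and $|y|$ denote the vectors of entrywise absolute values. Since $\||x|\| = \|x\| = 1$ and $\||y|\| = \|y\| = 1$, the right-hand side is one admissible value of the bilinear form of $|A|$ on unit vectors. It is therefore at most $\max_{\|u\|=\|v\|=1} v^T |A| u = \||A|\|$, which gives the claim.

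There is essentially no obstacle here. The only point needing care is to use the bilinear form $y^T A x$ with possibly different $x$ and $y$, rather than the quadratic form $x^T A x$, so that the argument applies to non-symmetric $A$ as the statement allows. For the symmetric matrices used later in the proof of Lemma~\ref{lem:concent_bound}, the quadratic-form version with $\max_{\|x\|=1}|x^T A x|$ works equally well.
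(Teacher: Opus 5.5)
Your proof is correct and is essentially the paper's argument: both replace the test vector(s) by their entrywise absolute values, which keeps them unit-norm, and then apply the triangle inequality entrywise. The only difference is that the paper works with $\|A\|=\|Av\|$ and the entrywise bound $|Av|\le |A|\,|v|$, whereas your bilinear form $y^TAx$ sidesteps the implicit step that entrywise domination of nonnegative vectors implies domination of their Euclidean norms.
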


With those lemmas established, we present the proof. In what follows, we use the subscript $A_{ij}$ to denote the $(i,j)$-th entry of a matrix $A$.
\begin{proof}[Proof of Lemma~\ref{lem:concent_bound}]
    Multiplying by $K$, we rewrite the probability of interest  \eqref{eq:concentration_bar_S} as follows
    \begin{equation}\label{eq:prob_intrest}
    \mathbb{P}\left( \left\| \bar{S} - \mathbb{E}[S^g] \right\| > t \right) = \mathbb{P}\left( \left\| \sum_{g=1}^{K} \left( S^g - \mathbb{E}[S^g] \right) \right\| > Kt \right). 
    \end{equation}
    Let \( \boldsymbol{S} = \sum_{g=1}^{K} \left( S^g - \mathbb{E}[S^g] \right) \). Since the matrices \( \{ S^g - \mathbb{E}[S^g] \}_{g=1}^K \) are independent and have zero mean, we can apply the Matrix Bernstein Inequality, 
    Lemma \ref{lem:beren_ineq} with \( \boldsymbol{A}=\boldsymbol{S} \)
    and $t$ replaced by $Kt$. 
    Eq. \eqref{eq:bern_ineq} of this lemma depends on two quantities
    $L$ and $\nu(\boldsymbol{S})$.
    To complete the proof, it remains to provide upper bounds for these two quantities.
    
    \paragraph{Upper bound for $L$.}
    Recall that  $L$ is an upper bound on \( \|S^g - \mathbb{E}[S^g]\| \). 
    By Lemma~\ref{lem:entry_noise}, for all $i,j\in [m]$, 
    $S^g_{ij} \in [0,S_{ij}]$. Hence, upon subtracting the expectation, which is also in $[0, S_{ij}]$, 
    \begin{equation}\label{eq:entry_bound}
        |S^g_{ij} - \mathbb{E}[S^g_{ij}]|\leq S_{ij}.
    \end{equation}
    Next, we decompose $S$ as the sum of the following two matrices:
        $$
        S =  |S^g - \mathbb{E}[S^g]| + (S - |S^g - \mathbb{E}[S^g]|), 
        $$ 
    By definition, all entries of the first matrix are non-negative. All entries of the second matrix are non-negative by \eqref{eq:entry_bound}. Hence, applying Lemma~\ref{lem:nonnegetive_bnd} yields, 
    \begin{equation}\label{eq:pre_summand_bound}
        \||S^g - \mathbb{E}[S^g]|\|\leq \|S\|.    
    \end{equation}
    By Lemma~\ref{lem:entrywise_absolute} and Equation~\eqref{eq:pre_summand_bound}, we conclude that
    \begin{equation*}
        \|S^g - \mathbb{E}[S^g] \| \leq \| S\|.     
    \end{equation*}
    Hence, we may take $L=\|S\|$. 

    \paragraph{Upper bound for $\nu(\boldsymbol{S})$.} 
    Denote $V = \mathbb{E}[(S^g - \mathbb{E}[S^g])^2]$. By Equation \eqref{eq:variance_proxy}, $\nu(\boldsymbol{S})=K\|V\|$, hence we bound $\|V\|$. Simple computations give us:  
    \begin{equation*}
        |V_{ij}| = \left| \mathbb{E}\left[\left(S^g-\mathbb{E}[S^g]\right)^{2}\right]_{ij} \right|
        = \left|  \sum_{l=1}^{m} \mathbb{E}\left[\left(S^g-\mathbb{E} [S^g]\right)_{il}\left(S^g-\mathbb{E}[S^g]\right)_{jl}\right] \right|.
    \end{equation*}
    By the triangle inequality,
    and the fact that for any two random variables $X,Y$,  $Cov(X,Y) \leq \sigma(X) \sigma(Y)$
    where $\sigma(X)$ is the standard deviation of $X$,  
    \begin{equation*}
        |V_{ij}|\leq 
        \sum_{l=1}^{m} \left| \mathrm{Cov}(S^g_{il}, S^g_{jl}) \right|
        \leq \sum_{l=1}^{m} \sigma(S^g_{il}) \cdot \sigma(S^g_{jl}).
    \end{equation*}
    Recall that by Lemma~\ref{lem:entry_noise}, $S^g_{il} \in [0,S_{il}]$. Hence, by a well-known result for bounded random variables, $ \sigma(S^g_{il}) \leq \frac{1}{2} S_{il} $ (see, e.g.,~\cite[Chapter 2]{boucheron2013concentration}). 
    Therefore,     
    \begin{equation*}\label{eq:V_S2}
        |V_{ij}| \leq \frac{1}{4}\sum_{l=1}^{m} S_{il} \cdot S_{jl}
        = \frac14 (S^2)_{ij}.
    \end{equation*}
    We express $\frac14 S^2$ as the sum of the following two matrices, 
    \[
    \frac{1}{4}S^{2} = |V| + \left(\frac{1}{4}S^{2} - |V|\right).
    \]    
    By similar arguments as above, all entries in both of these matrices are non-negative. Hence, by Lemmas~\ref{lem:nonnegetive_bnd} and~\ref{lem:entrywise_absolute}, it follows that:   
    \[
    \| V \| \leq \| |V| \| \leq \frac{1}{4} \| S^{2} \|.
    \]
    Since \( S \) is symmetric,  \( \| S^2 \| = \| S \|^2 \), together with $\nu(\boldsymbol{S})=K\|V\|$, we conclude that
    \begin{equation*}
        \nu(\boldsymbol{S}) \leq K\| S \|^{2}/4.
    \end{equation*}
     Inserting the above bounds on $L$ and $\nu(\boldsymbol{S})$ into 
 \eqref{eq:bern_ineq} yields
 the required inequality \eqref{eq:concentration_bar_S}. 
\end{proof}

\subsection{Proofs of auxiliary lemmas} \label{subsec:proof_auxiliary}
Recall the definition from Eq. \eqref{eq:similarity} of the similarity for any pair $i,j$ of terminal nodes. Analogously, we extend this definition for any pair of nodes $h,h'$ in the tree. To prove Lemma \ref{lem:spec_norm_similarity_bnd}, it is useful to introduce the following vector with the similarities of all terminal nodes to their root. 

\begin{definition}\label{def:u_vec}
    Let $r$ be the root of a tree $\mathcal T$ with $m$ terminal nodes. We denote by $u\in\mathbb{R}^m$ the vector of similarities from the root to each of its $m$ leaves. Specifically, 
    for each terminal node $i$, 
    \[
    u_i =S(r,i). 
    \]
\end{definition}
Recall that $\xi$ denotes the upper bound on the similarity between adjacent nodes. The following lemma bounds the $l_1$ norm of $u$.
\begin{lemma}\label{lem:u1_bound}
    Let $\mathcal T$ be a rooted binary tree with $m$ leaves, and let $r$ be its root. Then, 
    \begin{equation}\label{eq:u1_bound}
    \|u\|_1 \leq
    \begin{cases}
        1 & \text{if } \xi \leq 0.5, \\
        \frac{1}{\xi}m^{1 + \log_2 \xi} & \text{if } \xi > 0.5.
    \end{cases} 
    \end{equation}
\end{lemma}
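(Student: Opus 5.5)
The plan is to prove a slightly stronger statement by induction on the number of leaves $m$, using only the multiplicativity of the similarity along paths, Eq.~\eqref{eq:multiplicativity}. By that equation, $u_i=S(r,i)$ is the product of the edge similarities $\mathcal S(e)$ along the path from $r$ to leaf $i$, and each factor is at most $\xi$. Write $F(\mathcal T)=\|u\|_1=\sum_i S(r,i)$; all terms are nonnegative. Set $p=1+\log_2\xi$, so that $2^{p}=2\xi$. The claim I aim for is
\[
F(\mathcal T)\le \max\{1,\,m^{p}\}.
\]
This gives $F\le 1$ when $\xi\le 1/2$, since then $p\le 0$ and $m^p\le 1$. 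When $\xi>1/2$ we have $p\in(0,1)$, so it gives $F\le m^{p}\le \tfrac1\xi m^{p}$, which is the statement of Lemma~\ref{lem:u1_bound}.

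\textbf{Base case.} The base case is $m=1$, where the tree is a single leaf serving as its own root. Then $F=S(r,r)=1$, which is consistent with both bounds.

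\textbf{Inductive step.} Let $r$ have children $h_1,h_2$, rooting subtrees $\mathcal T_1,\mathcal T_2$ with $a$ and $b$ leaves, where $a+b=m$ and $a,b\ge1$. By multiplicativity, every leaf $i$ of $\mathcal T_k$ satisfies $S(r,i)=\mathcal S(r,h_k)\,S(h_k,i)$. Hence
\[
F(\mathcal T)\le \xi\bigl(F(\mathcal T_1)+F(\mathcal T_2)\bigr).
\]
\begin{itemize}
\item If $\xi\le 1/2$, the induction hypothesis gives $F\le \xi\cdot 2\le 1$.
\item If $\xi>1/2$, then $t\mapsto t^{p}$ is concave on $[0,\infty)$ because $0<p<1$. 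Therefore
\[
F\le \xi\,(a^{p}+b^{p})\le 2\xi\Bigl(\tfrac{m}{2}\Bigr)^{p}=2\xi\,2^{-p}m^{p}=m^{p}.
\]
\end{itemize}
This closes the induction.

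\textbf{Main obstacle.} The only point needing care is the structure at the root. If the root is allowed to have a single child edge, for instance when $r$ is a designated degree-2 vertex created by subdividing an edge, then the recursion instead reads $F(\mathcal T)\le \xi F(\mathcal T')$, and the bound still holds since $\xi<1$. A related concern is a root of degree $3$ when an unrooted tree is rooted at an internal node. Here I would split off one child and apply the two-child bound to the remaining two, paying at most one extra factor. The factor $\tfrac1\xi$ in the statement gives exactly the slack needed to absorb such a boundary edge. I expect checking these root conventions against how Lemma~\ref{lem:u1_bound} is used in Lemma~\ref{lem:spec_norm_similarity_bnd} to be the only delicate step; the concavity computation itself is routine.
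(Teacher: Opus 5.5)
Your proof is correct, and it takes a genuinely different route from the paper. The paper first bounds $\|u\|_1$ by $B=\sum_i \xi^{d_i}$ and then maximizes $B$ over tree topologies using an exchange move. The move detaches a cherry at depth $d_i$ and re-hangs it below a leaf at depth $d_k$, which changes $B$ by $(1-2\xi)(\xi^{d_i-1}-\xi^{d_k})$. From this the paper concludes that the maximizer is the caterpillar for $\xi\le 1/2$ and the nearly balanced tree for $\xi>1/2$, and it evaluates $B$ on these two trees. The factor $1/\xi$ in the statement is slack from the rounding $\lfloor\log_2 m\rfloor$ when $m$ is not a power of two.

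You instead induct over the root split and let the concavity of $t\mapsto t^{p}$ do the balancing. Your argument is shorter and self-contained. It avoids the exchange argument, whose key step (that every non-extremal topology admits a strictly improving move) the paper only sketches. It also gives the sharper bound $\|u\|_1\le m^{1+\log_2\xi}$ for $\xi>1/2$, which is attained with equality by the perfectly balanced tree with all edge similarities equal to $\xi$ when $m=2^n$. What the paper's approach gives in return is an explicit identification of the extremal topologies in each regime.

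Your closing discussion of root conventions is not needed. The lemma assumes a rooted binary tree. In the proof of Lemma~\ref{lem:spec_norm_similarity_bnd}, the root is the neighbor of the deleted leaf $i$, so it is left with two children. It has one child only if it was the root of a rooted $\mathcal T$, and your remark $F(\mathcal T)\le \xi F(\mathcal T')$ already covers that case.

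The degree-$3$ claim as you phrase it would also be false for $\xi\le 1/2$. Three leaves attached directly to the root with edge similarity $1/2$ give $\|u\|_1=3/2>1$, and the statement has no $1/\xi$ slack in that regime. That remark should simply be dropped.
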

The proof of this auxiliary lemma appears below. It follows that of Lemma 4.7 in \cite{snj}, which stated a lower bound on 
$\|u\|_2$. In what follows, we denote by $S$ the $m\times m$ similarity matrix of the tree, and by $S(h,h')$ the similarity between the nodes $h,h'$.

\begin{proof}[Proof of Lemma \ref{lem:spec_norm_similarity_bnd}] 
    We start with a known inequality for the spectral norm 
    \begin{equation}\label{eq:spec_by_l1_linfty}
        \|S\|\leq \sqrt{\|S\|_1\|S\|_\infty} ,    
    \end{equation}
    see e.g \cite[Ch.~5.6, Ex.~21]{horn2012matrix}, where $\|\cdot\|_1,\|\cdot\|_\infty$ are the operator norms induced by the $l_1$ and $l_\infty$ norms,  respectively. Next, note that in general $\|A^T\|_\infty = \|A\|_1$. 
    Since the similarity matrix $S$ is symmetric,    \begin{equation}\label{eq:l1_l_infty}
        \|S\|_1=\|S\|_\infty. 
    \end{equation}
    Recall that the $\ell_\infty$ operator norm of a matrix is given by the maximum $\ell_1$ norm of its rows, that is, $\|S\|_\infty = \max_i |S_{[i,:]}|_1$, where $S_{[i,:]}$ is the $i$-th row of $S$. 
    Combining this with Eqs.~\eqref{eq:spec_by_l1_linfty},~\eqref{eq:l1_l_infty},  
    \begin{equation} \label{eq:spect_norm_by_row}
        \|S\|\leq \max_i \|S_{[i,:]}\|_1
    \end{equation}
    
    We now bound $\|S_{[i,:]}\|_1$. Let $r$ be the node attached to the $i$-th leaf. We denote $\mathcal T_r:=\mathcal T \backslash \{i\}$ the tree obtained by removing the leaf $i$, and considering the node $r$ as its root. 
    We let $u\in \mathbb{R}^{m-1}$ be the vector of similarities of the remaining $m-1$ leafs to the node $r$, see Definition \ref{def:u_vec}. 
    Using the fact that the similarity function is multiplicative along paths and that $S(i, r) \leq \xi$ by the assumption in the lemma, it follows that for all $j\neq i$,
    \[
    S(i,j)=S(i,r)\cdot S(r,j)\leq \xi\cdot u_j.
    \]
    The row vector $S_{[i,:]}$ contains, in addition, the entry $S_{ii}=1$. Therefore,
    \begin{equation}\label{eq:row_by_uvec}
        \|S_{[i,:]}\|_1\leq 1+\xi\cdot\|u\|_1.
    \end{equation}
    Finally, combining 
    Lemma~\ref{lem:u1_bound} with Eqs.~\eqref{eq:row_by_uvec} and ~\eqref{eq:spect_norm_by_row} completes the proof.
\end{proof}

To complete the proof of Lemma~\ref{lem:spec_norm_similarity_bnd}, we proceed with the proof of Lemma~\ref{lem:u1_bound}.

\begin{proof}[Proof of Lemma~\ref{lem:u1_bound}.] 
    Given a rooted tree $\mathcal{T}$, let $u \in \mathbb{R}^m$ denote the vector of similarities from each 
    of its $m$ terminal nodes to the root $r$, see Definition~\ref{def:u_vec}. 
    For each leaf $i$, we define its 
    depth $d_i$, as the number of edges on the path from $i$ to the root $r$.

    By the multiplicative property of the similarity measure, $u_i$ equals the product of the edge similarities along the path from the root to the leaf $i$. 
    Given the assumption that for all pairs $i,j$ of adjacent nodes, $S(i,j)\leq \xi$, it follows that $u_i\leq \xi^{d_i}$. 
    Hence,     
    \begin{equation}\label{eq:u1_xi_di}
        \|u\|_1 \leq \sum_{i=1}^m \xi^{d_i}=:B
    \end{equation}
    To prove the lemma, we thus need to bound the quantity $B$ on the RHS of  \eqref{eq:u1_xi_di}. Note that $B$ depends on both $\xi$ and on the topology of the tree. 
    To this end, for a given value of $\xi$, we identify the topology with the maximal value of $B$, as we describe now. For any tree, we define a modification in its topology. We find a topology whose value $B$ can not be improved, and therefore it's a local maximum. We show that any other topology can be improved by the mentioned modification, thus the local maximum is unique, namely is global. 
    
    Specifically, given a candidate tree $\cal T$, we consider the next modification to the tree structure: We remove a pair of two terminal nodes $i, j$ with a common neighbor, at depth $d_i=d_j$, and reattach them to a terminal node $k$ at depth $d_k$. What we show next is that for $\xi\leq 0.5$ this change increases $B$ for moving nodes from "shallow" parent node to a new "deeper" parent node, which eventually leads to a caterpillar tree. For $\xi >0.5$, the opposite is true, leading to a balanced tree that maximizes $B$.
    
    Precisely, let $\tilde {\cal T}$ be the modified tree and let $\tilde{B}$ be its corresponding bound. The difference between $\tilde {B}$ to $B$ is equal to
    \[
         \tilde B-B=2\xi^{d_k+1}-2\xi^{d_i}+\xi^{d_i-1}-\xi^{d_k}. 
    \]

    The first two terms are due to the shift of $i,j$ from depth $d_i$ to depth $d_k + 1$. The last two terms are due to the non-terminal node attached to $i,j$ becoming terminal, while $k$ becomes non-terminal. We can rewrite the last equation as:
    \begin{equation}\label{eq:u_s_difference}
        \tilde {B}-B= \xi^{d_i-1}(1-2\xi) - \xi^{d_k}(1-2\xi) = (1-2\xi)(\xi^{d_i-1} - \xi^{d_k})
    \end{equation}

    One can see that Eq.~\eqref{eq:u_s_difference} changes its behavior at $\xi=0.5$. 
    For $\xi \leq 0.5$, the expression in \eqref{eq:u_s_difference} is positive if $d_i - 1 < d_k$. Hence, a new tree $\tilde{\mathcal T}$ where the original pair of adjacent leaves of depth $d_i$ is now at depth $d_k+1$, where $d_k+1 > d_i$, has a larger value $\tilde{B}$. Repeating this step increases the norm until such changes are no longer possible. The extreme case is the tree that contains exactly one terminal node at depth $d_l$ for $d_l = 1,\dots, m - 2$, and two terminal nodes at depth $m - 1$, that is, the caterpillar tree.

    For $\xi > 0.5$, the argument is analogous. In this case, the expression in \eqref{eq:u_s_difference} is positive if $d_i - 1 > d_k$. Thus, moving $x_1,x_2$ from depth $d_i$ to depth $d_k+1$ increases $B$. As in the previous case, we repeatedly apply this modification until we reach the extremal topology - in this case, the perfectly balanced tree, when $m=2^n$ for some $n\in\mathbb{N}$. If $m \neq 2^n$, then the extremal topology can be described as follows. Define the \textit{$l$-th level} of a binary rooted tree $\mathcal{T}$ to be the set of nodes at depth $l$. A level $l$ is said to be \textit{full} if it contains exactly $2^l$ nodes. With this terminology in place, the extremal topology is a binary tree in which all levels up to $\lfloor \log_2 m \rfloor$ are full, and the remaining leaves are placed at depth $\lfloor \log_2 m \rfloor + 1$.

    We now compute $B$ for each of the two extreme cases. For the caterpillar tree, 
    \[
    B = \sum_{l=1}^{m-1} \xi^{l} + \xi^{m-1} = \frac\xi{1-\xi} \left(1 - \xi^{m-1}\right) + \xi^{m-1}.
    \]
    For $\xi \leq 0.5$, since $\xi/(1-\xi)\leq 1$ then
    \[
    \frac\xi{1-\xi} \left(1 - \xi^{m-1}\right) + \xi^{m-1} \leq 1 - \xi^{m-1} + \xi^{m-1} = 1.
    \]
    
    For the balanced tree, let $N$ be the number of nodes at the non-full level, which is $\lfloor \log_2m\rfloor +1$, e.g, for $m=2^n$, $N$ equals zero. Hence
    \[
    B = N \xi^{\lfloor \log_2m\rfloor +1} + (m-N)\xi^{\lfloor \log_2m\rfloor} = 
    (N \xi+ (m-N))\xi^{\lfloor \log_2m\rfloor} = 
    (N \xi+ (m-N))\frac{1}{\xi}\xi^{\lfloor \log_2m\rfloor+1}
    \]
    Since, $\lfloor \log_2m\rfloor+1>\log m $ and $\xi <1$ we can bound from above as follows,
    \[
    B \leq \frac{m}{\xi} \xi^{\log_2 m} = \frac{1}{\xi} m^{1 + \log_{2} \xi},
    \]
    where the second equality follows from basic logarithmic identities, 
    \[
    \xi^{\log_2 m} = m^{\log_m (\xi^{\log_2 m})} = m^{\log_2 m \cdot \log_m \xi} 
    = m^{\log_2 m \cdot \frac{\log_2 \xi}{\log_2 m}} = m^{\log_{2} \xi}.
    \]
    This concludes the proof. 
\end{proof}

\begin{proof}[Proof of Lemma~\ref{lem:entry_noise}]
    Using the expression for $S^g_{ij}$ in Eq. \eqref{eq:GTR_similarity}, the definition of $\Delta^g_{ij}$, and $S_{ij}$,
    \[
    S^g_{ij} 
    = \exp\left(4\text{tr}(Q)\tau^g_{ij}\right) 
    = \exp\left(4\text{tr}(Q)(\tau_{ij}+\Delta^g_{ij})\right)
    = S_{ij}\exp(4\text{tr}(Q)\Delta^g_{ij}).
    \]
    Since $\Delta_{ij}^g$ is non negative random variable, and $\text{tr}(Q)<0$, it follows that $\exp(4\text{tr}(Q)\Delta^g_{ij})$ is supported on [0,1].
\end{proof}

Recall that for a vector $v$ or a matrix $A$, $|v|$ and $|A|$ denote the vector or matrix obtained by taking the absolute value entry-wise.
\begin{proof}[Proof of Lemma \ref{lem:nonnegetive_bnd}]
    Let \( v \in \mathbb{R}^m \) be a unit vector such that \( \|A\| = \|Av\| \). Since \( A \) has non-negative entries, we may assume that all entries of \( v \) are non-negative. Indeed, if some entries of \( v \) are negative, define a new vector \( u \in \mathbb{R}^m \) by \( u_i = |v_i| \). Clearly, \( \|u\| = \|v\| = 1 \), and since \( A \geq 0 \) entrywise, it follows that \( Au \geq |Av|\geq0 \) entrywise, and hence \( \|Au\| \geq \|Av\| \). If the inequality is strict, this contradicts the maximality of \( \|Av\| \); otherwise, we may replace \( v \) with \( u \).
    
    Next, observe that since both \( A \) and \( B \) are entrywise non-negative, the vectors \( Av \) and \( Bv \) are also non-negative. Hence:
    \[
    \|A\| = \|Av\| \leq \|Av + Bv\| = \|(A + B)v\| \leq \|A + B\|.
    \]
\end{proof}

\begin{proof}[Proof of Lemma \ref{lem:entrywise_absolute}]
    Let $A$ be an $m \times m$ matrix, and let $v \in \mathbb R^m, u\in\mathbb R^m$ be defined as in the proof of Lemma \ref{lem:nonnegetive_bnd} above. 
    Using the definition of the spectral norm, of the vector $v$, and the fact that $\|w\|=\||w|\|$ for any vector $w$,
    \begin{equation}\label{eq:A_|A|_1}
        \|A\|
        =\|Av\|
        =\||Av|\|
    \end{equation}
    Next, we turn to upper bound the $i$th entry of $|Av|$. 
    By the triangle inequality,
    \begin{equation*}
        \bigl|(Av)_i\bigr|
        = \left|\sum_{j=1}^m A_{ij} v_j\right| 
        \le \sum_{j=1}^m |A_{ij}|\,|v_j|
        =(|A|u)_i.
    \end{equation*}
    Hence,
    \begin{equation}\label{eq:A_|A|_2}
        \||Av|\|
        \leq \||A|u\|
        \leq \| |A| \|.
    \end{equation}
    Combining Eqs. \eqref{eq:A_|A|_1} and \eqref{eq:A_|A|_2}, completes the proof.
\end{proof}

\end{document}